\documentclass[a4paper,11pt]{article}
\usepackage[latin1]{inputenc}
\usepackage[T1]{fontenc}
\usepackage{amsmath, amsthm, amssymb}
\usepackage{array}
\usepackage[final]{graphicx}   
\graphicspath{{tikz/}}

\setlength{\parindent}{0pt}
%





\newcommand{\fq}{{\mathbb{F}_q}}

\newtheorem{theorem}{Theorem}

\newtheorem{lemma}{Lemma}
\newtheorem{proposition}{Proposition}
\newtheorem{corollary}{Corollary}
\theoremstyle{definition}

\newtheorem{defn}{Definition}
\theoremstyle{remark}
\newtheorem{remark}{Remark}
\newtheorem{example}{Example}

\newcommand*{\horzbar}{\rule[.5ex]{2.5ex}{0.5pt}}

\newcommand{\bigslant}[2]{{\raisebox{.1em}{$#1$}\left/\raisebox{-.1em}{$#2$}\right.}}
\usepackage[english]{babel}



\title{Self-Dual Convolutional Codes}
\author{Sebastian Heri,  Julia Lieb and Joachim Rosenthal\\
{\small Institute of Mathematics  \vspace{-1mm}}\\
{\small University of Z\"urich\vspace{-1mm}}\\
{\small Winterthurerstr 190, CH-8057 Z\"urich, Switzerland }
\vspace{3mm} 
}
\date{\today}

\begin{document}
\maketitle

\begin{abstract} 
  This paper investigates the concept of self-dual convolutional code.
  We derive the basic properties of this interesting class of codes
  and we show how some of the techniques to construct self-dual linear
  block codes generalize to self-dual convolutional codes.  As for
  self-dual linear block codes we are able to give a complete
  classification for some small parameters.
\end{abstract}

\section{Introduction}

An $(n,k)$ linear block code $\mathcal{C}$ is by definition a linear
subspace $\mathcal{C}\subset\mathbb{F}_q^n $, where $\mathbb{F}_q$ is
a finite field and $\dim_\fq\mathcal{C}=k.$ By considering the natural
bilinear form $<\, , \, > $ on the vector space $\mathbb{F}_q^n $ one
obtains the notion of the dual code
$$
\mathcal{C}^\perp = \left\lbrace \; x \in \mathbb{F}_q^n \; | \ <x,c>
  = 0 \;\; \forall c \in C \; \right\rbrace.
$$

Clearly $ \mathcal{C}^\perp$ is an $(n-k,k)$ linear block code. Note
also that the bilinear form $<\, , \, > $ is in general not positive
definite and when $n=2k$ it can happen that
$ \mathcal{C}^\perp= \mathcal{C}$. A code having the particular
property that $ \mathcal{C}^\perp= \mathcal{C}$ is called a self-dual
linear block code.

Self-dual block codes are a highly interesting class of linear block
codes and there have been large efforts in the classification of
self-dual codes.  In \cite{27,28} a complete classification of binary
self-dual block codes up to length 30 is provided. This is extended in
\cite{16} to a complete classification up to length 36. 
Moreover, in \cite{15}, \cite{16} and \cite{17}, different techniques how to construct new self-dual block codes from known self-dual block codes are provided. In this paper, we will generalize all of these techniques to obtain new self-dual convolutional codes from known self-dual convolutional codes.
 In general
there exist still many interesting and open questions related to
self-dual block codes and the interested reader is referred to the
survey articles \cite{bou,26}.

In this paper we introduce the concept of a self-dual convolutional
code.  For this note that one can view a convolutional code
$\mathcal{C}$ as a submodule $\mathcal{C}\subset \fq[x]^n $, where
$\fq[x]$ is the polynomial ring over the finite field $\fq$. Again
one has a natural bilinear form $<\, , \, > $ on $\fq[x]^n$ and this
induces the notion of dual convolutional code $\mathcal{C}^\perp $ and
it is therefore again interesting when a convolutional code is
self-dual. In \cite{joh}, examples of particular self-dual convolutional codes are provided, however, using another definition of self-duality than the one presented in our paper. In \cite{14}, an algorithm to find systematic self-dual convolutional codes is presented. However, theoretical approaches concerning properties and classifications of this kind of codes have still been lacking and are the objective of our paper.

The paper is structured as follows. In Section 2, we present the
basics on self-dual block codes and their construction from smaller
self-dual codes. In Section 3, we introduce convolutional codes. In
Section 4, we present criteria for convolutional codes to be self-dual
as well as some properties of self-dual convolutional codes. In
Section 5, we classify all self-dual convolutional codes with certain
code parameters. In Section 5.1, we classify all self-dual $(2,1)$
convolutional codes, in Section 5.2 all binary self-dual $(4,2)$
convolutional codes, in Section 5.3 all self-dual convolutional codes
with double diagonal generator matrices and in Section 5.4 all binary
self-dual convolutional codes with double triangular generator
matrices. In Section 6, we generalize the building-up construction and
the Harada-Munemasa construction from block to convolutional codes,
where the building-up construction is further generalized to arbitrary
finite fields. Moreover, we prove that every binary code obtained with
the generalized building-up construction can be also obtained with the
generalized Harada-Munemasa construction, but the opposite is not
true.  Furthermore, we show that not all binary self-dual
convolutional codes with free distance $d_{free}>2$ can be constructed
with the generalized building-up construction, as it was shown to be
true for block codes. We also show that all binary self-dual $(4,2)$
convolutional codes can be constructed with the generalized
Harada-Munemasa construction, but for general code parameters this
question remains open.

\section{Self-Dual Linear Block Codes}
In this section, we present some results on self-dual linear block
codes, that we will generalize later to obtain results for self-dual
convolutional codes.

\subsection{Preliminaries}

Let $\mathbb{F}_q$ denote the finite field with $q$ elements.

\begin{defn}
  An $(n,k)$ \textbf{(linear) block code} $\mathcal{C}$ over
  $\mathbb{F}_q$ is a $k$-dimensional subspace
  $\mathcal{C} \subset \mathbb{F}_q^n$, where $k \leq n$. A matrix
  $G\in\mathbb F_q^{k\times n}$ such that $\mathcal{C}=rowspan(G)$ is
  called a \textbf{generator matrix} of $\mathcal{C}$.
\end{defn}


\begin{defn}
  We say that two generator matrices $G_1$ and $G_2$ are
  \textbf{equivalent} if there exists an invertible matrix
  $A\in GL_k(\mathbb{F}_q)$ such that
$$G_1=AG_2.$$
In other words, two generator matrices are equivalent if they generate
the same code.
\end{defn}
\begin{defn}
  Let $\mathcal{C}$ be a $(n,k)$ linear block code. Then there exists
  $H\in \mathbb{F}_q^{(n-k) \times n}$ such that
$$\mathcal{C}= ker(H)=\{v\in\mathbb F_q^n\ |\ Hv^{\top}=0\}.$$
We call $H$ a \textbf{parity-check matrix} of $\mathcal{C}$.
\end{defn}
As a generator matrix also a parity-check is in general not unique and
elementary row operations on either of these matrices leaves the code
unchanged.
\begin{defn} For a code $\mathcal{C}\subset\mathbb F_q^n$,
  $\mathcal{C}^\perp = \left\lbrace \; x \in \mathbb{F}_q^n \; | \
    <x,c> = 0 \;\; \forall c \in C \; \right\rbrace $ is called the
  \textbf{dual} of a code $\mathcal{C} \subset \mathbb{F}_q^n$. If the
  condition of $xc^\top=0$ holds, $x$ and $c$ are said to be
  \textbf{orthogonal}.
\end{defn}
The dual of a linear block code is always a linear block code.
\begin{defn} We say that a code $\mathcal{C}$ is
  \textbf{self-orthogonal} if $\mathcal{C} \subset \mathcal{C}^\perp$
  and \textbf{self-dual} if $\mathcal{C}=\mathcal{C}^\perp$.
\end{defn}
Thus, self-orthogonality is fulfilled if all codewords are orthogonal
to each other, whereas self-duality demands the further restriction of
no other element in the vector space $\mathbb{F}_q^n$ being orthogonal
to all codewords.
\\
The following lemma is an immediate consequence of the definition of
the dual code.
\begin{lemma} \cite{23} Let $G$ be a generator matrix of
  $\mathcal{C}$. Then
$$\mathcal{C}^\perp=ker(G)$$
i.e., $G$ is a parity-check matrix of $\mathcal{C}^\perp$.
\end{lemma}
An important consequence of this lemma is that the generator matrix of
a self-dual code is also a parity-check matrix of the given code as
$\mathcal{C}=\mathcal{C}^\perp=ker(G)$.

\begin{corollary} \cite{23} Any self-dual $(n,k)$ code fulfills $n=2k$.
\end{corollary}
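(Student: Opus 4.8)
The plan is to derive the result purely from a dimension count, using the lemma established just above. First I would take a generator matrix $G \in \mathbb{F}_q^{k \times n}$ of the self-dual code $\mathcal{C}$. Since the $k$ rows of $G$ span the $k$-dimensional space $\mathcal{C}$, they are linearly independent, so $G$ has full row rank $\rk(G) = k$.

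Next I would invoke the preceding lemma, which identifies $\mathcal{C}^\perp = ker(G) = \{v \in \mathbb{F}_q^n \mid G v^\top = 0\}$. Viewing $v \mapsto G v^\top$ as a linear map $\mathbb{F}_q^n \to \mathbb{F}_q^k$, the rank--nullity theorem gives $\dim ker(G) = n - \rk(G) = n - k$. Hence $\dim \mathcal{C}^\perp = n - k$, recovering the standard fact that $\dim\mathcal{C} + \dim\mathcal{C}^\perp = n$.

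Finally, self-duality means $\mathcal{C} = \mathcal{C}^\perp$, so in particular $\dim \mathcal{C} = \dim \mathcal{C}^\perp$, i.e. $k = n - k$, which rearranges to $n = 2k$. There is essentially no obstacle here: the only points requiring care are that the generator matrix indeed has full row rank (guaranteed by $\dim \mathcal{C} = k$) and that rank--nullity is applied to the right null space of $G$ rather than to its row space. I expect the entire argument to be a few lines, with the dimension formula for $\mathcal{C}^\perp$ being the one substantive ingredient, itself an immediate consequence of the cited lemma.
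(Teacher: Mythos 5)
Your argument is correct and is exactly the route the paper intends: it combines the preceding lemma ($\mathcal{C}^\perp = ker(G)$, so $G$ is a parity-check matrix of $\mathcal{C}$ under self-duality) with rank--nullity to get $\dim\mathcal{C}^\perp = n-k$, and then $k = n-k$. The paper leaves this proof to the cited reference, but there is nothing to add or correct in your version.
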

Hence, from now on we may assume every self-dual code to be a
$(n,n/2)$ code or equivalently a $(2k,k)$ code.
\begin{lemma}\cite[Chapter 1.8]{mcwsl}
  Let $\mathcal{C}$ be a $(2k,k)$ block code over $\mathbb{F}_q$ with
  generator matrix $G$. Then the following statements are equivalent:
  \begin{itemize}
  \item[(i)] $\mathcal{C}$ is self-dual;
  \item[(ii)] $\mathcal{C}$ is self-orthogonal;
  \item[(iii)] $GG^\top=0$;
  \item[(iv)] $G$ is a parity-check matrix of $\mathcal{C}$.
  \end{itemize}
\end{lemma}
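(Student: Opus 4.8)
The plan is to prove the equivalence by establishing a cycle of implications, using the preceding Lemma ($\mathcal{C}^\perp = \ker(G)$) and Corollary ($n = 2k$) as the main tools. Since we are given a $(2k,k)$ code, the dimension count is already favourable: both $\mathcal{C}$ and $\mathcal{C}^\perp$ have dimension $k$, which is what makes the self-orthogonal condition strong enough to force self-duality.

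First I would show (i) $\Rightarrow$ (ii), which is immediate: if $\mathcal{C} = \mathcal{C}^\perp$ then certainly $\mathcal{C} \subset \mathcal{C}^\perp$. Next, for (ii) $\Rightarrow$ (iii), I would unpack the definition of self-orthogonality: $\mathcal{C} \subset \mathcal{C}^\perp$ means every row of $G$ is orthogonal to every codeword, in particular to every other row of $G$, which is exactly the matrix statement $GG^\top = 0$. Conversely (iii) $\Rightarrow$ (ii) follows because the rows of $G$ span $\mathcal{C}$, so pairwise orthogonality of the generators extends by bilinearity to orthogonality of all codewords. For (iii) $\Leftrightarrow$ (iv), I would invoke the earlier Lemma stating that $G$ is a parity-check matrix of $\mathcal{C}^\perp$, i.e. $\mathcal{C}^\perp = \ker(G)$; the condition $GG^\top = 0$ says precisely that every row of $G$ lies in $\ker(G)$, hence $\mathcal{C} \subset \ker(G) = $ the code for which $G$ is a parity check, which is the content of (iv).

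The one step that requires genuine care, and which I expect to be the main obstacle, is closing the loop back to (i): showing that self-orthogonality together with the hypothesis $\dim \mathcal{C} = k$ upgrades to full self-duality. Here I would argue by dimension. From the preceding Lemma and the rank-nullity theorem, $\dim \mathcal{C}^\perp = \dim \ker(G) = 2k - \operatorname{rk}(G) = 2k - k = k$, since $G$ has full row rank $k$. Thus $\mathcal{C} \subset \mathcal{C}^\perp$ with both spaces of equal dimension $k$, forcing $\mathcal{C} = \mathcal{C}^\perp$, which is (i). This dimension equality is the crucial non-formal ingredient: it is what distinguishes self-dual from merely self-orthogonal codes and is available only because we restricted to $n = 2k$.

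Organizing these into the cycle (i) $\Rightarrow$ (ii) $\Rightarrow$ (iii) $\Rightarrow$ (iv) $\Rightarrow$ (i), with the return arrow carried by the dimension argument, completes the proof; the intermediate equivalences (ii) $\Leftrightarrow$ (iii) $\Leftrightarrow$ (iv) are then all recovered from the single cycle. I would remark that the finite-field setting poses no obstruction to the bilinear-form reasoning, since orthogonality is purely algebraic and does not rely on positive-definiteness of $\langle\,,\,\rangle$.
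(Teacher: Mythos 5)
Your proposed proof is correct. Note, however, that the paper does not prove this lemma at all: it is stated with a citation to MacWilliams--Sloane, so there is no in-paper argument to compare against. Your route is the standard one, and it closely parallels the proof the paper \emph{does} give for the convolutional analogue (Theorem~\ref{sdcrit}): the implications (i)\,$\Rightarrow$\,(ii), (ii)\,$\Leftrightarrow$\,(iii) and (i)\,$\Rightarrow$\,(iv) are identical in spirit, while your key step --- upgrading self-orthogonality to self-duality via $\dim\ker(G)=2k-\operatorname{rk}(G)=k$ and rank--nullity --- is exactly the ingredient that is available for block codes but must be replaced by a left-primeness/unimodularity argument in the module setting, which is why the convolutional version needs the extra non-catastrophicity hypothesis. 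One cosmetic remark: as stated, your implication (iii)\,$\Rightarrow$\,(iv) only yields $\mathcal{C}\subset\ker(G)$; the equality $\mathcal{C}=\ker(G)$ required for (iv) again needs the dimension count, which you do supply later in the cycle, so the overall argument closes correctly.
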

\begin{defn}
  Let $c_1=(x_1,\hdots,x_n)$ and $c_2=(y_1,\hdots,y_n)$ be elements of
  a code. Then their \textbf{(Hamming) distance} is defined to be
$$d(c_1,c_2):=\left\lbrace i|x_i \neq y_i\right\rbrace $$
and the \textbf{(Hamming) weight} of $c_1$ is $wt(c_1):=d(c_1,0)$.
\end{defn}
\begin{defn}
  The \textbf{(minimum) distance} d of a linear block code
  $\mathcal{C}$ is given by
$$d=d(C)=min\left\lbrace wt(c) \ | \ c \in \mathcal{C} \setminus \{0\} \right\rbrace .$$
One uses the notation $[n,k,d]$ code for an $(n,k)$ block code with
distance $d$.
\end{defn}

%

\subsection{Construction Methods}
There are three popular construction methods for binary self-dual
codes, namely the building-up construction, the Harada-Munemasa
construction and the recursive construction. The latter will not be
presented, but interested readers may consult \cite{19}.
\begin{theorem}[\textbf{Building-up construction}]\label{bu} \cite{17}
  Let $\mathcal{C}$ be a binary self-dual $(2k,k)$ code and
  $G=(g_i)_{i\in \left\lbrace 1,...,k\right\rbrace }$ its generator
  matrix, where $g_i$ is the $i$-th row of $G$. Let
  $x \in \mathbb{F}_2^n$ be a binary vector with odd weight and define
  $y_i:=xg_i^\top$ for $1\leq i \leq k$. Then
$$\tilde{G}=
\begin{pmatrix}
  1 & 0 & x \\
  y_1 & y_1 & \\
  \vdots & \vdots & G \\
  y_k & y_k &
\end{pmatrix}
\in \mathbb{F}_2^{(k+1) \times (2k+2)}
$$
generates a binary self-dual $(2k+2,k+1)$ code $\tilde{\mathcal{C}}$.
\end{theorem}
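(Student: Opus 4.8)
The plan is to verify that $\tilde{G}$ generates a self-dual code by checking the two conditions that characterize self-duality for a $(2(k+1), k+1)$ code. By Corollary~1 and Lemma~3, it suffices to show that $\tilde{G}$ has full row rank $k+1$ and that $\tilde{G}\tilde{G}^\top = 0$, since over $\mathbb{F}_2$ the conditions (i)--(iv) of Lemma~3 are equivalent, and self-orthogonality together with the correct dimension $k+1 = (2k+2)/2$ forces self-duality.

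First I would establish that $\tilde{G}\tilde{G}^\top = 0$ by computing the inner products of the rows of $\tilde{G}$ blockwise. Writing $r_0 = (1,0,x)$ for the top row and $r_i = (y_i, y_i, g_i)$ for $1 \le i \le k$, there are three types of inner products to check. For $\langle r_0, r_0 \rangle$, we get $1 + 0 + xx^\top = 1 + \wt(x) \bmod 2$, which vanishes precisely because $x$ has \emph{odd} weight -- this is exactly where that hypothesis is used. For $\langle r_0, r_i \rangle$, we get $y_i \cdot 1 + y_i \cdot 0 + x g_i^\top = y_i + x g_i^\top$, and since $y_i := x g_i^\top$ by definition, this is $2 y_i = 0$ over $\mathbb{F}_2$. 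For $\langle r_i, r_j \rangle$ with $i,j \ge 1$, we get $y_i y_j + y_i y_j + g_i g_j^\top = g_i g_j^\top$, and this vanishes because the original code $\mathcal{C}$ is self-dual, so $GG^\top = 0$ by Lemma~3. Thus every entry of $\tilde{G}\tilde{G}^\top$ is zero.

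Next I would confirm the rank. The rows $r_1, \dots, r_k$ have their $G$-block equal to the rows of $G$, which are linearly independent since $G$ has rank $k$; hence $r_1, \dots, r_k$ are linearly independent. The row $r_0$ is independent of these because it is the unique row with a nonzero entry in the first coordinate (the leading $1$), which no other row has. So $\tilde{G}$ has rank $k+1$, giving a $(2k+2, k+1)$ code $\tilde{\mathcal{C}}$. Combining the rank with $\tilde{G}\tilde{G}^\top = 0$ and invoking Lemma~3 (equivalence of self-orthogonality and self-duality at the critical dimension), we conclude $\tilde{\mathcal{C}}$ is self-dual.

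I do not expect a serious obstacle here; the argument is essentially a blockwise bookkeeping of inner products. The one subtle point worth highlighting is the role of the odd-weight hypothesis: it is needed \emph{only} for the diagonal term $\langle r_0, r_0\rangle$, and it is special to the binary setting, where $\wt(x) \bmod 2 = xx^\top$. This is also the place where any attempt to generalize the construction to arbitrary $\mathbb{F}_q$ will require modification, since over a general field one would need $1 + xx^\top = 0$ rather than an odd-weight condition -- consistent with the paper's later remark that the building-up construction is generalized to arbitrary finite fields in Section~6.
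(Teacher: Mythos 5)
The paper itself offers no proof of this theorem: it is imported verbatim from \cite{17} with only a citation, so there is no in-paper argument to compare against. Your strategy is the standard one and is exactly what the paper's surrounding lemmas are set up to support: show $\tilde{G}\tilde{G}^\top=0$ blockwise and then invoke the equivalence of self-orthogonality and self-duality for a rank-$(k+1)$ code of length $2k+2$. The three inner-product computations are correct, and you correctly isolate where the odd-weight hypothesis enters, namely the diagonal term $\langle r_0,r_0\rangle = 1 + xx^\top = 1+\wt(x) \bmod 2$.

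One step fails as written, though: your justification that $r_0=(1,0,x)$ is independent of $r_1,\dots,r_k$. You claim $r_0$ is the unique row with a nonzero first coordinate, but the first coordinate of $r_i$ is $y_i=xg_i^\top$, and at least one $y_i$ is always equal to $1$: if all $y_i$ vanished then $x\in\mathcal{C}^\perp=\mathcal{C}$, which is impossible because $x$ has odd weight while every codeword of a binary self-dual code has even weight. So the stated reason is never valid. The conclusion is still true and easily repaired: every $\mathbb{F}_2$-linear combination of $r_1,\dots,r_k$ has equal first and second coordinates (both equal to $\sum_i c_iy_i$), whereas $r_0$ has first coordinate $1$ and second coordinate $0$; alternatively, if $r_0$ lay in that span, projecting onto the last $2k$ coordinates would put the odd-weight vector $x$ inside $\mathcal{C}$, again a contradiction. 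With that patch the argument is complete and correct.
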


%
\begin{theorem}\cite{17}\label{abu}
  Any binary self-dual code of length $n$ with distance $d>2$ can be
  obtained from some binary self-dual code of length $n-2$ with the
  construction in Theorem \ref{bu} (up to column permutations in the
  generator matrix).
\end{theorem}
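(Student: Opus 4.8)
The plan is to run the building-up construction of Theorem~\ref{bu} in reverse. Write $n=2k+2$; since $\tilde{\mathcal{C}}$ is self-dual it has dimension $k+1$. I would fix two coordinates of $\tilde{\mathcal{C}}$ and, after a column permutation bringing them to the front, try to exhibit a generator matrix of $\tilde{\mathcal{C}}$ that already has the exact shape of $\tilde G$ in Theorem~\ref{bu}: one distinguished row $(1,0,x)$ together with $k$ further rows $(y_i,y_i,g_i)$ whose first two entries coincide. If I can do this, then the code $\mathcal{C}=\mathrm{rowspan}(G)$ with rows $g_1,\dots,g_k\in\mathbb{F}_2^{2k}$ is the length-$(n-2)$ code I am after, provided it turns out to be self-dual and the relation $y_i=xg_i^\top$ holds.

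The crucial structural input is the projection $\pi\colon\tilde{\mathcal{C}}\to\mathbb{F}_2^2$ onto the first two coordinates, and the claim that $\pi$ is surjective; this is exactly where the hypothesis $d>2$ enters. If the image of $\pi$ missed the first coordinate, then every codeword would be orthogonal to $(1,0,\dots,0)$, so this weight-$1$ vector would lie in $\tilde{\mathcal{C}}^\perp=\tilde{\mathcal{C}}$; similarly for the second coordinate; and if the image were the diagonal $\{(0,0),(1,1)\}$, then $(1,1,0,\dots,0)$ would be orthogonal to all of $\tilde{\mathcal{C}}$ and hence a weight-$2$ codeword. All three possibilities are excluded by $d>2$, so $\pi$ is onto and $\ker\pi$ has dimension $k-1$. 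I would then pick $u,w\in\tilde{\mathcal{C}}$ with $\pi(u)=(1,0)$ and $\pi(w)=(1,1)$ together with a basis of $\ker\pi$; since $\pi(u),\pi(w)$ span $\mathbb{F}_2^2$, these $k+1$ vectors form a basis of $\tilde{\mathcal{C}}$. Writing $u=(1,0,x)$, $w=(1,1,g_1)$ and the kernel basis as $(0,0,g_i)$ for $i\ge 2$ assembles a generator matrix of precisely the form of Theorem~\ref{bu}, with $y_1=1$ and $y_i=0$ for $i\ge 2$.

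Next I would read off the self-duality conditions from $\tilde G\tilde G^\top=0$, which holds because $\tilde{\mathcal{C}}$ is self-dual (the characterization of self-dual $(2k,k)$ codes via $GG^\top=0$). Expanding the inner products row by row gives $uu^\top=1+\wt(x)$, forcing $\wt(x)$ to be odd; $u(y_i,y_i,g_i)^\top=y_i+xg_i^\top$, forcing $y_i=xg_i^\top$; and $(y_i,y_i,g_i)(y_j,y_j,g_j)^\top=g_ig_j^\top$, forcing $GG^\top=0$. Thus $x$ has odd weight, the defining relation $y_i=xg_i^\top$ of the construction holds, and $\mathcal{C}=\mathrm{rowspan}(G)$ is a self-orthogonal code of length $2k$.

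It remains to upgrade self-orthogonality to self-duality, i.e.\ to show $G$ has full rank $k$; by the same characterization this makes $\mathcal{C}$ a self-dual $(2k,k)$ code and exhibits $\tilde{\mathcal{C}}$ as its building-up, completing the proof. I expect this final step to be the main obstacle, and once more it is $d>2$ that does the work: a nontrivial relation $\sum_i a_ig_i=0$ would make $\sum_i a_i(y_i,y_i,g_i)=(c,c,0,\dots,0)$ with $c=\sum_i a_iy_i$ a codeword of $\tilde{\mathcal{C}}$; if $c=1$ this is a weight-$2$ codeword, contradicting $d>2$, while if $c=0$ it contradicts the linear independence of the chosen basis rows. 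Hence $G$ has rank $k$. The one point demanding care throughout is tracking the column permutation, so that the recovered generator matrix matches the literal format of Theorem~\ref{bu}.
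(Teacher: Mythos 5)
The paper does not prove this statement---it is quoted from \cite{17} without proof---so there is nothing internal to compare against. Your reconstruction is correct and is essentially the standard argument for inverting the building-up construction: surjectivity of the projection onto a chosen pair of coordinates (each of the three proper subspaces of $\mathbb{F}_2^2$ would force a weight-$1$ or weight-$2$ codeword into $\tilde{\mathcal{C}}^\perp=\tilde{\mathcal{C}}$, contradicting $d>2$), the resulting basis $(1,0,x),(y_i,y_i,g_i)$, the orthogonality relations recovering $\wt(x)$ odd, $y_i=xg_i^\top$ and $GG^\top=0$, and finally the rank argument for $G$, where $d>2$ is used once more to exclude the weight-$2$ codeword $(1,1,0,\dots,0)$. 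All steps check out; in particular the full-rank step, which you correctly identify as the only nontrivial remaining obstacle, is handled completely by the dichotomy $c=1$ versus $c=0$.
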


As second construction method we present the Harada-Munemasa
construction.
\begin{theorem}[\textbf{Harada-Munemasa construction}]\label{hm}
  \cite{16} Let $G$ be the generator matrix of a binary self-dual
  $[2k,k,d]$ code $\mathcal{C}$. Then the matrix
$$
G_1=
\begin{pmatrix}
  a_1 & a_1 & \\
  \vdots & \vdots & G \\
  a_k & a_k &
\end{pmatrix} \in \mathbb{F}_2^{k \times (2k+2)},
$$
where $a_i \in \mathbb{F}_2$, generates a binary self-orthogonal
$[2k+2,k,\geq d]$ code $\mathcal{C}_1$. Moreover, there exists
$x \in \mathcal{C}_1^\perp \setminus \mathcal{C}_1$ such that
$$
G_2=
\begin{pmatrix}
  \horzbar & x & \horzbar \\
  a_1 & a_1 & \\
  \vdots & \vdots & G \\
  a_k & a_k &
\end{pmatrix} \in \mathbb{F}_2^{(k+1) \times (2k+2)}
$$
generates a binary self-dual $(2k+2,k+1)$ code.
\end{theorem}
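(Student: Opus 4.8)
The plan is to verify the three assertions in turn: that $\mathcal{C}_1$ is self-orthogonal, that its minimum distance is at least $d$, and that $\mathcal{C}_1$ extends by one row to a self-dual code. First I would establish self-orthogonality by directly computing $G_1 G_1^\top$. Writing the $i$-th row of $G_1$ as $r_i=(a_i,a_i,g_i)$, where $g_i$ is the $i$-th row of $G$, the $(i,j)$ entry is
$$
r_i r_j^\top = a_i a_j + a_i a_j + g_i g_j^\top = 2a_i a_j + g_i g_j^\top .
$$
Over $\mathbb{F}_2$ the first term vanishes, and since $\mathcal{C}$ is self-dual the lemma characterising self-dual $(2k,k)$ codes gives $GG^\top=0$, so $g_i g_j^\top=0$ for all $i,j$. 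Hence $G_1 G_1^\top=0$, which is exactly self-orthogonality of $\mathcal{C}_1$. The rows of $G_1$ remain linearly independent because they already are in the last $2k$ coordinates (these being the rows of $G$), so $\mathcal{C}_1$ is a genuine $[2k+2,k]$ code.

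Second, for the distance I would note that every codeword of $\mathcal{C}_1$ has the shape $(a,a,c)$ with $c$ a codeword of $\mathcal{C}$; by linear independence of the rows of $G$ one has $c=0$ only for the zero codeword, so for a \emph{nonzero} codeword $c\neq 0$ and therefore $\wt\big((a,a,c)\big)\ge \wt(c)\ge d$. This shows the minimum distance of $\mathcal{C}_1$ is at least $d$.

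Third comes the extension. Since $\mathcal{C}_1$ is a self-orthogonal $[2k+2,k]$ code we have $\mathcal{C}_1\subseteq \mathcal{C}_1^\perp$ with $\dim \mathcal{C}_1^\perp = k+2$, so a single suitable row must be adjoined. I would exhibit $x=(1,1,0,\dots,0)$ explicitly: it lies in $\mathcal{C}_1^\perp$ because $x r_i^\top = a_i+a_i=0$; it is not in $\mathcal{C}_1$, since every codeword $(a,a,c)$ with $c=0$ forces $a=0$; and it has even weight, so $x x^\top=0$. Consequently $\mathcal{C}_2=\mathcal{C}_1+\langle x\rangle$ is self-orthogonal of dimension $k+1=(2k+2)/2$, and hence self-dual by the lemma for $(2k,k)$ codes. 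This $x$ is generated by $G_2$ as its top row, completing the construction.

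The existence part is therefore settled cleanly by the explicit vector, and I expect the real subtlety to lie not in existence but in describing \emph{all} admissible choices of $x$ (which is what makes the construction genuinely useful). Here I would observe that the bilinear form restricted to $\mathcal{C}_1^\perp$ has radical $(\mathcal{C}_1^\perp)^\perp=\mathcal{C}_1$, so it descends to a nondegenerate form on the $2$-dimensional quotient $V=\mathcal{C}_1^\perp/\mathcal{C}_1$, and the map $v\mapsto \wt(v)\bmod 2$ descends to a well-defined quadratic form $Q$ on $V$ precisely because self-orthogonal binary codes have even weight. The valid $x$ are exactly the nonzero isotropic vectors of $Q$; the explicit $x$ above shows $Q$ is of hyperbolic type, so there are in fact two distinct such cosets, yielding two genuinely different self-dual extensions. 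The hard part, conceptually, is this structural identification, since it is the freedom in choosing among these cosets (together with the freedom in the $a_i$) that one exploits to control the minimum distance of the resulting code.
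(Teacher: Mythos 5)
The paper itself gives no proof of this theorem --- it is quoted from \cite{16} as background --- so there is nothing to compare against; your argument must stand on its own, and for the statement as written it does. The computation $r_ir_j^\top=2a_ia_j+g_ig_j^\top=g_ig_j^\top=0$ gives self-orthogonality, the rows of $G_1$ inherit independence from those of $G$, every nonzero codeword $(a,a,c)$ has $c\neq 0$ and hence weight at least $d$, and the explicit vector $x=(1,1,0,\dots,0)$ lies in $\mathcal{C}_1^\perp\setminus\mathcal{C}_1$, has even weight, and enlarges $\mathcal{C}_1$ to a self-orthogonal code of dimension $k+1=(2k+2)/2$, which is therefore self-dual. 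This particular $x$ is exactly what the paper later names the \emph{trivial} completion in its convolutional generalization (Lemma \ref{10}), so your existence proof is the natural one.

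The one genuine error is in your closing structural paragraph, which goes beyond the statement but is worth correcting. Over $\mathbb{F}_2$ one has $(u+v)(u+v)^\top=uu^\top+vv^\top$, so $v\mapsto \wt(v)\bmod 2$ descends to a \emph{linear} functional on $V=\mathcal{C}_1^\perp/\mathcal{C}_1$, not to a quadratic form whose polarization is the induced nondegenerate bilinear form (that refinement only appears modulo $4$, in the doubly-even/Type~II theory). Hence the admissible cosets form the complement of the $\mathcal{C}_1$-class inside the kernel of a linear map on a $2$-dimensional space: there are either $1$ or $3$ of them, never $2$, according to whether that functional vanishes identically on $\mathcal{C}_1^\perp$, i.e.\ whether $(1,\dots,1)\in(\mathcal{C}_1^\perp)^\perp=\mathcal{C}_1$. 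If the all-ones vector is not in $\mathcal{C}_1$, the only admissible class is that of $(1,1,0,\dots,0)$ and the extension is essentially unique (the trivial one); if it is, all three nonzero classes work. This dichotomy is precisely what the paper establishes for the convolutional analogue in Theorem \ref{main}, so your "hyperbolic, two cosets" picture should be replaced by that criterion.
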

\begin{theorem}
  Any binary self-dual code of length $n$ with distance $d>2$ can be
  obtained by some binary self-dual code of length $n-2$ with the
  construction in Theorem \ref{hm} (up to column permutations in the
  generator matrix).
\end{theorem}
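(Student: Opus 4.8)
The plan is to reverse the Harada--Munemasa construction of Theorem~\ref{hm}: starting from a self-dual code $\tilde{\mathcal{C}}$ of length $n=2k+2$ with $d>2$, I will exhibit a pair of coordinates together with a codimension-one subcode whose puncturing on those coordinates is a self-dual code of length $n-2$, and then recover the special top row $x$.

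First I would locate the two coordinates that play the role of the doubled columns. For a pair $\{i,j\}$ consider the linear functional $\phi\colon\tilde{\mathcal{C}}\to\mathbb{F}_2$, $\phi(c)=c_i+c_j$. If $\phi$ were identically zero for every pair, then every codeword would be constant, forcing $\dim\tilde{\mathcal{C}}\le 1$, contradicting $\dim\tilde{\mathcal{C}}=k+1\ge 2$; hence some pair $\{i,j\}$ admits a codeword with $c_i\ne c_j$. After a column permutation I may assume $\{i,j\}=\{1,2\}$, and I set $\mathcal{C}_1:=\ker\phi=\{c\in\tilde{\mathcal{C}}\mid c_1=c_2\}$, a subcode of dimension $k$. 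Since every codeword of $\mathcal{C}_1$ has equal first two entries, any generator matrix of $\mathcal{C}_1$ has identical first two columns, i.e.\ rows of the form $(a_\ell,a_\ell,g_\ell)$ with $a_\ell\in\mathbb{F}_2$.

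Next I puncture the first two coordinates to obtain $G=(g_\ell)_{\ell=1}^{k}$, generating a code $\mathcal{C}$ of length $2k=n-2$, and claim $\mathcal{C}$ is self-dual. Self-orthogonality is automatic: for rows $r_\ell=(a_\ell,a_\ell,g_\ell)$ one has $r_\ell r_m^\top=0$ because $\tilde{\mathcal{C}}$ is self-orthogonal, while $r_\ell r_m^\top=2a_\ell a_m+g_\ell g_m^\top=g_\ell g_m^\top$ over $\mathbb{F}_2$, so $GG^\top=0$ and the characterization of self-dual $(2k,k)$ codes applies once the dimension is correct. The dimension count is the step where the hypothesis $d>2$ enters and is the crux of the argument: the kernel of the puncturing map on $\mathcal{C}_1$ consists of codewords supported only on $\{1,2\}$ with equal entries there, i.e.\ multiples of the weight-two vector $\mathbf{e}_1+\mathbf{e}_2$; since $d>2$ forbids such a codeword, the puncturing is injective and $\dim\mathcal{C}=\dim\mathcal{C}_1=k$. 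Thus $\mathcal{C}$ is a self-dual code of length $n-2$.

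Finally I recover the top row. Choosing any $x\in\tilde{\mathcal{C}}\setminus\mathcal{C}_1$ (which exists since $\mathcal{C}_1$ has codimension one), self-orthogonality of $\tilde{\mathcal{C}}$ gives $x\in\tilde{\mathcal{C}}=\tilde{\mathcal{C}}^\perp\subseteq\mathcal{C}_1^\perp$, while $x\notin\mathcal{C}_1$; hence $x\in\mathcal{C}_1^\perp\setminus\mathcal{C}_1$ as demanded in Theorem~\ref{hm}. The matrix with top row $x$ and remaining rows $(a_\ell,a_\ell,g_\ell)$ is then exactly the Harada--Munemasa matrix $G_2$ built from $\mathcal{C}$, and it generates $\langle x\rangle+\mathcal{C}_1=\tilde{\mathcal{C}}$. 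Undoing the column permutation shows that $\tilde{\mathcal{C}}$ arises from the self-dual code $\mathcal{C}$ of length $n-2$ via the construction of Theorem~\ref{hm}, up to column permutations. I expect the main obstacle to be the dimension argument, namely ensuring the punctured rows remain linearly independent, which is precisely what $d>2$ guarantees.
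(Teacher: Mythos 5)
Your argument is correct. Note that the paper itself states this theorem without proof, as background quoted from the block-code literature (it is the Harada--Munemasa analogue of Theorem~\ref{abu}), so there is no proof of record to compare against; your write-up supplies the standard reverse-engineering argument and it holds up. The key steps all check out: a pair of coordinates with a non-constant difference functional exists because $\dim\tilde{\mathcal{C}}=k+1\ge 2$ (and $d>2$ rules out $n=2$); the subcode $\mathcal{C}_1=\{c:c_1=c_2\}$ has codimension one and its rows have the required form $(a_\ell,a_\ell,g_\ell)$; puncturing preserves self-orthogonality since $2a_\ell a_m=0$ over $\mathbb{F}_2$; and the hypothesis $d>2$ enters exactly where you say it does, excluding the weight-two codeword $(1,1,0,\dots,0)$ so that puncturing is injective and the punctured code is a $(2k,k)$ self-orthogonal, hence self-dual, code. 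The recovery of the top row $x\in\mathcal{C}_1^\perp\setminus\mathcal{C}_1$ from $\tilde{\mathcal{C}}=\tilde{\mathcal{C}}^\perp\subseteq\mathcal{C}_1^\perp$ is also correct, and $\langle x\rangle+\mathcal{C}_1=\tilde{\mathcal{C}}$ by the codimension count. The one thing worth making explicit is that the punctured rows $g_\ell$ really are the rows of a generator matrix of the smaller code (linear independence follows from the injectivity of the puncturing map), which is what licenses calling the resulting matrix an instance of the construction in Theorem~\ref{hm}.
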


The difference between the two constructions is that the building-up
construction chooses the row first and then simply calculates the
entries of the new columns based on the chosen row, whereas the
Harada-Munemasa construction chooses the columns first and then
determines the to be added row.

\section{Convolutional Codes}
In this section, we first introduce some basics about convolutional
codes with a particular focus on non-catastrophic convolutional codes.
Then, we will connect non-catastrophic codes to self-dual codes and
present equivalent properties to self-duality, which will be used for
classifications and constructions in later chapters.
\subsection{Preliminaries}
\begin{defn}
  An \textbf{$(n,k)$ convolutional code} is a
  $\mathbb{F}_q[z]$-submodule of $\mathbb{F}_q[z]^n$ with rank $k$.
\end{defn}
$\mathbb{F}_q[z]$ is a PID and modules over a PID always admit a
basis. Therefore, one always finds
$G(z)\in \mathbb{F}_q[z]^{k \times n}$ such that
$\mathcal{C}=rowspan(G(z))$. This $G(z)$ is called a \textbf{generator
  matrix} of $\mathcal{C}$.
\begin{defn}
  Let $H(z)\in \mathbb{F}_q[z]^{(n-k) \times n}$ such that
$$\mathcal{C}=ker(H(z)).$$
Then we call $H(z)$ a \textbf{parity-check matrix} of $\mathcal{C}$.
\end{defn}
Other than for block codes, there are some convolutional codes that do
not admit a parity-check matrix. More details on the specific
characteristics that permit the existence of a parity-check matrix
will be presented in Chapter 3.2.
\begin{defn}
  We say that $U(z)\in\mathbb{F}_q[z]^{k \times k}$ is
  \textbf{unimodular} if there exists
  $V(z)\in\mathbb{F}_q[z]^{k \times k}$ such that
$$V(z)U(z)=I_k.$$
\end{defn}
\begin{defn}
  Two generator matrices $G(z)\in \mathbb{F}_q[z]^{k \times n}$ and
  $\tilde{G}(z) \in \mathbb{F}_q[z]^{k \times n}$ are
  \textbf{equivalent} if there exists a unimodular matrix
  $U(z) \in \mathbb{F}_q[z]^{k \times k}$ such
  that $$G(z)=U(z)\tilde{G}(z).$$
\end{defn}
Equivalent generator matrices generate the same code.

\begin{defn}
  We say that a row or column operation is \textbf{unimodular} if it
  is done via multiplication with a unimodular matrix.
\end{defn}

Furthermore, we introduce a notion that in a way tells us how ``far
away" a convolutional code is from a block code.
\begin{defn}
  The \textbf{degree $\delta$} of a convolutional code is defined to
  be the highest degree of any $k$-th minor of its generator matrix.
\end{defn}
As the generator matrices of two equivalent codes differ by
multiplication with a unimodular matrix, which has a constant
determinant, the degree $\delta$ is the same for all equivalent
generator matrices, making it a well-defined notion.
\begin{remark}
  The convolutional codes of degree $0$ are essentially the linear
  block codes.
\end{remark}
\begin{defn}
  The \textbf{weight} of a polynomial vector
$$c(z)=\sum_{i=0}^{deg(c(z))}c_iz^i \in \mathbb{F}_q[z]^n$$
is defined to be
$$wt(c(z)):=\sum_{i=0}^{deg(c(z))}wt(c_i),$$
where $wt(c_i)$ denotes the Hamming weight of $c_i$.
\end{defn}
\begin{defn} The \textbf{free distance} of a convolutional code
  $\mathcal{C}$ is defined as
$$d_{free}(\mathcal{C})=min\left\lbrace wt(c(z)) \ | \ c(z) \in \mathcal{C}\setminus \{0\} \right\rbrace.$$
\end{defn}
\subsection{Non-Catastrophic Codes}
\begin{defn} A polynomial matrix
  $G(z) \in \mathbb{F}_q[z]^{k \times n}$ is said to be
  \textbf{left-prime} if in all factorizations
$$G(z)=A(z)\tilde{G}(z), \textit{ with } A(z)\in \mathbb{F}_q[z]^{k \times k} \textit{ and } \tilde{G}(z)\in \mathbb{F}_q[z]^{k \times n},$$
the left factor $A(z)$ is unimodular.
\end{defn}
If one generator matrix is left-prime, then all equivalent generator
matrices are also left-prime as they differ by left-multiplication
with a unimodular matrix. We call codes that admit a left-prime
generator matrix \textbf{non-catastrophic} and \textbf{catastrophic}
if they do not.

We will see more practical methods to check whether codes are
non-catastrophic in Theorem \ref{catcrit}, but first we introduce some
necessary notions.
\begin{defn}\cite[Section 2.1]{gantmacher1977}
  Let $G(z) \in \mathbb{F}_q[z]^{k \times n}$ with $ k \leq n $. Then,
  there exists a unimodular matrix
  $U(z) \in \mathbb{F}_q[z]^{k \times k}$ such that
$$ G_{rH}(z)=U(z)G(z)= 
\begin{pmatrix}
  h_{11}(z) & h_{12}(z) & \cdots & h_{1k}(z) & h_{1{k+1}}(z) & \cdots & h_{1{2k}}(z) \\
  & h_{22}(z) & \cdots & h_{2k}(z) & h_{2{k+1}}(z) & \cdots & h_{2{2k}}(z) \\
  & & \ddots & \vdots & \vdots & & \vdots\\
  & & & h_{kk}(z) & h_{{k}{k+1}}(z) & \cdots & h_{k{2k}}(z)
\end{pmatrix},
$$
where $(h_{ii}(z))_{i=1}^k$ are monic polynomials such that
$deg(h_{ii}(z)) > deg(h_{ji}(z))$ for $i>j$ if $h_{ii}(z)$ is not
identically equal to zero. The matrix $G_{rH}(z)$ is unique and called
\textbf{row Hermite form} of $G(z)$.
\end{defn}

%
%

The uniqueness of the row Hermite form gives us the option to
characterize equivalent generator matrices by their respective row
Hermite form.
\begin{defn}
  Let $G(z) \in\mathbb{F}_q[z]^{k \times n}$ with $ k \leq n$. Then
  there exists a unimodular matrix
  $V(z) \in \mathbb{F}_q[z]^{n \times n}$ such that
$$ G_{cH}(z)=G(z)V(z)= 
\begin{pmatrix}
  h_{11}(z) & & & & 0 & \cdots & 0 \\
  h_{21}(z) & h_{22}(z) & & & 0 & \cdots & 0 \\
  \vdots & \vdots & \ddots & & \vdots & & \vdots \\
  h_{k1}(z) & h_{k2}(z) & \cdots & h_{kk}(z) & 0 & \cdots & 0
\end{pmatrix},
$$
where $(h_{ii}(z))_{i=1}^k$ are monic polynomials such that
$deg(h_{ii}(z)) > deg(h_{ij}(z))$ for $i>j$ if $h_{ii}(z)$ is not
identically equal to zero. $G_{cH}(z)$ is called the (unique)
\textbf{column Hermite Form} of $G(z)$.
\end{defn}

\begin{defn}
  Let $G(z) \in \mathbb{F}_q[z]^{k \times n}$ be full rank with
  $ k \leq n$. Then there exist two unimodular matrices
  $U(z) \in \mathbb{F}_q[z]^{k \times k}$ and
  $V(z) \in \mathbb{F}_q[z]^{n \times n}$ such that
$$ S(z)=U(z)G(z)V(z)= 
\begin{pmatrix}
  \gamma_1(z) & & & & 0 & \cdots & 0 \\
  & \gamma_{2}(z) & & & 0 & \cdots & 0 \\
  & & \ddots & & \vdots & & \vdots \\
  & & & \gamma_{k}(z) & 0 & \cdots & 0
\end{pmatrix},
$$
where $(\gamma_{i}(z))_{i=1}^k$ are monic polynomials such that
$\gamma_{i+1}(z) | \gamma_{i}(z)$ for all $i=1,...,k-1$. $S(z)$ is
called the (unique) \textbf{Smith form} of $G(z)$.
\end{defn}
The following result will be crucial in later sections to determine
whether a code is non-catastrophic.
\begin{theorem}\label{catcrit}
  \cite{1,2} Let $\mathcal{C}$ be an $(n,k)$convolutional code and
  $G(z) \in \mathbb{F}_q[z]^{k \times n}$ its generator matrix. The
  following are equivalent:
  \begin{itemize}
  \item[1.] $\mathcal{C}$ is non-catastrophic;
  \item[2.] $G(z)$ is left-prime;
  \item[3.] the Smith form of $G(z)$ is $[I_k \ 0]$;
  \item[4.] the column Hermite form of $G(z)$ is $[I_k \ 0]$;
  \item[5.] the ideal generated by all the $k$-th minors of $G(z)$ is
    $\mathbb{F}_q[z]$;
  \item[6.] there exists a parity-check matrix $H(z)$ of
    $\mathcal{C}$;
  \item[7.] $G(z)$ can be completed to an unimodular matrix,
    e.g. there exists $L(z) \in \mathbb{F}_q[z]^{(n-k) \times n}$ such
    that $[\begin{smallmatrix}
      G(z) \\
      L(z)
    \end{smallmatrix}]$ is unimodular.
  \end{itemize}
\end{theorem}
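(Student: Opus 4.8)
The plan is to prove the seven conditions equivalent by first disposing of the two ``definitional'' ones and then running a single cycle through the rest. Conditions (1) and (2) are equivalent almost immediately: $\mathcal{C}$ is non-catastrophic precisely when it admits \emph{some} left-prime generator matrix, and the remark preceding the row Hermite form records that left-primeness is preserved under left multiplication by a unimodular matrix; hence one left-prime generator matrix forces every equivalent generator matrix, in particular the given $G(z)$, to be left-prime. I would then cluster (3), (4), (5) together and close the loop via $(2)\Rightarrow(3)\Rightarrow(7)\Rightarrow(6)\Rightarrow(2)$.

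For the cluster I would use the theory of determinantal divisors over the PID $\fq[z]$. Let $d_k(z)$ denote the gcd of all $k\times k$ minors of $G(z)$. Both the Smith form and the column Hermite form differ from $G(z)$ by unimodular factors, so the ideal generated by the $k$-minors is unchanged; and in each of these normal forms the only nonzero $k$-minor is the product of the diagonal entries (every other selection of $k$ columns hits a zero column). Thus $d_k(z)$ equals, up to a unit, both $\prod_i\gamma_i(z)$ and $\prod_i h_{ii}(z)$. Since $\fq[z]$ is a PID, the ideal of $k$-minors is all of $\fq[z]$ exactly when $d_k(z)$ is a unit, which, as the $\gamma_i$ and $h_{ii}$ are monic, happens precisely when every $\gamma_i(z)=1$, resp. every $h_{ii}(z)=1$; in the Hermite case the degree condition $\deg h_{ij}<\deg h_{ii}=0$ then forces the off-diagonal entries to vanish, giving $[I_k\ 0]$. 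This yields $(3)\Leftrightarrow(5)\Leftrightarrow(4)$ at once. The implication $(2)\Rightarrow(3)$ uses the same normal form: factoring $G(z)=\bigl(U(z)^{-1}\mathrm{diag}(\gamma_i(z))\bigr)\bigl([I_k\ 0]\,V(z)^{-1}\bigr)$ exhibits a left factor of determinant a unit multiple of $\prod_i\gamma_i(z)$, and left-primeness forces this factor to be unimodular, so $\prod_i\gamma_i(z)$ is a unit and the Smith form is $[I_k\ 0]$.

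The two constructive steps are direct. For $(3)\Rightarrow(7)$, from $U(z)G(z)V(z)=[I_k\ 0]$ I would set $L(z):=[0\ I_{n-k}]V(z)^{-1}$; a block computation gives $\left[\begin{smallmatrix}G(z)\\ L(z)\end{smallmatrix}\right]=\left[\begin{smallmatrix}U(z)^{-1}&0\\0&I_{n-k}\end{smallmatrix}\right]V(z)^{-1}$, a product of unimodular matrices, hence unimodular. For $(7)\Rightarrow(6)$, I would invert the completed matrix, write the inverse in block form $[P(z)\ Q(z)]$ with $Q(z)\in\fq[z]^{n\times(n-k)}$, and take $H(z):=Q(z)^{\top}$. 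The relation $\left[\begin{smallmatrix}G\\ L\end{smallmatrix}\right][P\ Q]=I_n$ gives $G(z)Q(z)=0$, so $\mathcal{C}\subseteq\ker H(z)$, while $[P\ Q]\left[\begin{smallmatrix}G\\ L\end{smallmatrix}\right]=P(z)G(z)+Q(z)L(z)=I_n$ shows every row vector $v$ with $vQ(z)=0$ satisfies $v=(vP(z))G(z)\in\mathcal{C}$; hence $\mathcal{C}=\ker H(z)$.

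The step I expect to be the crux is $(6)\Rightarrow(2)$, which closes the cycle. Given $\mathcal{C}=\ker H(z)$, I would first record the \emph{saturation} property: if $p(z)v\in\mathcal{C}$ for nonzero $p(z)$ and $v\in\fq[z]^n$, then $p(z)\,H(z)v^{\top}=0$ forces $H(z)v^{\top}=0$ since $\fq[z]$ is a domain, so $v\in\mathcal{C}$. Now take any factorization $G(z)=A(z)\tilde{G}(z)$ with $A(z)\in\fq[z]^{k\times k}$; as $G(z)$ has rank $k$ so does $A(z)$, whence $\det A(z)\neq0$. Multiplying by the adjugate gives $\det A(z)\,\tilde{G}(z)=\mathrm{adj}(A(z))\,G(z)$, so every row of $\det A(z)\,\tilde G(z)$ lies in $\mathcal{C}$, and saturation yields $\mathrm{rowspan}(\tilde G(z))\subseteq\mathcal{C}$; the reverse inclusion is automatic since the rows of $G(z)=A(z)\tilde G(z)$ are $\fq[z]$-combinations of the rows of $\tilde G(z)$. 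Hence $\tilde G(z)$ and $G(z)$ are two full-rank generator matrices of the same rank-$k$ free module, and therefore differ by a unimodular matrix, $G(z)=U(z)\tilde G(z)$; cancelling the factor $\tilde G(z)$, which is right-invertible over $\fq(z)$, forces $A(z)=U(z)$, which is unimodular, so $G(z)$ is left-prime. The delicate points here are exactly the uniqueness of generator matrices up to unimodular equivalence and the cancellation $A(z)=U(z)$, both of which rest on $\tilde G(z)$ having full row rank over the fraction field $\fq(z)$.
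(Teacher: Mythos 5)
The paper does not actually prove this theorem: it is stated with a citation to Kailath and to York's dissertation, so there is no in-paper argument to compare yours against. Your proof is correct and self-contained, and the logical architecture ((1)$\Leftrightarrow$(2), the cluster (3)$\Leftrightarrow$(4)$\Leftrightarrow$(5) via determinantal divisors, and the cycle $(2)\Rightarrow(3)\Rightarrow(7)\Rightarrow(6)\Rightarrow(2)$) does cover all seven conditions. The two constructive steps are verified by exactly the right block computations, and you correctly identify $(6)\Rightarrow(2)$ as the crux: the saturation of $\ker H(z)$ under division by nonzero scalars, the adjugate trick giving $\det A(z)\,\tilde G(z)=\mathrm{adj}(A(z))G(z)$, and the conclusion that $\tilde G(z)$ generates the same module are all sound. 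The only point where a referee might ask for one more sentence is the claim that $G(z)$ and $\tilde G(z)$ ``differ by a unimodular matrix'': this uses that $k$ generators of a free rank-$k$ module over the PID $\fq[z]$ automatically form a basis (a surjection $\fq[z]^k\to\fq[z]^k$ is an isomorphism), after which the cancellation $A(z)=U(z)$ via right-invertibility of $\tilde G(z)$ over $\fq(z)$ is fine. Compared with simply importing the result from the systems-theory literature as the paper does, your argument has the advantage of making explicit which facts are purely PID-theoretic (the normal forms and determinantal ideals) and which depend on the kernel description of the code.
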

\begin{corollary}
  Let $G(z)$ be the generator matrix of a non-catastrophic $(n,k)$
  convolutional code. Then, for any unimodular
  $U(z) \in\mathbb F_q[z]^{k \times k}$ and
  $V(z) \in \mathbb F_q[z]^{n \times n}$,
$$U(z)G(z)V(z)$$
also generates a non-catastrophic code. Or in other words, a
non-catastrophic code stays non-catastrophic under unimodular
operations.
\end{corollary}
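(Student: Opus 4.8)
The plan is to reduce the whole statement to the Smith form characterization in Theorem~\ref{catcrit}, which is tailor-made to be invariant under exactly the two-sided unimodular operations appearing here. Before invoking it, I would first check that $U(z)G(z)V(z)$ really is the generator matrix of an $(n,k)$ code. Since $U(z)$ and $V(z)$ are unimodular, they are invertible over the field of fractions $\mathbb{F}_q(z)$, so left- and right-multiplication by them preserves rank. As $G(z)$ has full row rank $k$, the $k\times n$ polynomial matrix $U(z)G(z)V(z)$ again has rank $k$ and hence generates some $(n,k)$ convolutional code $\mathcal{C}'$.

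Next I would use the equivalence $(1)\Leftrightarrow(3)$ of Theorem~\ref{catcrit}. Since $\mathcal{C}$ is non-catastrophic, the Smith form of $G(z)$ is $[I_k\ 0]$, i.e. there are unimodular $\tilde{U}(z)\in\mathbb{F}_q[z]^{k\times k}$ and $\tilde{V}(z)\in\mathbb{F}_q[z]^{n\times n}$ with $\tilde{U}(z)G(z)\tilde{V}(z)=[I_k\ 0]$. Writing $G(z)=U(z)^{-1}\big(U(z)G(z)V(z)\big)V(z)^{-1}$, the key observation is then that
$$\big(\tilde{U}(z)U(z)^{-1}\big)\,\big(U(z)G(z)V(z)\big)\,\big(V(z)^{-1}\tilde{V}(z)\big)=\tilde{U}(z)G(z)\tilde{V}(z)=[I_k\ 0],$$
where the two outer factors are themselves unimodular, being products of unimodular matrices and their inverses. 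Thus $[I_k\ 0]$ is a reduction of $U(z)G(z)V(z)$ by unimodular matrices on both sides, and by the uniqueness of the Smith form it follows that the Smith form of $U(z)G(z)V(z)$ equals $[I_k\ 0]$.

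Applying the reverse implication $(3)\Rightarrow(1)$ of Theorem~\ref{catcrit} to $U(z)G(z)V(z)$ then gives that $\mathcal{C}'$ is non-catastrophic, which is the claim. An alternative route avoiding the Smith form would use the minor-ideal criterion $(5)$: by the Cauchy--Binet formula the $k$-th minors of $U(z)G(z)V(z)$ lie in the ideal generated by the $k$-th minors of $G(z)$ (the factor $\det U(z)$ being a unit), and symmetrically, applying the inverse transformation, the minors of $G(z)$ lie in the ideal generated by those of $U(z)G(z)V(z)$; hence the two minor ideals coincide, and one equals $\mathbb{F}_q[z]$ exactly when the other does.

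I do not expect a genuine obstacle here: the statement is essentially the assertion that the Smith form is a complete invariant for two-sided unimodular equivalence. The only point requiring a little care is the bookkeeping that compositions and inverses of unimodular matrices remain unimodular, so that the transformation witnessing the Smith form of $G(z)$ can be recomposed into one witnessing it for $U(z)G(z)V(z)$.
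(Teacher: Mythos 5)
Your argument is correct and is essentially the paper's own: the paper justifies this corollary in one line by noting that the Smith normal form is unchanged under multiplication by unimodular matrices, which is exactly the reduction via Theorem~\ref{catcrit}(1)$\Leftrightarrow$(3) that you carry out (with the bookkeeping made explicit). Your alternative route through the minor-ideal criterion is a fine additional observation but not needed.
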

This corollary is due to the fact that the Smith normal form does not
change when multiplying it with unimodular matrices. It further shows
that we may use unimodular operations on a generator matrix to make it
easier to determine either of the properties presented in Theorem
\ref{catcrit}.
\begin{defn}
  For two matrices $G_1(z),G_2(z) \in \mathbb F_q[z]^{k\times n}$, we
  write $$G_1(z) \sim G_2(z),$$ if there exist unimodular matrices
  $A(z)\in \mathbb F_q[z]^{k \times k}$ and
  $B(z) \in \mathbb F_q[z]^{n \times n}$ such that
$$G_1(z)=A(z)G_2(z)B(z).$$
\end{defn}

\section{Self-Dual Convolutional Codes}
In this section, we present the basic theory on self-dual
convolutional codes.

\begin{defn}Let $\mathcal{C}$ be an $(n,k)$ convolutional
  codes. Then,
  $$\mathcal{C}^\perp = \left\lbrace f(z) \in R^n \ | \ f(z)c(z)^\top
    = 0 \ \ \ \forall c(z) \in C \right\rbrace $$ is called the
  \textbf{dual} of $\mathcal{C}$.
\end{defn}
\begin{lemma}\label{dc} \cite{14}
  The dual of a convolutional code $\mathcal{C}$ is also a
  convolutional code.
\end{lemma}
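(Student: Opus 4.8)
The plan is to exhibit $\mathcal{C}^\perp$ as the kernel of an $\mathbb{F}_q[z]$-linear map and then invoke the structure theory of modules over a PID. First I would fix a generator matrix $G(z) \in \mathbb{F}_q[z]^{k \times n}$ of $\mathcal{C}$, so that $\mathcal{C} = \mathrm{rowspan}(G(z))$; every codeword is then of the form $c(z) = u(z)G(z)$ for some $u(z) \in \mathbb{F}_q[z]^k$.

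Next I would translate the orthogonality condition into a linear one. For $f(z) \in \mathbb{F}_q[z]^n$ and $c(z) = u(z) G(z)$ we have $f(z) c(z)^\top = f(z) G(z)^\top u(z)^\top$. Hence $f(z) c(z)^\top = 0$ for every $c(z) \in \mathcal{C}$ if and only if $f(z) G(z)^\top = 0$, because the left-hand side vanishes for all $u(z)$ exactly when the row vector $f(z)G(z)^\top \in \mathbb{F}_q[z]^k$ is zero (it suffices to let $u(z)$ range over the standard basis vectors). This identifies
$$\mathcal{C}^\perp = \{ f(z) \in \mathbb{F}_q[z]^n \mid f(z) G(z)^\top = 0 \} = \ker \phi,$$
where $\phi \colon \mathbb{F}_q[z]^n \to \mathbb{F}_q[z]^k$ is the $\mathbb{F}_q[z]$-module homomorphism $f(z) \mapsto G(z) f(z)^\top$.

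Finally I would conclude the module-theoretic part. As the kernel of a module homomorphism, $\mathcal{C}^\perp$ is an $\mathbb{F}_q[z]$-submodule of $\mathbb{F}_q[z]^n$; closure under addition and scalar multiplication can also be verified directly from the bilinearity of the form $\langle \cdot,\cdot\rangle$. Since $\mathbb{F}_q[z]$ is a PID and $\mathbb{F}_q[z]^n$ is free of rank $n$, every submodule of $\mathbb{F}_q[z]^n$ is itself free of finite rank at most $n$. Therefore $\mathcal{C}^\perp$ is a finitely generated free $\mathbb{F}_q[z]$-submodule of $\mathbb{F}_q[z]^n$, that is, a convolutional code, which is exactly the claim.

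There is no serious obstacle here: the only point requiring care is the reduction from ``orthogonal to every codeword'' to ``orthogonal to every row of $G(z)$'', which rests on the rowspan description of $\mathcal{C}$; once $\mathcal{C}^\perp$ is written as a kernel, the conclusion follows immediately from the standard fact that submodules of free modules over a PID are free. If one wished to pin down the rank, one could read it off from the Smith form of $G(z)$ and obtain rank $n-k$, but this refinement is not needed to establish that $\mathcal{C}^\perp$ is a convolutional code.
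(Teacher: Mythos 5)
Your proof is correct. Note that the paper itself offers no argument for this lemma; it simply cites \cite{14}, so there is nothing to compare against, but your reasoning is the standard one and is sound: writing $\mathcal{C}^\perp=\{f(z)\mid f(z)G(z)^\top=0\}=\ker\phi$ reduces the claim to the fact that a submodule of the free module $\mathbb{F}_q[z]^n$ over the PID $\mathbb{F}_q[z]$ is free of finite rank, hence a convolutional code in the sense of the paper's definition. Your closing remark that the rank equals $n-k$ (read off from the Smith form, using that $G(z)$ has full row rank $k$) is also correct and is in fact used implicitly later in the paper (e.g.\ in the proof of Lemma \ref{dd}, where $\dim((\mathcal{C}^\perp)^\perp)=n-\dim(\mathcal{C}^\perp)$ is invoked), so including it is a worthwhile strengthening rather than a digression.
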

\begin{defn}
  Two vectors $u(z),v(z) \in \mathbb F_q[z]^n$ are said to be
  \textbf{orthogonal} if $$u(z)v(z)^\top=0.$$
\end{defn}

\subsection{Characterization of self-dual convolutional codes over
  arbitrary finite fields}

The following lemma establishes that the dual code always has a
parity-check matrix.
\begin{lemma}\label{dp}
  Let
$$G(z)=
\begin{pmatrix}
  \horzbar & g_1(z) & \horzbar \\
  & \vdots &  \\
  \horzbar & g_k(z) & \horzbar
\end{pmatrix}$$ be the generator matrix of a convolutional code
$\mathcal{C}$, then
$$\mathcal{C}^\perp=ker(G(z))$$
or equivalently $G(z)$ is a parity-check matrix of
$\mathcal{C}^\perp$.
\end{lemma}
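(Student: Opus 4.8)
The plan is to show the set equality $\mathcal{C}^\perp = \ker(G(z))$ by mirroring the block-code argument (Lemma with $\mathcal{C}^\perp = \ker(G)$), but taking care that we are now working over the ring $\mathbb{F}_q[z]$ rather than a field. First I would unwind the definitions: a vector $f(z) \in \mathbb{F}_q[z]^n$ lies in $\mathcal{C}^\perp$ precisely when $f(z)c(z)^\top = 0$ for every $c(z) \in \mathcal{C}$. Since $\mathcal{C} = \mathrm{rowspan}(G(z))$, every codeword has the form $c(z) = a(z) G(z)$ for some $a(z) \in \mathbb{F}_q[z]^k$, so orthogonality to all of $\mathcal{C}$ is equivalent to orthogonality to the rows $g_1(z), \dots, g_k(z)$ of $G(z)$.

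The key step is then the chain of equivalences: $f(z) \in \mathcal{C}^\perp$ iff $f(z) g_i(z)^\top = 0$ for all $i = 1, \dots, k$, iff $G(z) f(z)^\top = 0$, iff $f(z) \in \ker(G(z))$. The reduction from ``orthogonal to every codeword'' to ``orthogonal to each of the $k$ generating rows'' is where one must be slightly careful over a ring. I would argue the nontrivial direction as follows: if $f(z) g_i(z)^\top = 0$ for each $i$, then for an arbitrary codeword $c(z) = \sum_{i=1}^k a_i(z) g_i(z)$ we have $f(z) c(z)^\top = \sum_{i=1}^k a_i(z)\, f(z) g_i(z)^\top = 0$ by $\mathbb{F}_q[z]$-bilinearity of the form, so $f(z) \in \mathcal{C}^\perp$. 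The converse is immediate since each $g_i(z)$ is itself a codeword.

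I expect the main (though mild) obstacle to be the bookkeeping that the bilinear form $\langle\,\cdot\,,\,\cdot\,\rangle$ on $\mathbb{F}_q[z]^n$ really is $\mathbb{F}_q[z]$-bilinear, so that scalars $a_i(z)$ from the polynomial ring can be pulled out of the form; this is what lets the finite generating set of rows control orthogonality against the whole (infinite) module. Once that is in hand, the statement that $\mathcal{C}^\perp = \ker(G(z))$ follows, and the reformulation ``$G(z)$ is a parity-check matrix of $\mathcal{C}^\perp$'' is just the definition of parity-check matrix applied to the code $\mathcal{C}^\perp$ with the matrix $G(z)$. I would close by noting that, unlike the generator/parity-check interplay for $\mathcal{C}$ itself, this lemma guarantees the existence of a parity-check matrix for $\mathcal{C}^\perp$ unconditionally, since $G(z)$ always serves in that role, which is exactly why the dual always admits a parity-check matrix even when $\mathcal{C}$ may be catastrophic.
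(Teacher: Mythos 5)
Your argument is correct and matches the paper's proof in substance: both directions reduce orthogonality against all of $\mathcal{C}$ to orthogonality against the rows of $G(z)$, using $\mathbb{F}_q[z]$-linearity (the paper writes this as $c(z)f(z)^\top = m(z)(G(z)f(z)^\top) = 0$, which is exactly your bilinearity step in matrix form). No gaps.
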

\begin{proof}
  To show $\mathcal{C}^\perp \subset ker(G(z))$, let
  $c(z) \in \mathcal{C}^\perp$. One has $g_i(z)c(z)^\top=0$ for
  $i=1,\hdots,k$ and therefore
  $G(z)c(z)^\top=0$, i.e., $c(z)\in ker(G(z))$.\\
  To show $ker(G(z)) \subset \mathcal{C}^\perp$, let
  $f(z)\in ker(G(z))$ and $c(z) \in \mathcal{C}$. Then,
  $c(z)=m(z)G(z)$ for some $m(z) \in \mathbb{F}_q[z]^k$. But now
$$c(z)f(z)^\top=(m(z)G(z))f(z)^\top=m(z)(G(z)f(z)^\top)=0$$
as $f(z)$ is in the kernel of $G(z)$. We conclude that $f(z)$ is
orthogonal to every codeword of $\mathcal{C}$ or equivalently
$f(z) \in \mathcal{C}^\perp$.
\end{proof}
Thus, any generator matrix of a convolutional code is a parity-check
matrix of the dual code. Consequently, the dual is always
non-catastrophic by Theorem \ref{catcrit}.
\begin{lemma}\label{dd}
  Let $\mathcal{C}$ be an $(n,k)$ convolutional code with generator
  matrix $G(z)$. Then,
$$\mathcal{C} \text{ is non-catastrophic}$$
if and only if
$$\mathcal{C}=(\mathcal{C}^\perp)^\perp.$$
\end{lemma}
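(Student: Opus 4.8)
The plan is to prove the two implications separately; the reverse direction is almost immediate, while the forward direction carries the real content.

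For the direction $\mathcal{C}=(\mathcal{C}^\perp)^\perp \Rightarrow \mathcal{C}$ non-catastrophic, I would simply note that $(\mathcal{C}^\perp)^\perp$ is by definition the dual of the convolutional code $\mathcal{C}^\perp$. By the remark following Lemma \ref{dp} — any generator matrix is a parity-check matrix of the dual code, so by property 6 of Theorem \ref{catcrit} every dual code is non-catastrophic — the code $(\mathcal{C}^\perp)^\perp$ is non-catastrophic. Hence if $\mathcal{C}$ coincides with it, $\mathcal{C}$ is non-catastrophic.

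For the direction $\mathcal{C}$ non-catastrophic $\Rightarrow \mathcal{C}=(\mathcal{C}^\perp)^\perp$, I would first invoke property 7 of Theorem \ref{catcrit} to complete the generator matrix $G(z)$ to a unimodular matrix $M(z)=[\begin{smallmatrix} G(z) \\ L(z)\end{smallmatrix}]$, and let $N(z)=[\,P(z)\mid Q(z)\,]$ be its inverse, partitioned so that $P(z)$ has $k$ columns and $Q(z)$ has $n-k$ columns. Reading off the block identities from $M(z)N(z)=I_n$ gives $G(z)P(z)=I_k$, $G(z)Q(z)=0$, $L(z)P(z)=0$, and $L(z)Q(z)=I_{n-k}$. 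I would then show that $H(z):=Q(z)^\top$ is a generator matrix of $\mathcal{C}^\perp$: the inclusion $\mathrm{rowspan}(H(z))\subseteq \mathcal{C}^\perp$ follows from $H(z)G(z)^\top=(G(z)Q(z))^\top=0$ and the $\mathbb{F}_q[z]$-linearity of orthogonality, while for the reverse inclusion I take $f(z)\in\mathcal{C}^\perp=\ker(G(z))$ (using Lemma \ref{dp}), compute $M(z)f(z)^\top=[\begin{smallmatrix} 0 \\ L(z)f(z)^\top\end{smallmatrix}]$, and multiply by $N(z)$ (using $N(z)M(z)=I_n$) to obtain $f(z)^\top=Q(z)\,(L(z)f(z)^\top)$, which exhibits $f(z)$ as an $\mathbb{F}_q[z]$-combination of the rows of $H(z)$.

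Finally, applying Lemma \ref{dp} to $\mathcal{C}^\perp$ with this generator matrix $H(z)$ gives $(\mathcal{C}^\perp)^\perp=\ker(H(z))=\{v(z)\mid v(z)Q(z)=0\}$, and it remains to identify this kernel with $\mathcal{C}=\mathrm{rowspan}(G(z))$. Any $v(z)=m(z)G(z)$ satisfies $v(z)Q(z)=m(z)G(z)Q(z)=0$, and conversely the identity $v(z)=v(z)N(z)M(z)=(v(z)P(z))G(z)+(v(z)Q(z))L(z)$ shows that $v(z)Q(z)=0$ forces $v(z)=(v(z)P(z))G(z)\in\mathrm{rowspan}(G(z))$. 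The main obstacle is precisely the claim that $Q(z)^\top$ generates all of $\mathcal{C}^\perp$ rather than a proper submodule, and this is exactly what the unimodular completion buys: it is the relation $N(z)M(z)=I_n$ — that $M(z)$ is invertible over $\mathbb{F}_q[z]$ rather than merely full rank — that lets every kernel element be rewritten in terms of the rows of $Q(z)^\top$ and of $G(z)$. Catastrophic codes fail here, since their generator matrix admits no such completion and the double dual strictly enlarges the code.
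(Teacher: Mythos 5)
Your proof is correct. The direction $\mathcal{C}=(\mathcal{C}^\perp)^\perp\Rightarrow\mathcal{C}$ non-catastrophic is handled exactly as in the paper: the double dual is the dual of $\mathcal{C}^\perp$, hence has a parity-check matrix by Lemma \ref{dp} and is non-catastrophic by Theorem \ref{catcrit}. For the forward direction, however, you take a genuinely different route. The paper first notes $\mathcal{C}\subseteq(\mathcal{C}^\perp)^\perp$ from the definition of the dual, then compares generator matrices: since $\dim((\mathcal{C}^\perp)^\perp)=n-\dim(\mathcal{C}^\perp)=k$, one can write $G(z)=U(z)\tilde{G}(z)$ for a generator matrix $\tilde{G}(z)$ of the double dual, and left-primeness of $G(z)$ forces $U(z)$ to be unimodular, so the two row spans coincide. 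You instead invoke item 7 of Theorem \ref{catcrit} to build the unimodular completion $M(z)=[\begin{smallmatrix}G(z)\\ L(z)\end{smallmatrix}]$ with inverse $N(z)=[\,P(z)\mid Q(z)\,]$, use the block identities to show that $Q(z)^\top$ is an explicit generator matrix of $\mathcal{C}^\perp$, and then compute $(\mathcal{C}^\perp)^\perp$ directly from the decomposition $v(z)=(v(z)P(z))G(z)+(v(z)Q(z))L(z)$. Your computation is sound (including the use of $N(z)M(z)=I_n$, which follows from $M(z)N(z)=I_n$ for square matrices over a commutative ring). What your approach buys is self-containedness: the paper's dimension count relies on the rank formula for the dual code imported from the literature, whereas you re-derive it as a byproduct; the cost is a longer block-matrix argument in place of the paper's one-line appeal to left-primeness of $G(z)$.
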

\begin{proof}
  $(\Longrightarrow)$ Let $c(z) \in \mathcal{C}$ and
  $c_1(z) \in \mathcal{C}^\perp$, then by definition
  $c_1(z)c(z)^\top=0$ and hence
$$\mathcal{C} \subset (\mathcal{C}^\perp)^\perp.$$
As $(\mathcal{C}^\perp)^\perp$ is again a convolutional code, it has a
generator matrix $\tilde{G}(z)$ with
$rowspan(G(z)) \subset rowspan(\tilde{G}(z))$. Moreover,
$dim((\mathcal{C}^\perp)^\perp)=n-dim(\mathcal{C}^\perp)=dim(\mathcal{C})=k$
and one obtains $U(z)\tilde{G}(z)=G(z)$ for some
$U(z)\in\mathbb F_q[z]^{k\times k}$.
%
%
But $G(z)$ is left-prime, because we assumed that
$\mathcal{C}$ is non-catastrophic, and hence the left-factor
$U(z)$ is unimodular.  We conclude that $G(z)$ and
$\tilde{G}(z)$ are equivalent and
$$\mathcal{C}=(\mathcal{C}^\perp)^\perp.$$
$(\Longleftarrow)$
Follows directly from the previous lemma as
$\mathcal{C}=(\mathcal{C}^\perp)^\perp$ is the dual of
$\mathcal{C}^\perp$.
\end{proof}

\begin{defn} We say that a convolutional code
  $\mathcal{C}$ is \textbf{self-orthogonal} if $\mathcal{C} \subset
  \mathcal{C}^\perp$ and \textbf{self-dual} if
  $\mathcal{C}=\mathcal{C}^\perp$.
\end{defn}
The following lemma follows immediately.
\begin{lemma}\label{noncat} Let $\mathcal{C}$ be a self-dual
  $(n,k)$ convolutional code with generator matrix
  $G(z)$. Then, $\mathcal{C}$ is non-catastrophic.
\end{lemma}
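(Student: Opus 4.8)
The plan is to reduce the statement entirely to the two facts already established above, namely Lemma~\ref{dp} together with criterion~(6) of Theorem~\ref{catcrit}. The key observation is that non-catastrophicity of $\mathcal{C}$ will follow the moment we recognize $\mathcal{C}$ as the dual of some convolutional code, since duals are automatically non-catastrophic.

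First I would invoke Lemma~\ref{dp}: for the given generator matrix $G(z)$ of $\mathcal{C}$ one has $\mathcal{C}^\perp=\ker(G(z))$, so that $G(z)$ is itself a parity-check matrix of $\mathcal{C}^\perp$. Since $\mathcal{C}^\perp$ thus admits a parity-check matrix, criterion~(6) of Theorem~\ref{catcrit} shows that $\mathcal{C}^\perp$ is non-catastrophic.

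Next I would bring in the self-duality hypothesis $\mathcal{C}=\mathcal{C}^\perp$. Substituting this equality into the conclusion of the previous step immediately transports non-catastrophicity from $\mathcal{C}^\perp$ to $\mathcal{C}$, which is exactly the assertion.

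The hard part is essentially absent here: all the substantive work sits in Lemma~\ref{dp} and in the equivalence of the seven conditions in Theorem~\ref{catcrit}. The only point requiring care is the direction of the argument — one must observe that it is $\mathcal{C}^\perp$, rather than $\mathcal{C}$ directly, that is exhibited as a kernel, and then use the self-duality equation to conclude for $\mathcal{C}$ itself.
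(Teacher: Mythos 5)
Your proposal is correct and matches the paper's intended argument exactly: the paper gives no explicit proof, stating only that the lemma ``follows immediately,'' but the immediately preceding remark after Lemma~\ref{dp} is precisely your chain of reasoning --- $G(z)$ is a parity-check matrix of $\mathcal{C}^\perp$, so $\mathcal{C}^\perp$ is non-catastrophic by criterion~(6) of Theorem~\ref{catcrit}, and self-duality transfers this to $\mathcal{C}$. Nothing is missing.
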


Like for block codes, we may assume every self-dual convolutional code
to be a $(2k,k)$ or equivalently an $(n,n/2)$ code as stated in the
following lemma.

\begin{lemma} \cite{14} If $\mathcal{C}$ is a self-dual $(n,k)$
  convolutional code, then $n=2k$.
\end{lemma}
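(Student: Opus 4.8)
The plan is to mirror the block-code argument (Corollary for block codes: any self-dual code satisfies $n=2k$) using the tools already assembled for convolutional codes. The statement to prove is that a self-dual $(n,k)$ convolutional code must have $n=2k$. First I would record the two facts that control the ``dimensions'' here: by Lemma \ref{dp}, the generator matrix $G(z)$ of $\mathcal{C}$ is a parity-check matrix of $\mathcal{C}^\perp$, so $\mathcal{C}^\perp = \ker(G(z))$ is an $(n, n-k)$ convolutional code. This gives the convolutional analogue of the relation $\dim \mathcal{C} + \dim \mathcal{C}^\perp = n$ that underlies the block-code proof.

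Next I would invoke self-duality. By hypothesis $\mathcal{C} = \mathcal{C}^\perp$, and in particular the two codes have the same rank as $\mathbb{F}_q[z]$-submodules of $\mathbb{F}_q[z]^n$. Since $\mathcal{C}$ has rank $k$ and $\mathcal{C}^\perp = \ker(G(z))$ has rank $n-k$, equating ranks yields $k = n-k$, i.e. $n = 2k$. The logical skeleton is therefore: self-duality forces $\operatorname{rk}\mathcal{C} = \operatorname{rk}\mathcal{C}^\perp$, and the rank of the dual is $n-k$, so $2k = n$.

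The step that needs the most care, and which I expect to be the main obstacle, is justifying that $\operatorname{rk}\mathcal{C}^\perp = n-k$ rigorously over the polynomial ring $\mathbb{F}_q[z]$ rather than over a field. For block codes this is just the rank--nullity theorem for linear maps of vector spaces, but here $\mathcal{C}^\perp = \ker(G(z))$ is the kernel of an $\mathbb{F}_q[z]$-linear map $\mathbb{F}_q[z]^n \to \mathbb{F}_q[z]^k$, $v(z) \mapsto G(z)v(z)^\top$. Because $\mathbb{F}_q[z]$ is a PID (noted earlier in the excerpt) and $G(z)$ has full row rank $k$, the image has rank $k$ and the kernel is a submodule whose rank is $n - k$; one clean way to see this is to pass to the field of fractions $\mathbb{F}_q(z)$, where $G(z)$ defines a surjection of an $n$-dimensional space onto a $k$-dimensional one, so the kernel is $(n-k)$-dimensional, and the $\mathbb{F}_q[z]$-rank of $\ker(G(z))$ equals the $\mathbb{F}_q(z)$-dimension of its span. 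Alternatively, one can appeal to the Smith form of $G(z)$, which is available for full-rank matrices as stated in the excerpt, to exhibit a basis of the kernel directly.

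Finally I would assemble these pieces into a short proof: $\mathcal{C}^\perp$ has rank $n-k$ by the kernel computation just described, self-duality gives $k = \operatorname{rk}\mathcal{C} = \operatorname{rk}\mathcal{C}^\perp = n-k$, and hence $n = 2k$. Since the paper cites \cite{14} for this lemma, I would expect the intended argument to be essentially this rank count, with the only genuinely convolutional ingredient being the careful treatment of ranks over $\mathbb{F}_q[z]$ via either localization to $\mathbb{F}_q(z)$ or the Smith form.
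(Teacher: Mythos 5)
Your proof is correct: the paper itself gives no proof of this lemma (it is only cited to \cite{14}), but your rank count $\rk\mathcal{C}^\perp=\rk\ker(G(z))=n-k$, justified by passing to $\mathbb{F}_q(z)$ or the Smith form, is exactly the dimension bookkeeping the authors use implicitly elsewhere (e.g.\ the step $\dim((\mathcal{C}^\perp)^\perp)=n-\dim(\mathcal{C}^\perp)$ in the proof of Lemma \ref{dd}). So your argument is a valid, self-contained version of the intended approach.
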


In a next step, we will establish the main result of this chapter,
which will be used for determining whether a convolutional code is
self-dual in later sections.

\begin{theorem}\label{sdcrit}
  Let $\mathcal{C}$ be a $(2k,k)$ convolutional code with generator
  matrix $G(z)$, then the following statements are equivalent:
  \begin{itemize}
  \item[(i)] $\mathcal{C}$ is self-dual;
  \item[(ii)] $\mathcal{C}$ is non-catastrophic and self-orthogonal;
  \item[(iii)] $\mathcal{C}$ is non-catastrophic and
    $G(z)G(z)^\top=0$;
  \item[(iv)] $G(z)$ is a parity-check matrix of $\mathcal{C}$.
  \end{itemize}
\end{theorem}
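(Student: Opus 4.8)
The plan is to establish the equivalences by proving a cycle of implications, leaning heavily on the block-code analogue (the earlier Lemma giving equivalence of self-dual, self-orthogonal, $GG^\top=0$, and parity-check for $(2k,k)$ block codes) together with the convolutional-specific results already available, namely Lemma \ref{dp}, Lemma \ref{dd}, and Lemma \ref{noncat}. First I would prove $(i)\Rightarrow(ii)$: if $\mathcal{C}$ is self-dual then $\mathcal{C}=\mathcal{C}^\perp$ trivially gives $\mathcal{C}\subset\mathcal{C}^\perp$, i.e. self-orthogonality, and non-catastrophicity is exactly Lemma \ref{noncat}. Next, $(ii)\Rightarrow(iii)$ is the routine translation of self-orthogonality into a matrix identity: $\mathcal{C}\subset\mathcal{C}^\perp$ means every pair of rows $g_i(z),g_j(z)$ of $G(z)$ satisfies $g_i(z)g_j(z)^\top=0$, which is precisely the statement $G(z)G(z)^\top=0$, and non-catastrophicity is carried along unchanged.

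The heart of the argument is $(iii)\Rightarrow(iv)$ and $(iv)\Rightarrow(i)$. For $(iii)\Rightarrow(iv)$, I would observe that $G(z)G(z)^\top=0$ says every row of $G(z)$ lies in $\ker(G(z))$, hence $\mathcal{C}=\mathrm{rowspan}(G(z))\subset\ker(G(z))$. By Lemma \ref{dp}, $\ker(G(z))=\mathcal{C}^\perp$, so this gives $\mathcal{C}\subset\mathcal{C}^\perp$ again, but now I must upgrade containment to equality so that $G(z)$ is genuinely a parity-check matrix of $\mathcal{C}$ (not merely of $\mathcal{C}^\perp$). The natural route is a rank/dimension count: $\mathcal{C}$ has rank $k$, its dual $\mathcal{C}^\perp$ has rank $n-k=k$, so the containment $\mathcal{C}\subset\mathcal{C}^\perp$ between two rank-$k$ modules forces equality once I know the quotient is torsion-free, which is where non-catastrophicity of $\mathcal{C}$ is essential. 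Concretely, since $\mathcal{C}$ is non-catastrophic, $G(z)$ is left-prime, so any generator matrix $\tilde G(z)$ of the larger code $\mathcal{C}^\perp$ with $G(z)=U(z)\tilde G(z)$ must have $U(z)$ unimodular (the same mechanism used in the proof of Lemma \ref{dd}), yielding $\mathcal{C}=\mathcal{C}^\perp=\ker(G(z))$, i.e. $G(z)$ is a parity-check matrix of $\mathcal{C}$.

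Finally $(iv)\Rightarrow(i)$ is immediate: if $G(z)$ is a parity-check matrix of $\mathcal{C}$ then $\mathcal{C}=\ker(G(z))$, and by Lemma \ref{dp} we have $\ker(G(z))=\mathcal{C}^\perp$, so $\mathcal{C}=\mathcal{C}^\perp$, which is self-duality. This closes the cycle $(i)\Rightarrow(ii)\Rightarrow(iii)\Rightarrow(iv)\Rightarrow(i)$ and hence proves all four statements equivalent.

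I expect the main obstacle to be the promotion of the inclusion $\mathcal{C}\subset\mathcal{C}^\perp$ to an equality in $(iii)\Rightarrow(iv)$. Over a polynomial ring one cannot simply invoke dimension counting as in the block-code case, because submodule containment of equal rank does not by itself give equality; the subtlety is genuinely module-theoretic and is exactly what non-catastrophicity (left-primeness of $G(z)$) is there to resolve, via the argument in Lemma \ref{dd} that a unimodular change relates the two generator matrices. I would take care to state explicitly that the rank of $\mathcal{C}^\perp$ is $k$ (using $\dim\mathcal{C}^\perp=n-\dim\mathcal{C}$, which is implicit in the earlier discussion) before invoking left-primeness, so that the step does not quietly assume the conclusion.
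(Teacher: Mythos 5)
Your proposal is correct and uses essentially the same approach as the paper: the easy implications are identical, and your key step of upgrading $\mathcal{C}\subset\mathcal{C}^\perp$ to equality via left-primeness forcing the connecting factor $U(z)$ to be unimodular is exactly the mechanism in the paper's proof of $(ii)\Rightarrow(i)$, merely repackaged into a single cycle $(i)\Rightarrow(ii)\Rightarrow(iii)\Rightarrow(iv)\Rightarrow(i)$ instead of the paper's pairwise equivalences. Your explicit attention to the rank of $\mathcal{C}^\perp$ being $k$ before invoking left-primeness is a point the paper leaves implicit, but it does not change the argument.
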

\begin{proof}
  $(i) \Rightarrow (ii)$ Let $\mathcal{C}$ be self-dual, then the code
  is self-orthogonal by definition and non-catastrophic by Lemma
  \ref{noncat}.
  \\
  $(ii) \Rightarrow (i)$ Let $\mathcal{C}$ be non-catastrophic and
  self-orthogonal and let $H(z)$ be a generator matrix of
  $\mathcal{C}^\perp$.  $\mathcal{C}\subset \mathcal{C}^\perp$ implies
  $rowspan(G(z)) \subset rowspan(H(z))$ and hence $U(z)H(z)=G(z)$ for
  some $U(z)\in\mathbb F_q[z]^{k\times k}$.  We assumed $\mathcal{C}$
  to be non-catastrophic, i.e., $G(z)$ is left-prime and hence, $U(z)$
  must be unimodular.  But this means that $G(z)$ and $H(z)$ are
  generator matrices of the same code and
  $\mathcal{C}=\mathcal{C}^\perp.$
  \\
  $(ii) \Rightarrow (iii)$ Let $\mathcal{C}$ be self-orthogonal with
  generator matrix $G(z)$. Each pair of rows $g_i(z),g_j(z)$ of $G(z)$
  is orthogonal, i.e.,$g_i(z)g_j(z)^\top=0$ and $G(z)G(z)^\top=0$
  follows.
  \\
  $(iii) \Rightarrow (ii)$ Let $\mathcal{C}$ fulfil
  $G(z)G(z)^\top=0$. Take $c_1(z),c_2(z) \in \mathcal{C}$, then there
  exist $m_1(z),m_2(z)\in \mathbb{F}_q[z]^k$ such that
$$\begin{array}{cc}
    c_1(z)&= m_1(z)G(z) \\
    c_2(z)&= m_2(z)G(z)
  \end{array}.$$
  But now 
  $c_1(z)c_2(z)^\top=m_1(z)(G(z)G(z)^\top)m_2(z)^\top=0$
  and therefore each pair of codewords is orthogonal or equivalently $\mathcal{C}$ is self-orthogonal. \\
  $(i) \Longrightarrow (iv)$ By assumption,
  $\mathcal{C}=\mathcal{C}^\perp$ and by Lemma \ref{dp},
$$\mathcal{C}^\perp=ker(G(z)),$$
making $G(z)$ a parity check matrix of $\mathcal{C}$.
\\
\\
$(iv) \Longrightarrow (ii)$ Let $G(z)$ be a parity-check of
$\mathcal{C}$, then $\mathcal{C}$ is non-catastrophic
and $G(z)G(z)^\top=0$, because $rowspan(G(z))=\mathcal{C}=ker(G(z))$.
\end{proof}
One might observe that self-orthogonality and $G(z)G(z)^\top=0$ are equivalent statements as the assumption of the code being non-catastrophic was not used in the proof of $(ii) \Longleftrightarrow (iii)$. \\
\\
For $(2k,k)$ linear block codes, self-duality and self-orthogonality
are equivalent properties, which is due to the fact that each linear
block code possesses a parity-check matrix. For $(2k,k)$ convolutional
codes however, this is not necessarily true as illustrated in the
following example.
\begin{example}
  Let
$$G(z)=
\begin{pmatrix}
  z^2+z+1 & z^2 & z & 1 \\
  1 & z & z^2 & z^2+z+1
\end{pmatrix} \in \mathbb{F}_2[z]^{2 \times 4}$$ be the generator
matrix of a binary $(4,2)$ convolutional code $\mathcal{C}$.
\\
We want to show that $\mathcal{C}=rowspan(G(z))$ is a proper subset of $\mathcal{C}^\perp=ker(G(z))$. \\
As $G(z)G(z)^\top=0$,
$\mathcal{C} \subset \mathcal{C}^\perp.$ Moreover, we observe that
$(1,1,1,1) \in \mathcal{C}^\perp$
but $(1,1,1,1)\not \in \mathcal{C}$, which shows that $\mathcal{C}$ is indeed a proper subset of $\mathcal{C}^\perp$. \\
Conclusively, $\mathcal{C}$ is a self-orthogonal $(2k,k)$
convolutional codes that is not self-dual. As a consequence of Theorem
\ref{sdcrit}, $\mathcal{C}$ must be a catastrophic code.
\end{example}


\subsection{Special properties of binary self-dual convolutional
  codes}

In this subsection, we present some further results for binary
self-dual convolutional codes.

The following well-known lemma is an immediate consequence of the
"Freshman dream" and helps us to determine whether a binary polynomial
vector is orthogonal to itself.

\begin{lemma}\label{fd}
  Let $f_i(z)\in \mathbb{F}_2[z]$, for
  $i\in \left\lbrace 1,\hdots,m\right\rbrace $, and
  $f(z)=(f_1(z),\hdots,f_m(z))$, then
$$f(z)f(z)^\top=(f(z)(1,\hdots,1)^\top)^2.$$
\end{lemma}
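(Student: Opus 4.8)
The plan is to prove the identity $f(z)f(z)^\top = \left(f(z)(1,\hdots,1)^\top\right)^2$ by direct computation, exploiting the characteristic-$2$ setting. First I would expand the left-hand side as a sum over pairs of indices:
\[
f(z)f(z)^\top = \sum_{i=1}^m f_i(z)^2 + \sum_{i \neq j} f_i(z)f_j(z).
\]
The cross terms $f_i(z)f_j(z)$ with $i \neq j$ appear twice (once as $(i,j)$ and once as $(j,i)$), and over $\mathbb{F}_2$ each such pair sums to $2f_i(z)f_j(z) = 0$. Hence the off-diagonal contribution vanishes entirely and we are left with $f(z)f(z)^\top = \sum_{i=1}^m f_i(z)^2$.

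Next I would expand the right-hand side. Since $f(z)(1,\hdots,1)^\top = \sum_{i=1}^m f_i(z)$, squaring gives
\[
\left(\sum_{i=1}^m f_i(z)\right)^2 = \sum_{i=1}^m f_i(z)^2 + \sum_{i \neq j} f_i(z)f_j(z),
\]
and by the same characteristic-$2$ cancellation of the doubled cross terms this reduces to $\sum_{i=1}^m f_i(z)^2$ as well. Comparing the two reduced expressions yields the claimed identity. This is precisely the ``Freshman's dream'' phenomenon referenced in the statement, now applied to a sum of $m$ terms rather than two; one may also invoke it in its polynomial form by noting that the Frobenius map $a \mapsto a^2$ is a ring homomorphism on $\mathbb{F}_2[z]$, so that $\left(\sum_i f_i(z)\right)^2 = \sum_i f_i(z)^2$ directly, making the right-hand side equal to $\sum_i f_i(z)^2$ in one step.

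I do not anticipate any genuine obstacle here, as the argument is entirely elementary; the only point requiring a moment's care is to track the cross terms correctly and to confirm that the Frobenius/Freshman's-dream step is being applied over $\mathbb{F}_2[z]$ (characteristic $2$), where it is valid, rather than over a general coefficient ring. The cleanest writeup is the Frobenius-homomorphism version: establish $\left(\sum_i f_i(z)\right)^2 = \sum_i f_i(z)^2$ by additivity of squaring in characteristic $2$, then separately observe that the matrix product $f(z)f(z)^\top$ collapses to $\sum_i f_i(z)^2$ by the same cancellation, and conclude equality.
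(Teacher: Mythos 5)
Your proposal is correct and matches the paper's approach: the paper gives no written proof at all, simply invoking the ``Freshman dream'', which is exactly your Frobenius-homomorphism argument that $\left(\sum_i f_i(z)\right)^2=\sum_i f_i(z)^2$ in characteristic $2$, combined with the observation that $f(z)f(z)^\top=\sum_i f_i(z)^2$. One small slip worth fixing in the writeup: the matrix product $f(z)f(z)^\top$ expands directly to $\sum_i f_i(z)^2$ with no off-diagonal terms at all (the cross terms arise only when expanding the square of the sum on the right-hand side), though since $\sum_{i\neq j}f_i(z)f_j(z)=0$ over $\mathbb{F}_2[z]$ your displayed identity is still literally true and the argument goes through.
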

As a consequence, to determine whether a binary polynomial vector is
orthogonal to itself, we may just check if the sum of its entries is
equal to zero.
\begin{lemma}\label{ew}
  All codewords of a binary self-dual code have even weight.
\end{lemma}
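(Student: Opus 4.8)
All codewords of a binary self-dual code have even weight.

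I want to prove that every codeword $c(z)$ of a binary self-dual convolutional code satisfies $\wt(c(z)) \equiv 0 \pmod 2$.

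The plan is to exploit the fact that a self-dual code is in particular self-orthogonal, so every codeword is orthogonal to itself. First I would take an arbitrary codeword $c(z) \in \mathcal{C}$. Since $\mathcal{C} = \mathcal{C}^\perp$, we have $c(z) \in \mathcal{C}^\perp$, and therefore $c(z)$ is orthogonal to every element of $\mathcal{C}$; in particular $c(z)c(z)^\top = 0$. The idea is then to translate this algebraic orthogonality condition into a statement about the Hamming weight over $\mathbb{F}_2$.

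The key step is to apply Lemma \ref{fd}. Writing $c(z) = (f_1(z),\dots,f_n(z))$, that lemma gives
$$c(z)c(z)^\top = \bigl(c(z)(1,\dots,1)^\top\bigr)^2.$$
Since $c(z)c(z)^\top = 0$ and we are in characteristic $2$ (so squaring is injective on $\mathbb{F}_2[z]$, an integral domain), this forces $c(z)(1,\dots,1)^\top = \sum_{i=1}^n f_i(z) = 0$ in $\mathbb{F}_2[z]$. Expanding each $f_i(z)$ coefficientwise, this says that for every power $z^j$ the sum of the corresponding coefficients across the $n$ components vanishes in $\mathbb{F}_2$. In other words, writing $c(z) = \sum_j c_j z^j$ with $c_j \in \mathbb{F}_2^n$, each coefficient block $c_j$ has an even number of nonzero entries, i.e.\ $\wt(c_j)$ is even for every $j$.

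The final step is to sum over $j$: by the definition of the weight of a polynomial vector, $\wt(c(z)) = \sum_j \wt(c_j)$, and a sum of even integers is even, so $\wt(c(z))$ is even. I do not expect any real obstacle here, since the heavy lifting is done by Lemma \ref{fd} together with self-orthogonality; the only point requiring a moment of care is the passage from the single polynomial identity $\sum_i f_i(z) = 0$ to the coefficientwise (hence blockwise) evenness of the weight, which is what guarantees the conclusion for the total weight rather than merely for some aggregate parity.
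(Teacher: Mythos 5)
Your proposal is correct and follows essentially the same route as the paper's own proof: self-orthogonality gives $c(z)c(z)^\top=0$, Lemma \ref{fd} converts this to $c(z)(1,\dots,1)^\top=0$, and comparing coefficients shows each block $c_j$ has even weight, whence the total weight is even. The only (welcome) addition is your explicit remark that passing from $\bigl(c(z)(1,\dots,1)^\top\bigr)^2=0$ to $c(z)(1,\dots,1)^\top=0$ uses that $\mathbb{F}_2[z]$ is an integral domain, a step the paper leaves implicit.
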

\begin{proof}
  Let $c(z)=\sum_{i=0}^l c_iz^i$ be a codeword of a binary self-dual
  $(n,k)$ convolutional code, where $l={deg(c(z))}$ and
  $c_i\in \mathbb{F}_2^n$. By definition of self-duality,
  one obtains $c(z)c(z)^\top=0$.  It follows from Lemma \ref{fd} that
  $c(z)(1,\hdots,1)^\top=0$ or equivalently
$$\sum_{i=0}^l c_iz^i (1,\hdots,1)^\top= \sum_{i=0}^l z^ic_i(1,\hdots,1)^\top=0.$$
Every coefficient of the last sum must be equal to zero, i.e.
$$
c_i(1,\hdots,1)^\top=0
$$
must hold for every $i\in \{0,\hdots,l\}$ and hence $wt(c_i)$ is
even. But then
$$wt(c(z))=\sum_{i=0}^l wt(c_i)$$
is also even.
\end{proof}
In \cite{22}, the following result was proven for block codes. Up to
our knowledge it has not been proven for convolutional codes yet.
\begin{lemma}\label{11} Every binary self-dual convolutional code
  contains $(1,\hdots,1) \in \mathbb{F}_2^n$.
\end{lemma}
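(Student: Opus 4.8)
The plan is to exploit the kernel description of a self-dual code together with Lemma \ref{fd}. Write $G(z)$ for a generator matrix of the given binary self-dual $(2k,k)$ convolutional code $\mathcal{C}$, with rows $g_1(z),\hdots,g_k(z)$. By Theorem \ref{sdcrit}, self-duality gives $\mathcal{C}=\mathcal{C}^\perp$, and by Lemma \ref{dp} we have $\mathcal{C}^\perp=ker(G(z))$; hence $\mathcal{C}=ker(G(z))$. Consequently, to prove $(1,\hdots,1)\in\mathcal{C}$ it suffices to verify the single identity
$$
G(z)(1,\hdots,1)^\top=0,
$$
i.e. that the all-ones vector lies in the kernel of $G(z)$. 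First I would record that $G(z)(1,\hdots,1)^\top$ is the column vector whose $i$-th entry is the inner product $g_i(z)(1,\hdots,1)^\top$, namely the sum of the entries of the $i$-th row of $G(z)$. So the claim reduces to showing that each row of $G(z)$ has entry-sum equal to the zero polynomial.

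The key step is to feed self-orthogonality into Lemma \ref{fd}. Theorem \ref{sdcrit} gives $G(z)G(z)^\top=0$; reading off the diagonal entries yields $g_i(z)g_i(z)^\top=0$ for every $i$. Lemma \ref{fd} rewrites this as $\bigl(g_i(z)(1,\hdots,1)^\top\bigr)^2=0$, and since $\mathbb{F}_2[z]$ is an integral domain this forces $g_i(z)(1,\hdots,1)^\top=0$ for each $i$. Thus every row-sum vanishes, so $G(z)(1,\hdots,1)^\top=0$ and $(1,\hdots,1)\in ker(G(z))=\mathcal{C}$, as required.

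There is no serious obstacle here: once the kernel description $\mathcal{C}=ker(G(z))$ is in place, the statement is a direct consequence of the ``Freshman dream'' identity of Lemma \ref{fd} applied row by row. The only point demanding a little care is that the argument is entirely characteristic-$2$: both the passage from $\bigl(g_i(z)(1,\hdots,1)^\top\bigr)^2=0$ to $g_i(z)(1,\hdots,1)^\top=0$ and, more importantly, the very identity of Lemma \ref{fd} rely on $(a+b)^2=a^2+b^2$ over $\mathbb{F}_2$. Hence the result is genuinely special to binary codes and should not be expected to hold verbatim over larger fields.
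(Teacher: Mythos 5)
Your argument is correct and follows essentially the same route as the paper: each row satisfies $g_i(z)g_i(z)^\top=0$, Lemma \ref{fd} converts this to $g_i(z)(1,\hdots,1)^\top=0$, hence $G(z)(1,\hdots,1)^\top=0$, and self-duality (via $\mathcal{C}^\perp=\ker(G(z))=\mathcal{C}$) puts the all-ones vector in the code. The only cosmetic difference is that you spell out the integral-domain step and the kernel identification more explicitly than the paper does.
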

\begin{proof} Let $g_i(z)$ be a row of the generator matrix $G(z)$,
  then $g_i(z)g_i(z)^\top=0$ or equivalently (Lemma \ref{fd}),
  $g_i(z)(1,\hdots,1)^\top=0$ and therefore $G(z)(1,\hdots,1)^\top=0$.
  Hence $(1,\hdots,1)$ is contained in the dual of the code and,
  because of self-duality, contained in the code itself.
\end{proof}


\begin{corollary}
  Let $\mathcal{C}$ be a binary self-dual $(n,k)$ convolutional code,
  then there exists a generator matrix whose first row is the all one
  vector.
\end{corollary}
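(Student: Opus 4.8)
The plan is to use Lemma \ref{11}, which guarantees that the all-one vector is already a codeword, and then to perform a unimodular change of generator matrix that places this vector in the first row. First I would invoke Lemma \ref{11} to write $(1,\hdots,1) = m(z)G(z)$ for some $m(z) = (m_1(z),\hdots,m_k(z)) \in \mathbb{F}_2[z]^k$, where $G(z)$ is a fixed generator matrix of $\mathcal{C}$. The goal is then to show that $m(z)$ can serve as the first row of a suitable unimodular matrix $U(z)$, so that $\tilde{G}(z) := U(z)G(z)$ is an equivalent generator matrix whose first row is $m(z)G(z) = (1,\hdots,1)$.

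The key step is to prove that the entries $m_1(z),\hdots,m_k(z)$ are coprime, i.e.\ that $\gcd(m_1(z),\hdots,m_k(z))$ is a nonzero constant. Suppose a non-constant polynomial $d(z)$ divided every $m_i(z)$. Writing $m(z) = d(z)m'(z)$ would give $(1,\hdots,1) = d(z)\bigl(m'(z)G(z)\bigr)$, so $d(z)$ divides each entry of $(1,\hdots,1)$ in $\mathbb{F}_2[z]$. But these entries are all equal to $1$, and the only polynomials dividing $1$ are the units, contradicting that $d(z)$ is non-constant. Hence the entries of $m(z)$ generate the unit ideal $\mathbb{F}_2[z]$.

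Having established this coprimality, I would view $m(z)$ as the generator matrix of a $(k,1)$ convolutional code. Since its $1$-minors, namely its own entries, generate $\mathbb{F}_2[z]$, property~5 of Theorem \ref{catcrit} applies and shows $m(z)$ is left-prime; then by property~7 of the same theorem $m(z)$ can be completed to a unimodular matrix $U(z) \in \mathbb{F}_2[z]^{k \times k}$ having $m(z)$ as its first row. Because $\mathcal{C}$ is self-dual it is non-catastrophic by Lemma \ref{noncat}, so multiplying the left-prime generator $G(z)$ by the unimodular $U(z)$ yields another generator matrix $\tilde{G}(z) = U(z)G(z)$ of the same code $\mathcal{C}$, whose first row is precisely $(1,\hdots,1)$.

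The main obstacle is the combination of the coprimality argument with the completion step: one must ensure that the specific combination $m(z)$ realizing the all-one codeword is itself a unimodular row, not merely some polynomial vector producing $(1,\hdots,1)$. The divisibility observation above is exactly what forces coprimality, and Theorem \ref{catcrit} is then the tool that converts coprimality of the entries into the existence of the unimodular completion $U(z)$.
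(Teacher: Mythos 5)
Your proof is correct and matches the argument the paper intends: the corollary is stated there as an immediate consequence of Lemma \ref{11}, with no written proof. You have in fact supplied the details the paper leaves implicit --- that the coefficient vector $m(z)$ with $m(z)G(z)=(1,\hdots,1)$ must have coprime entries (since any common divisor would divide $1$), and that Theorem \ref{catcrit} then completes $m(z)$ to a unimodular $U(z)$ so that $U(z)G(z)$ is an equivalent generator matrix with all-one first row; the appeal to Lemma \ref{noncat} in your last step is unnecessary, since unimodular left multiplication preserves the code regardless of catastrophicity, but this is harmless.
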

This generator matrix with just ones in the first row is equivalent to
its row Hermite form. Therefore, to find all binary self-dual
convolutional codes, we may characterize them via generator matrices
that have a first row of just ones and are in row Hermite form.
\begin{remark}\label{repform}
  Let
$$G(z)=
\begin{pmatrix}
  1 & 1 & 1 & 1 & \cdots & \cdots & 1 \\
  0 & g_{22}(z) & g_{23}(z) & g_{24}(z) & \cdots & \cdots & g_{2n}(z) \\
  \vdots & 0 & g_{33}(z) & g_{34}(z) & \cdots & \cdots & g_{3n}(z) \\
  \vdots & \vdots & \ddots & \ddots & \ddots & & \vdots \\
  0 & 0 & \cdots & 0 & g_{kk}(z) & \cdots & g_{kn}(z)
\end{pmatrix}$$ be a generator matrix of a self-dual code in the just
discussed form. Then we choose this form of $G(z)$ as the
representative of any binary self-dual convolutional code for our
classifications.
\end{remark}
Multiplying the generator matrix of a self-dual code with a unimodular
matrix from the left-hand side leaves the code unchanged and therefore
leaves the code self-dual.  Obviously, exchanging columns has also no
influence on self-duality since it only results in renumbering the
components of the codewords.  However, the following example shows
that multiplying a column with a constant or adding columns does not
preserve self-duality.
\begin{example}
  We claim that
$$G(z)=\begin{pmatrix}
  3 & z & 1 & 3z \\
  1 & 2z+4 & 2 & z+2
\end{pmatrix} \in \mathbb{F}_5[z]^{2 \times 4}
$$
generates a self-dual code and will use Theorem \ref{sdcrit} to proof
this. Firstly,
$$
G(z)G(z)^\top=
\begin{pmatrix}
  10z^2+10 & 5z^2+10z+5 \\
  5z^2+10z+5 & 5z^2+20z+25
\end{pmatrix}
=0
$$
and secondly
$$
det\left[
  \begin{pmatrix}
    3 & z \\
    1 & 2z+4
  \end{pmatrix}\right]
= 6z+12-z=2 ,
$$
which implies that the set of $G(z)$`s full-size minors generates $\mathbb{F}_5[z]$ or equivalently $G(z)$ generates a non-catastrophic code. Hence, the code is self-dual. \\
We now want to establish that adding and multiplying columns might
change this property. For this, add the second column of $G(z)$ to the
first to find
$$\begin{pmatrix}
  3 & z & 1 & 3z \\
  1 & 2z+4 & 2 & z+2
\end{pmatrix}\sim
\begin{pmatrix}
  3+z & z & 1 & 3z \\
  2z & 2z+4 & 2 & z+2
\end{pmatrix}=:G_1(z).
$$
Then the first row of $G_1(z)$ is not orthogonal to itself as
$$(3+z,z,1,3z)(3+z,z,1,3z)^\top=9+6z+z^2+z^2+1+9z^2=z^2+z \neq 0 ,$$
meaning that $G_1(z)$ does not generate a self-dual code. \\
Similarly, if we multiply the first column of $G(z)$ by $2$ we find
$$
\begin{pmatrix}
  3 & z & 1 & 3z \\
  1 & 2z+4 & 2 & z+2
\end{pmatrix}
\sim
\begin{pmatrix}
  1 & z & 1 & 3z \\
  2 & 2z+4 & 2 & z+2
\end{pmatrix}
$$
and again the first row is not orthogonal to itself as
$$(1,z,1,3z)(1,z,1,3z)^\top=1+z^2+1+9z^2=2\neq 0 .$$
\end{example}
\section{Classification of Self-Dual Convolutional Codes}
In this section, we will classify all self-dual $(2,1)$ convolutional
codes over finite fields and all binary self-dual $(4,2)$
convolutional codes. We will further classify all self-dual
convolutional codes with double diagonal matrices over finite fields
and all binary self-dual convolutional codes with double upper
triangular generator matrices.
\subsection{Self-Dual (2,1) Convolutional Codes}
In this subsection, we consider $\mathbb{F}_q[z]$, where $q=p^l$ is a
prime power.

\begin{theorem}\label{1/2}
  The self-dual $(2,1)$ convolutional codes are exactly the $(2,1)$
  self-dual block codes. In particular, these are exactly the codes
  with a generator matrix of the form $(a\ b)$ where $a^2=-b^2$ in
  $\mathbb F_q$. Clearly, such a generator matrix exists if and only
  if $-1$ is a square in $\mathbb F_q$, i.e., for $p\equiv 1\mod 4$,
  $p=2$ or $l$ even.
\end{theorem}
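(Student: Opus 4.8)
The plan is to reduce the statement to the self-duality criterion of Theorem~\ref{sdcrit} combined with the non-catastrophic characterization of Theorem~\ref{catcrit}, and then carry out an elementary argument in the integral domain $\mathbb{F}_q[z]$. First I would write a generator matrix of a $(2,1)$ convolutional code as a single row $G(z)=(a(z),b(z))$ with $a(z),b(z)\in\mathbb{F}_q[z]$. By Theorem~\ref{sdcrit}, $\mathcal{C}$ is self-dual if and only if it is non-catastrophic and $G(z)G(z)^\top=0$. Here $G(z)G(z)^\top=a(z)^2+b(z)^2$, so the self-orthogonality part is exactly the condition $a(z)^2=-b(z)^2$. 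For the non-catastrophic part I would invoke criterion~5 of Theorem~\ref{catcrit}: the ideal generated by the $1$-minors $a(z),b(z)$ must equal $\mathbb{F}_q[z]$, i.e. $\gcd(a(z),b(z))=1$.

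The heart of the argument, and the step I expect to be the main obstacle, is to show that these two conditions together force $a(z)$ and $b(z)$ to be \emph{constants}, so that $\mathcal{C}$ has degree $0$ and is therefore a block code; everything else is bookkeeping. To prove this I would split on whether $-1$ is a square in $\mathbb{F}_q$. If $-1$ is \emph{not} a square, I would compare leading coefficients in $a^2=-b^2$: assuming $a,b$ are both nonzero gives $\deg a=\deg b$, and equating leading coefficients yields $(\operatorname{lc}a/\operatorname{lc}b)^2=-1$, a contradiction. Hence the only solution is $a=b=0$, which is not a rank-$1$ generator, and no self-dual $(2,1)$ code exists. If instead $-1=\iota^2$ for some $\iota\in\mathbb{F}_q$, I would factor $a^2+b^2=(a+\iota b)(a-\iota b)$; since $\mathbb{F}_q[z]$ is an integral domain and this product vanishes, either $a=\iota b$ or $a=-\iota b$. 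In either case $a$ is a unit multiple of $b$, so $\gcd(a,b)$ is associate to $b$, and coprimality forces $b$ (and hence $a$) to be a nonzero constant. This produces precisely a constant generator matrix $(a,b)$ with $a^2=-b^2$, matching the claimed form, and conversely any such constant row is non-catastrophic (one entry is a nonzero constant, hence a unit) and self-orthogonal, so it does generate a self-dual code; this also shows these codes coincide with the self-dual $(2,1)$ block codes.

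Finally I would translate the existence criterion ``$-1$ is a square in $\mathbb{F}_q$'' into the stated arithmetic condition. For $q=p^l$ with $p=2$ every element of $\mathbb{F}_q$ is a square (the Frobenius is bijective), so a code always exists; for odd $q$, the group $\mathbb{F}_q^*$ is cyclic of order $q-1$, and $-1$, being its unique element of order $2$, is a square exactly when $(q-1)/2$ is even, i.e. $q\equiv 1\pmod 4$. Writing $q=p^l$, this happens precisely when $p\equiv 1\pmod 4$ (for every $l$) or when $p\equiv 3\pmod 4$ with $l$ even, which together with the case $p=2$ reproduces the three alternatives in the statement and completes the proof.
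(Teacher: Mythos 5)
Your proposal is correct and follows essentially the same route as the paper: reduce via Theorem~\ref{sdcrit} to the two conditions $a(z)^2+b(z)^2=0$ and $\gcd(a(z),b(z))=1$, and show these force $a,b$ to be nonzero constants. The only difference is that the paper simply asserts this last implication, whereas you actually justify it (leading-coefficient comparison when $-1$ is a nonsquare, and the factorization $(a+\iota b)(a-\iota b)=0$ in the integral domain $\mathbb{F}_q[z]$ when $-1=\iota^2$), which is a welcome addition rather than a departure.
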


\begin{proof}
  Denote by $G(z)=(g_1(z),g_2(z)) \in \mathbb F_q[z]^{1 \times 2}$ a
  generator matrix of a self-dual $(2,1)$ convolutional code
  $\mathcal{C}$. By Theorem \ref{sdcrit}, the code is self-dual if and
  only if $g_1(z)^2+g_2(z)^2=0$ and it is non-catastrophic. However,
  such a code is non-catastrophic if and only if $g_1(z)$ and $g_2(z)$
  are coprime, which can only be fulfilled together with
  $g_1(z)^2=-g_2(z)^2$ if $g_1(z)$ and $g_2(z)$ are constant.
\end{proof}
\subsection{Binary Self-Dual (4,2) Convolutional Codes}
The goal in this section is to find all binary self-dual $(4,2)$
convolutional codes $\mathcal{C}$ classified by their representative
as defined in Remark 3.3.16., i.e.,
%
we may assume that
$$G(z)=
\begin{pmatrix}
  1 & 1 & 1 & 1 \\
  0 & g_{22}(z) & g_{23}(z) & g_{24}(z)
\end{pmatrix} \in \mathbb{F}_2[z]^{2 \times 4}$$ is a generator matrix
of $\mathcal{C}$ for some
$g_{22}(z),g_{23}(z),g_{24}(z) \in \mathbb{F}_2[z]$.
$G(z)G(z)^\top=0$ tells us that
$g_{22}(z)^2+g_{23}(z)^2+g_{24}(z)^2=0$ or equivalently (by Lemma
\ref{fd}),
\begin{equation}\label{1}
  g_{22}(z)+g_{23}(z)+g_{24}(z)=0.
\end{equation}
Moreover, $\mathcal{C}$ is non-catastrophic if and only if the
elements of the set of $2$-nd minors
$M=\left\lbrace
  g_{22}(z),g_{23}(z),g_{24}(z),g_{22}(z)+g_{23}(z),g_{22}(z)+g_{24}(z),g_{23}(z)+g_{24}(z)\right\rbrace$
of $G(z)$ are coprime, i.e.,
$gcd(g_{22}(z),g_{23}(z),g_{24}(z))=1$.  Using $\eqref{1}$ this
simplifies to $gcd(g_{23}(z),g_{24}(z))=1.$
\\
We conclude that for $g_{23}(z),g_{24}(z)\in \mathbb{F}_2[z]$ such
that $gcd(g_{23}(z),g_{24}(z))=1$,
$$G(z)=
\begin{pmatrix}
  1 & 1 & 1 & 1 \\
  0 & g_{23}(z)+g_{24}(z) & g_{23}(z) & g_{24}(z)
\end{pmatrix}$$ generates a binary self-dual $(4,2)$ convolutional
code and that every binary self-dual $(4,2)$ convolutional code has a
generator matrix of this form.
\subsection{Double Diagonal Generator Matrix}
In this subsection, we want to find all self-dual convolutional codes
with generator matrices of the form
$$
G(z) =
\begin{pmatrix}
  g_{1,1}(z) & & & & g_{1,{k+1}}(z) & & &  \\
  & g_{2,2}(z) & & & & g_{2,{k+1}}(z)  & & \\
  & & \ddots & & & & \ddots & \\
  & & & g_{k,k}(z) & & & & g_{k,{2k}}(z)
\end{pmatrix} \in \mathbb{F}_q^{k \times 2k}.
$$
\\
Let $i \in \left\lbrace 1,...,k\right\rbrace $. Self-orthogonality is
equivalent to
$g_{i,i}(z)^2+g_{i,{k+i}}(z)^2=0$
and non-catastrophicity is equivalent to
$\gcd(g_{i,i}(z),g_{i,k+i}(z))=1$.
\\
As in Theorem \ref{1/2}, one concludes that
$$
G(z) =
\begin{pmatrix}
  a_1 & & & & b_1 & & &  \\
  & a_2 & & & & b_2  & & \\
  & & \ddots & & & & \ddots & \\
  & & & a_k & & & & b_k
\end{pmatrix} \in \mathbb{F}_q^{k \times 2k},
$$
where $a_i,b_i \in \mathbb{F}_{q}^*$ such that $a_i^2=-b_i^2$ in
$\mathbb F_q$. Therefore, self-dual convolutional codes of this form
exist for exactly the same field sizes as $(2,1)$ self-dual block
codes.
\subsection{Double Upper Triangular Generator Matrices}
Let $\mathcal{C}$ be a binary self-dual $(2k,k)$ convolutional code
with a generator matrix $G(z)$ of the form
$$
G(z) =
\begin{pmatrix}
  g_{1,1} & \cdots & \cdots & g_{1,k} & g_{1,{k+1}} & g_{1,{k+2}} & \cdots & g_{1,{2k}} \\
  & g_{2,2} & \cdots & g_{2,k} & & g_{2,{k+2}}  & \cdots & g_{2,{2k}} \\
  & & \ddots & \vdots & & & \ddots & \vdots \\
  & & & g_{k,k} & & & & g_{k,{2k}}
\end{pmatrix} \in \mathbb{F}_2[z]^{k \times 2k}.
$$
For simplicity we did and will write $g_{i,j}$ instead of
$g_{i,j}(z)$.  We want to show that
$$
\begin{pmatrix}
  1 & & & & 1 & & &  \\
  & 1 & & & & 1  & & \\
  & & \ddots & & & & \ddots & \\
  & & & 1 & & & & 1
\end{pmatrix}
$$
is also a generator matrix of $\mathcal{C}$.  Let $g_i$ be the $i$-th
row of $G(z)$, then $g_k g_i^\top=0$ for
$i \in \left\lbrace 1,\hdots , k\right\rbrace $ leads to
$$
\begin{array}{ccc}
  g_{k,k} & = & g_{k,{2k}}, \\
  g_{k,k}g_{i,k}+ g_{k,{2k}}g_{i,2k} & = &g_{k,k}(g_{i,k}+g_{i,2k})  =  0.\\
\end{array} 
$$
$g_{k,k}$ is non-zero (as $rank(G(z))=k$) and hence
$g_{i,k}=g_{i,2k}$.
Moreover, $g_{k,k}$ is a common factor of every $k$-th minor of $G(z)$
and hence must be equal to $1$ because the code is
non-catastrophic. Therefore,
$$
G(z) =
\begin{pmatrix}
  g_{1,1} & g_{1,2} & \cdots & \cdots & g_{1,k} & g_{1,k+1} & g_{1,{k+2}} & \cdots & \cdots & g_{1,k} \\
  & g_{2,2} & g_{2,3} & \cdots & g_{2,k} & & g_{2,k+2}  & g_{2,k+3} & \cdots & g_{2,k} \\
  & & \ddots & \ddots & \vdots & & & \ddots & \vdots \\
  & & & g_{k-1,k-1} & g_{k-1,k} & & & & g_{k-1,2k-1} & g_{k-1,k} \\
  & & & & 1 & & & & & 1
\end{pmatrix}.
$$
Furthermore, $G(z)$ is equivalent to
$$
\tilde{G}(z)=\begin{pmatrix}
  g_{1,1} & g_{1,2} & \cdots & g_{1,k-1} & 0 & g_{1,k+1} & g_{1,{k+2}} & \cdots & g_{1,2k-1} & 0\\
  & g_{2,2} & \cdots & g_{2,k-1} & 0 & & g_{2,k+2}  & \cdots & g_{2,2k-1} & 0 \\
  & & \ddots & & \vdots & & & \ddots & \vdots \\
  & & & g_{k-1,k-1} & 0 & & & & g_{k-1,2k-1} & 0 \\
  & & & & 1 & & & & & 1
\end{pmatrix}
$$
by adding the $k$-th row times $g_{i,k}$ to the $i$-th row for
$i \in \left\lbrace 1,\hdots ,k-1\right\rbrace $.
Let $\tilde{g_i}$ be the $i$-th row of $\tilde{G}(z)$. Similarly to
before, from $\tilde{g_i}\tilde{g}_{k-1}^\top$ for
$i \in \left\lbrace 1,...,k-1\right\rbrace $, we get
$$
\begin{array}{ccc}
  g_{k-1,k-1} & = & g_{k-1,{2k-1}} \\
  g_{k-1,k-1}(g_{i,k-1}+g_{i,2k-1}) & = & 0 \\
\end{array} .
$$
Hence, using the same argument as before we get that $\tilde{G}(z)$ is
equivalent to
$$
\begin{pmatrix}
  g_{1,1} & \cdots & g_{1,k-2} & 0 & 0 & g_{1,k+1} & \cdots & g_{1,2k-2} & 0 & 0 \\
  & \ddots & \vdots & \vdots & \vdots & & \ddots & \vdots & \vdots & \vdots \\
  & & g_{k-2,k-2} & 0 & 0 & & & g_{k-2,2k-2} & 0 & 0 \\
  & & & 1 & 0 & & & & 1 & 0 \\
  & & & & 1 & & & & & 1
\end{pmatrix}.
$$
Repeating this reasoning,
we find that every double upper triangle generator matrix of a binary
self-dual code is equivalent to
$$
G^\prime(z) =
\begin{pmatrix}
  1 & & & & 1 & & &  \\
  & 1 & & & & 1  & & \\
  & & \ddots & & & & \ddots & \\
  & & & 1 & & & & 1
\end{pmatrix}.
$$
Conclusively, there exists exactly one binary self-dual $(2k,k)$
convolutional code that has a double triangular generator matrix,
namely $\mathcal{C}=rowspan(G^\prime(z))$.

\section{Construction of Self-Dual Convolutional Codes}
In this section,
we will generalize known construction methods for block codes to
construction methods for convolutional codes and will present examples
for binary self-dual convolutional codes of different lengths,
dimensions and degrees.
We will consider constructions, where already known self-dual
convolutional codes are used to find new ones. The first three results
are true over any polynomial ring over a general finite field and the
rest only over the polynomial ring over the binary field.
\begin{proposition}
  If $G(z) \in \mathbb F_q[z]^{k \times n}$ and
  $\tilde{G}(z) \in \mathbb F_q[z]^{k^\prime \times n^\prime}$ are
  generator matrices of two self-dual convolutional codes, then the
  code $\mathcal{C}^\prime$ generated by
$$G^\prime(z)=\begin{pmatrix}
  G(z) & 0 \\
  0 & \tilde{G}(z)
\end{pmatrix} \in \mathbb F_q[z]^{(k+k^\prime) \times (n+n^\prime)}$$
is also self-dual.
\end{proposition}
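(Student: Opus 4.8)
The plan is to invoke the self-duality criterion of Theorem~\ref{sdcrit}, in the form $(iii)\Rightarrow(i)$: it is enough to check that $\mathcal{C}^\prime$ is non-catastrophic and that $G^\prime(z)G^\prime(z)^\top=0$. Before either check, I would note that both input codes are self-dual, so each satisfies $n=2k$ and $n^\prime=2k^\prime$; hence $n+n^\prime=2(k+k^\prime)$ and $G^\prime(z)$ indeed generates a $(2(k+k^\prime),k+k^\prime)$ code, so the ambient dimension condition for self-duality is in place.

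The orthogonality condition is immediate from the block-diagonal shape. Multiplying out,
$$G^\prime(z)G^\prime(z)^\top=\begin{pmatrix} G(z)G(z)^\top & 0 \\ 0 & \tilde{G}(z)\tilde{G}(z)^\top \end{pmatrix},$$
and since $G(z)$ and $\tilde{G}(z)$ generate self-dual codes, Theorem~\ref{sdcrit} gives $G(z)G(z)^\top=0$ and $\tilde{G}(z)\tilde{G}(z)^\top=0$, so the entire product vanishes.

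The substantive step is non-catastrophicity, which I would establish via condition~5 of Theorem~\ref{catcrit}, namely that the ideal generated by the maximal minors of $G^\prime(z)$ equals $\mathbb{F}_q[z]$. Here the block structure is exploited: selecting $k+k^\prime$ columns of $G^\prime(z)$ produces a nonzero $(k+k^\prime)$-minor only when exactly $k$ of them come from the first $n$ columns and $k^\prime$ from the last $n^\prime$. Otherwise one diagonal block is left with more rows than available nonzero columns, forcing those rows to be linearly dependent and the minor to vanish. In the balanced case the minor factors as $m_i(z)\,\tilde{m}_j(z)$, a product of a maximal minor of $G(z)$ with one of $\tilde{G}(z)$; thus the maximal minors of $G^\prime(z)$ are exactly these pairwise products.

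To finish, I would use that the two input codes, being self-dual, are non-catastrophic (Lemma~\ref{noncat}), so by Theorem~\ref{catcrit} their maximal minors generate the unit ideal: there are polynomials with $\sum_i a_i(z)m_i(z)=1$ and $\sum_j b_j(z)\tilde{m}_j(z)=1$. Multiplying these identities yields $\sum_{i,j}a_i(z)b_j(z)\,m_i(z)\tilde{m}_j(z)=1$, which exhibits $1$ in the ideal generated by the maximal minors of $G^\prime(z)$; hence $\mathcal{C}^\prime$ is non-catastrophic. Combined with $G^\prime(z)G^\prime(z)^\top=0$, Theorem~\ref{sdcrit} then gives that $\mathcal{C}^\prime$ is self-dual. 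I expect the only delicate point to be the vanishing of the ``unbalanced'' minors of the block matrix; the remaining arguments are formal.
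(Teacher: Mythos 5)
Your proof is correct, and the orthogonality half ($G^\prime(z)G^\prime(z)^\top=0$ by block multiplication) is exactly what the paper does. Where you diverge is in establishing non-catastrophicity: the paper uses condition 3 of Theorem~\ref{catcrit} and simply computes the Smith form of $G^\prime(z)$, noting that $G^\prime(z)\sim\bigl(\begin{smallmatrix}[I_k\ 0]&0\\0&[I_{k^\prime}\ 0]\end{smallmatrix}\bigr)\sim[I_{k+k^\prime}\ 0]$ because each block is already equivalent to $[I\ 0]$; this is a two-line argument. You instead use condition 5, showing that every maximal minor of the block-diagonal matrix is either zero (unbalanced column selection) or a product $m_i(z)\tilde{m}_j(z)$ of maximal minors of the two blocks, and then multiplying the two Bezout identities to exhibit $1$ in the ideal. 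Your ``delicate point'' --- the vanishing of unbalanced minors --- does go through: if $s\neq k$ columns are taken from the first block, one of the two diagonal sub-blocks has fewer columns than rows, so its rows are dependent and the determinant vanishes. The paper's route is shorter and avoids any minor bookkeeping; yours has the mild advantage of making explicit that the maximal-minor ideal of the direct sum is the product of the two maximal-minor ideals, which is a reusable observation, but for this proposition the Smith-form computation is the more economical path.
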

\begin{proof}
  We first compute the Smith normal form of $G^\prime(z)$:
$$
G^\prime(z)=
\begin{pmatrix}
  G(z) & 0 \\
  0 & \tilde{G}(z)
\end{pmatrix}
\sim
\begin{pmatrix}
  [I_k \ 0] & 0 \\
  0 & [I_{k^\prime} \ 0]
\end{pmatrix}
\sim
\begin{pmatrix}
  I_{k+k^\prime} & 0
\end{pmatrix},
$$
where the first equivalence follows from $G(z)$ and $\tilde{G}(z)$
generating non-catastrophic codes. It follows that also
$\mathcal{C}^\prime$ is non-catastrophic.  Finally
$$G^\prime(z)G^\prime(z)^\top=\begin{pmatrix}
  G(z) & 0 \\
  0 & \tilde{G}(z)
\end{pmatrix}
\begin{pmatrix}
  G(z) & 0 \\
  0 & \tilde{G}(z)
\end{pmatrix}^\top =
\begin{pmatrix}
  G(z)G(z)^\top & 0 \\
  0 & \tilde{G(}z)\tilde{G}(z)^\top
\end{pmatrix}
= 0$$ and the result follows with Theorem \ref{sdcrit}.
\end{proof}
The following proposition is a generalisation of a result presented in
\cite{15}.
\begin{proposition}
  Let $\mathcal{C}\subset\mathbb F_q[z]^n$ be a self-dual
  convolutional code with generator matrix $G(z)$. Let $r\in\mathbb N$
  and for $i=1,\hdots,n$, let $M_i(z) \in \mathbb F_q[z]^{n\times n}$
  be such that
$$M_i(z) M_i(z)^\top = \lambda_i I_n$$
for $\lambda_i$ invertible in $\mathbb F_q[z]$.  Let $A_1,\hdots,A_r$
be $n\times n$ permutations matrices. Then
$$G_r(z)=G(z)M_1(z)A_1\cdots M_r(z)A_r$$
generates a self-dual code.
\end{proposition}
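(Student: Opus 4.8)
The plan is to verify the two conditions of Theorem \ref{sdcrit}(iii): that $G_r(z)$ generates a non-catastrophic code and that $G_r(z)G_r(z)^\top=0$. Since $\mathcal C$ is self-dual, we already know $G(z)G(z)^\top=0$ and that $G(z)$ is left-prime (non-catastrophic). The essential observation is that right-multiplication by each of the matrices $M_i(z)$ and $A_i$ preserves both properties, so the result follows by induction on $r$ once the single-factor case is understood.

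For the self-orthogonality condition, first note that a permutation matrix $A$ satisfies $AA^\top=I_n$, and more generally every factor $M_i(z)$ satisfies $M_i(z)M_i(z)^\top=\lambda_i I_n$ by hypothesis. I would compute
$$G_r(z)G_r(z)^\top = G(z)M_1(z)A_1\cdots M_r(z)A_r\,A_r^\top M_r(z)^\top\cdots A_1^\top M_1(z)^\top G(z)^\top.$$
Collapsing the factors from the inside out, each $A_iA_i^\top$ contributes $I_n$ and each $M_i(z)M_i(z)^\top$ contributes $\lambda_i I_n$, so the whole middle telescopes to $(\lambda_1\cdots\lambda_r)I_n$, leaving
$$G_r(z)G_r(z)^\top=(\lambda_1\cdots\lambda_r)\,G(z)G(z)^\top=0,$$
using $G(z)G(z)^\top=0$. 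This is the half of the argument I expect to go through cleanly.

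The non-catastrophicity is where I would be most careful, and it is the main obstacle. Multiplying a left-prime $G(z)$ on the right by the permutation matrices $A_i$ is harmless since permutations are unimodular and Theorem \ref{catcrit} (via the Smith form characterization) is stable under unimodular column operations. The delicate point is the $M_i(z)$: the hypothesis $\lambda_i$ invertible in $\mathbb F_q[z]$ means $\lambda_i\in\mathbb F_q^*$ is a nonzero constant, and from $M_i(z)M_i(z)^\top=\lambda_i I_n$ one gets $\det(M_i(z))^2=\lambda_i^n\in\mathbb F_q^*$, so $\det(M_i(z))$ is a nonzero constant and hence $M_i(z)$ is itself unimodular. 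Thus every factor $M_i(z)A_i$ is unimodular, and by the Corollary following Theorem \ref{catcrit} (a non-catastrophic code stays non-catastrophic under unimodular operations) the code generated by $G_r(z)$ is non-catastrophic.

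Having established that $G_r(z)$ generates a non-catastrophic code and that $G_r(z)G_r(z)^\top=0$, the conclusion that $\mathcal C_r=\mathrm{rowspan}(G_r(z))$ is self-dual follows immediately from the equivalence $(i)\Leftrightarrow(iii)$ of Theorem \ref{sdcrit}. The one subtlety worth flagging explicitly in the write-up is the deduction that $\lambda_i$ being ``invertible in $\mathbb F_q[z]$'' forces $M_i(z)$ to be unimodular; making that step precise is what secures the non-catastrophic half, and I would present it carefully rather than treat it as routine.
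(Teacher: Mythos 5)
Your proposal is correct and follows essentially the same route as the paper: both verify Theorem \ref{sdcrit}(iii) by showing each factor $M_i(z)A_i$ is unimodular (so non-catastrophicity is preserved under these right multiplications) and by telescoping $G_r(z)G_r(z)^\top$ down to $(\lambda_1\cdots\lambda_r)G(z)G(z)^\top=0$. The only cosmetic difference is that the paper establishes unimodularity of $M_i(z)$ by exhibiting the explicit inverse $\lambda_i^{-1}M_i(z)^\top$, whereas you deduce it from $\det(M_i(z))^2=\lambda_i^n$ being a nonzero constant; both are valid.
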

\begin{proof}
  The idea of the proof is similar to that of \cite[Prop. 3.1]{15}.  \\
  We know that $G(z)$ has Smith form $[I_k \ 0]$ and $M_i(z)$ is
  unimodular as
$$M_i(z)(\lambda_i^{-1}M_i^\top(z))=\lambda_i^{-1}(M_i(z)M_i^\top(z))=\lambda_i^{-1}\lambda_i I_n=I_n.$$
Therefore also $G(z)M_1(z)$ has Smith form $[I_k \ 0]$.  By
definition, the $A_i$`s are also unimodular, implying that
$$G_r(z)=G(z)M_1(z)A_1\cdots M_r(z)A_r$$
has Smith form $[I_k \ 0]$ as well, which lets us conclude with
Theorem \ref{catcrit} that $G_r(z)$ generates a non-catastrophic code.
Finally due to $A_iA_i^\top=I_n$, we find
\begin{align*}
  G_r(z)G_r(z)^\top & = G(z)M_1(z)A_1 \cdots M_r(z) (A_rA_r^\top)M_r(z)^\top \cdots A_1^\top M_1(z)^\top G(z)^\top \\
                    & = \lambda_1 \lambda_2 \cdots \lambda_r G(z)G(z)^\top  =  0.
\end{align*}
Hence, with Theorem \ref{sdcrit} we can conclude that $G_r(z)$
generates a self-dual convolutional code.
\end{proof}
In Section 2.2, two construction methods for binary self-dual block codes were introduced in the building-up and Harada-Munemasa construction. We will discuss in the following pages, how they are best implemented for convolutional codes and which interesting properties their generalisations possess.\\

At first, we will consider the building-up construction, which we not
only generalize from block to convolutional codes but also from binary
fields to any finite field that has the property that $-1$ is a square
in this field.

\begin{proposition}[\textbf{Generalized Building-up construction}]
  Let $G(z)$ be a generator matrix of a self-dual $(2k,k)$
  convolutional code $\mathcal{C}$ over $\mathbb{F}_q[z]$. Let
  $q=p^l$, where $p=3 \ mod \ 4$ and $l$ being odd does not hold
  simultaneously. Further, let $a^2+b^2=0 \ mod \ p$ for some
  $a,b \in \mathbb{F}_q^*$, $f(z) \in \mathbb{F}_q[z]^{2k}$ with
  $f(z)f(z)^T=-(a^{-1})^2$ and define $y_i(z):=f(z)g_i(z)^T$ for
  $1\leq i \leq k$. Then
$$\tilde{G}(z)=
\begin{pmatrix}
  -a^{-1} & 0 & f(z) \\
  ay_1(z) & by_1(z) & \\
  \vdots & \vdots & G(z) \\
  ay_k(z) & by_k(z) &
\end{pmatrix} \in \mathbb{F}_q[z]^{(k+1) \times (2k+2)}
$$
generates a self-dual $(2k+2,k+1)$ convolutional code
$\tilde{\mathcal{C}}$.
\end{proposition}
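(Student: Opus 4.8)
The plan is to use Theorem \ref{sdcrit}, which tells us that it suffices to verify two things about $\tilde{G}(z)$: that $\tilde{G}(z)\tilde{G}(z)^\top = 0$ (self-orthogonality) and that the code is non-catastrophic. These two conditions together guarantee self-duality, and they are exactly the structural features built into the block-code building-up construction of Theorem \ref{bu}, so the strategy is to mimic that argument while tracking the polynomial entries and the field constants $a,b$ with $a^2+b^2=0$.

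First I would verify $\tilde{G}(z)\tilde{G}(z)^\top = 0$ by a block computation, treating $\tilde{G}(z)$ as having a distinguished first row $r_0(z) = (-a^{-1}, 0, f(z))$ and remaining rows $r_i(z) = (ay_i(z), by_i(z), g_i(z))$ for $1 \leq i \leq k$. The inner product $r_0(z)r_0(z)^\top$ should equal $a^{-2} + 0 + f(z)f(z)^\top = a^{-2} + (-a^{-2}) = 0$ by the hypothesis $f(z)f(z)^\top = -(a^{-1})^2$. For $r_0(z)r_i(z)^\top$ I expect $-a^{-1}\cdot ay_i(z) + 0 + f(z)g_i(z)^\top = -y_i(z) + y_i(z) = 0$ using the definition $y_i(z) = f(z)g_i(z)^\top$. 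Finally $r_i(z)r_j(z)^\top = a^2 y_i(z)y_j(z) + b^2 y_i(z)y_j(z) + g_i(z)g_j(z)^\top = (a^2+b^2)y_i(z)y_j(z) + 0$, which vanishes because $a^2+b^2=0$ and because $\mathcal{C}$ self-dual gives $G(z)G(z)^\top=0$ by Theorem \ref{sdcrit}. So self-orthogonality reduces entirely to the two algebraic identities $f(z)f(z)^\top=-a^{-2}$ and $a^2+b^2=0$, which is precisely why these hypotheses were imposed.

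The harder part will be non-catastrophicity. By Theorem \ref{catcrit} it suffices to show the ideal generated by the $(k+1)\times(k+1)$ minors of $\tilde{G}(z)$ is all of $\mathbb{F}_q[z]$, equivalently that $\tilde{G}(z)$ has Smith form $[I_{k+1}\ 0]$. The plan is to perform unimodular row operations to clear the columns $2,\hdots,2k+1$ against the rows of $G(z)$: since $\mathcal{C}$ is non-catastrophic, $G(z)$ already completes to an identity block in its Smith form, so the lower-right $k\times 2k$ block of $\tilde{G}(z)$ can be reduced. The remaining task is to show that the first row $(-a^{-1},0,f(z))$ contributes a unit in the relevant minor after this reduction; the constant $-a^{-1}$ in the leading position is nonzero (hence a unit in $\mathbb{F}_q[z]$), which should supply the missing pivot and force the gcd of the top minors to be a nonzero constant. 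I would make this precise by selecting the $k$ columns where $G(z)$ realizes its identity block together with the first column, and arguing the resulting minor is a nonzero constant times a $k$-th minor of $G(z)$.

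I would close by invoking Theorem \ref{sdcrit}(iii): having established both $\tilde{G}(z)\tilde{G}(z)^\top=0$ and non-catastrophicity, the code $\tilde{\mathcal{C}}$ is self-dual, and it is a $(2k+2,k+1)$ code by inspection of the dimensions of $\tilde{G}(z)$. The condition that $p\equiv 3 \bmod 4$ with $l$ odd does not hold simultaneously is exactly the condition from Theorem \ref{1/2} guaranteeing that $-1$ is a square in $\mathbb{F}_q$, which is what makes the constants $a,b$ with $a^2+b^2=0$ and the vector $f(z)$ with $f(z)f(z)^\top=-a^{-2}$ exist in the first place; I would note this existence at the outset so the construction is not vacuous. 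The main obstacle I anticipate is the bookkeeping in the minor/Smith-form argument for non-catastrophicity, since the polynomial entries $y_i(z)$ in the first two columns interact with the reduction of $G(z)$, whereas the self-orthogonality check is a short direct computation.
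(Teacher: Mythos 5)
Your proposal is correct and takes essentially the same approach as the paper: the self-orthogonality check is the identical three-case row inner-product computation (using $f(z)f(z)^\top=-a^{-2}$, $y_i(z)=f(z)g_i(z)^\top$, and $a^2+b^2=0$), and non-catastrophicity is established exactly as in the paper by reducing $\tilde{G}(z)$ to Smith form $[I_{k+1}\ 0]$ via the constant pivot $-a^{-1}$ together with the trivial Smith/Hermite form of $G(z)$, then invoking Theorem \ref{sdcrit}.
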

\begin{proof}
  According to Theorem \ref{sdcrit}, we have to show that
  $\tilde{G}(z)$ generates a non-catastrophic code and that
  $\tilde{G}(z)\tilde{G}(z)^\top=0$.  Computing the Smith form of
  $\tilde{G}(z)$ and using that $G(z)$ has column Hermite form
  $[I_k \ 0]$ yields
  \begin{align*}
    \tilde{G}(z)&=
                  \begin{pmatrix}
                    -a^{-1} & 0 & f(z) \\
                    ay_1(z) & by_1(z) & \\
                    \vdots & \vdots & G(z) \\
                    ay_k(z) & by_k(z) &
                  \end{pmatrix}
                                        \sim
                                        \begin{pmatrix}
                                          -a^{-1} & 0 & f(z) \\
                                          0 & by_1(z) & \\
                                          \vdots & \vdots & G(z) \\
                                          0 & by_k(z) &
                                        \end{pmatrix}\\
                &\sim
                  \begin{pmatrix}
                    1 & 0 & 0 \\
                    0 & by_1(z) & \\
                    \vdots & \vdots & G(z) \\
                    0 & by_k(z) &
                  \end{pmatrix}
                                  \sim
                                  [I_{k+1} \ 0].
  \end{align*}
  Applying Theorem \ref{catcrit} we obtain that $\tilde{G}(z)$
  generates a non-catastrophic code.  Let $\tilde{g}_i(z)$ be the
  $i$-th row of $\tilde{G}(z)$ and $g_i(z)$ be the $i$-th row of
  $G(z)$. To see that $\tilde{\mathcal{C}}$ is self-dual, we compute
  \begin{align*}
    \tilde{g}_1(z)\tilde{g}_1(z)^\top&=(-a^{-1})^2+f(z)f(z)^\top=(a^{-1})^2-(a^{-1})^2=0\\
    \tilde{g}_1(z)\tilde{g}_i(z)^\top&=-a^{-1}ay_i(z)+f(z)g_i(z)=-y_i(z)+y_i(z)=0\\
    \tilde{g}_{i+1}(z)\tilde{g}_{j+1}(z)^\top&=a^2y_i(z)y_j(z)+b^2y_i(z)y_j(z)+g_i(z)g_j(z)^\top=g_i(z)g_j(z)^\top=0
  \end{align*}
\end{proof}

\begin{remark}
  If we set $p=2$ in the preceding theorem, one has $a=b=1$ and
  obtains a straightforward generalization of Theorem \ref{bu}.
\end{remark}

\begin{example}
  From Section 5.2, we know that
$$G_1(z)=
\begin{pmatrix}
  1 & 1 & 1 & 1 \\
  0 & 1 & z+1 & z
\end{pmatrix}$$
generates a binary self-dual convolutional code. We apply the generalized building-up construction with $f(z)=(1,z,z^2,z^2+z) \in \mathbb{F}_2[z]^4$.\\
The condition of $f(z)f(z)^\top=a=1$ is fulfilled and
\begin{align*}
  y_1(z) = f(z)(1,1,1,1)^\top = 1, \qquad y_2(z)  = f(z)(0,1,z+1,z)^\top  = z.
\end{align*}
Therefore
$$G_2(z)=
\begin{pmatrix}
  1 & 0 & \horzbar & f(z) & \horzbar & \horzbar \\
  y_1(z) & y_1(z) & 1 & 1 & 1 & 1 \\
  y_2(z) & y_2(z) & 0 & 1 & z+1 & z
\end{pmatrix}
=
\begin{pmatrix}
  1 & 0 & 1 & z & z^2 & z^2+z \\
  1 & 1 & 1 & 1 & 1 & 1 \\
  z & z & 0 & 1 & z+1 & z
\end{pmatrix}$$ generates a binary self-dual convolutional code.
\end{example}
Next we show that Theorem \ref{abu} cannot be generalized to
convolutional codes.
\begin{lemma}\label{nbu}
  Not every binary self-dual convolutional code with free distance
  $d_{free} > 2$ can be constructed with the generalized building-up
  construction (up to permutation of columns).
\end{lemma}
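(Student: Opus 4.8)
The plan is to refute the statement by exhibiting a single binary self-dual $(4,2)$ convolutional code with $d_{free}>2$ that provably cannot arise from the generalized building-up construction, even after column permutations; this is the smallest nontrivial case and already fails, in contrast to the block-code situation of Theorem \ref{abu}. First I would record a reduction: since the construction raises the dimension by one, any building-up realization of a $(4,2)$ code must start from a self-dual $(2,1)$ code, and by Theorem \ref{1/2} the only binary such code has generator $(1\ 1)$. For this base the multiplier is forced, $y_1(z)=f(z)(1,1)^\top=f(z)f(z)^\top=1$, so every binary $(4,2)$ code produced by building-up has a generator matrix of the shape $\left(\begin{smallmatrix}1&0&f_1(z)&f_1(z)+1\\ 1&1&1&1\end{smallmatrix}\right)$.

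The key tool is that the multiset of maximal ($k$-th) minors of a generator matrix is an invariant: left multiplication by a unimodular matrix scales all minors by its determinant, which over $\mathbb{F}_2[z]$ is the unit $1$, and column permutations merely relabel the minors. I would compute the six $2\times2$ minors of the building-up shape above and find them to be $\{1,1,h,h,h+1,h+1\}$ with $h=f_1(z)$; in particular \emph{some} maximal minor is the nonzero constant $1$. Hence any binary self-dual $(4,2)$ code obtainable by building-up, even up to column permutations, has a generator matrix one of whose maximal minors is a unit.

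For the counterexample I would take $\mathcal{C}_0$ generated by $G(z)=\left(\begin{smallmatrix}1&1&1&1\\ 0& z^2+z+1 & z & z^2+1\end{smallmatrix}\right)$. Using the normal form from Section 5.2 (or directly Theorem \ref{sdcrit}), it is self-dual: $G(z)G(z)^\top=0$ since the second row has entries summing to $0$ (apply Lemma \ref{fd}), and it is non-catastrophic because $\gcd(z,z^2+1)=1$, so by Theorem \ref{catcrit} its minors generate $\mathbb{F}_2[z]$. Those minors are $\{z,\,z^2+1,\,z^2+z+1\}$, each occurring twice, and none is constant; by the previous paragraph $\mathcal{C}_0$ is therefore not a building-up code.

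It remains to verify $d_{free}(\mathcal{C}_0)>2$. Since $(1,\dots,1)\in\mathcal{C}_0$ by Lemma \ref{11} and $\mathcal{C}_0=\mathcal{C}_0^\perp$, every codeword $c(z)$ satisfies $\sum_j c_j(z)=0$, so at each power of $z$ an even number of coordinates is nonzero; a hypothetical weight-$2$ codeword must thus equal $v\,z^i$ with $v\in\mathbb{F}_2^4$ of Hamming weight $2$. Writing a general codeword as $a(z)(1,1,1,1)+b(z)(0,z^2+z+1,z,z^2+1)$ and running through the six weight-$2$ patterns for $v$, each case forces one of $z,\ z^2+1,\ z^2+z+1$ to vanish, which is impossible; combined with Lemma \ref{ew} this gives $d_{free}(\mathcal{C}_0)\ge 4$. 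The main obstacle I anticipate is exactly this free-distance step—ruling out a weight-$2$ codeword hidden at a higher power of $z$—which the ``$\sum_j c_j=0$'' reduction handles cleanly; by contrast the minor-invariance argument is routine once the uniqueness of the $(2,1)$ base is noted.
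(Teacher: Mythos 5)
Your proof is correct, and it reaches the paper's conclusion via the same counterexample code (generated by $(1,1,1,1)$ and $(0,z^2+z+1,z,z^2+1)$) but by a genuinely different obstruction. The paper shows directly that no codeword of this code can have one entry equal to $0$ and another equal to $1$ (writing $c(z)=a(z)(1,1,1,1)+b(z)(0,z^2+z+1,z,z^2+1)$, a zero entry forces $b(z)\mid a(z)$, after which every other entry is a non-unit multiple of $b(z)$), so the required first row $(1,0,f(z))$ of a building-up generator matrix can never appear even after column permutation. You instead use the multiset of full-size minors as an invariant of the code up to left-unimodular row operations and column permutations (over $\mathbb{F}_2[z]$ the determinant of a unimodular matrix is $1$, so the minors are genuinely preserved, not just up to units), observe that every binary $(4,2)$ building-up output has minor multiset $\{1,1,h,h,h+1,h+1\}$ containing the unit $1$, and check that the counterexample's multiset $\{z,z,z^2+1,z^2+1,z^2+z+1,z^2+z+1\}$ contains no constant. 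Your invariant argument is more systematic and reusable, while the paper's is more elementary; your reduction to the base code $(1\ 1)$ via Theorem \ref{1/2} and the forced $y_1(z)=1$ is a clean preliminary the paper leaves implicit. The free-distance verifications also differ mildly: the paper does a four-way case analysis on $a(z),b(z)$, whereas you first use $c(z)(1,\hdots,1)^\top=0$ to force every coefficient vector to have even Hamming weight, reducing a putative weight-$2$ codeword to the form $vz^i$ and then to a finite check. One small wording fix: in that finite check, each weight-$2$ pattern forces $b(z)$ \emph{times} one of $z$, $z^2+1$, $z^2+z+1$ to vanish, hence $b(z)=0$ and then $c(z)=0$; the polynomials themselves do not vanish. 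This is cosmetic and does not affect the validity of the argument.
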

\begin{proof}
  By Section 5.2,
$$
G(z)=
\begin{pmatrix}
  0 & z^2+z+1 & z & z^2+1 \\
  1 & 1 & 1 & 1
\end{pmatrix} \in \mathbb{F}_2[z]^{2 \times 4}
$$
generates a binary self-dual convolutional code. \\
We claim that $G(z)$ cannot be constructed with the generalized building-up construction, because every codeword of the corresponding code $\mathcal{C}$ has the property that if one of its entries is equal to $0$, then none of its entries can be equal to $1$.\\
To see this, assume that for some $a(z),b(z)\in \mathbb{F}_2[z]$,
 $$c(z)=a(z)(1,1,1,1)+b(z)(0,z^2+z+1,z,z^2+1)$$ has one entry that is equal to $0$. This implies $a(z)=b(z)d(z)$ for some $d(z)\in\{0,z^2+z+1,z,z^2+1\}$ and hence, $\gcd(a(z),b(z))=a(z)\neq 1$. Consequently, $a(z)+b(z)d(z)\neq 1$ (by the Lemma of Bezout).\\
%
 We conclude that there is no codeword that has a entry equal to $0$
 as well as a entry equal to $1$, meaning that the top-left $(1 \; 0)$
 in the building-up construction can never be achieved by row
 operations or column permutations on $G(z)$. Hence, $\mathcal{C}$
 cannot be constructed with the generalized building-up construction.
 \\
 It remains to show that $d_{free}(\mathcal{C})>2$. As a consequence
 of Lemma \ref{ew}, every binary self-dual code has an even free
 distance and as the free distance is non-zero, we just need to show
 that there is no codeword of weight $2$.
 \\
 Assume conversely that there exists
 $c(z)=(c_1(z),c_2(z),c_3(z),c_4(z))\in \mathcal{C}$ with $wt(c(z))=2$
 and write
 \begin{align*}
   c(z&)=a(z)(1,1,1,1)+b(z)(0,z^2+z+1,z,z^2+1)\\
      &=(a(z),\ a(z) + (z^2+z+1)b(z),\ a(z) + zb(z),\  a(z) +  (z^2+1)b(z)). 
 \end{align*}
 If $a(z)=0$ and $b(z)=0$, then $wt(c(z))=0$ and if $a(z)=0$ and
 $b(z)\neq 0$, then $wt(c(z))\geq 3$, i.e. both of these cases lead to
 a contradiction.

 Hence $a(z) \neq 0$. If $b(z)=0$, $wt(c(z))=4\cdot wt(a(z))\geq 4$,
 i.e. $b(z)\neq 0$.  But this implies that only one of the entries
 $c_2(z), c_3(z), c_4(z)$ can be zero, i.e., $wt(c(z))\geq 3$, again a
 contradiction.
%
%
 We conclude that $\mathcal{C}$ has free distance $d_{free}>2$ and
 cannot be constructed with the generalized building-up construction.
\end{proof}
As a consequence, the full classification of (binary) self-dual
convolutional codes cannot be done by using the generalized
building-up construction. In the following, we will investigate
generalizations of the Harada-Munemasa construction.

\begin{proposition} \label{hmo} Let
  $G(z)\in \mathbb{F}_2[z]^{k\times 2k}$ be a generator matrix of a
  binary self-dual $(2k,k)$ convolutional code and
  $a_i(z)\in \mathbb{F}_2[z]$, for
  $i\in \left\lbrace 1,\hdots,k\right\rbrace $. Then
$$
\tilde{G}(z)=
\begin{pmatrix}
  a_1(z) & a_1(z) & \\
  \vdots & \vdots & G(z) \\
  a_k(z) & a_k(z) &
\end{pmatrix} \in \mathbb{F}_2[z]^{k \times (2k+2)}
$$
generates a binary self-orthogonal convolutional code
$\tilde{\mathcal{C}}$.
\end{proposition}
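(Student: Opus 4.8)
The plan is to invoke the criterion for self-orthogonality established in Theorem~\ref{sdcrit}. Recall from the observation following that theorem that the equivalence $(ii)\Longleftrightarrow(iii)$ holds without any non-catastrophicity hypothesis, so that a code is self-orthogonal if and only if its generator matrix $\tilde{G}(z)$ satisfies $\tilde{G}(z)\tilde{G}(z)^\top=0$. This is the right tool here, since $\tilde{\mathcal{C}}$ is only asserted to be self-orthogonal (not self-dual), and indeed it need not be non-catastrophic. Thus the entire proof reduces to verifying the single matrix identity $\tilde{G}(z)\tilde{G}(z)^\top=0$.

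To carry out this verification, I would write $g_i(z)$ for the $i$-th row of $G(z)$, so that the $i$-th row of $\tilde{G}(z)$ is $(a_i(z),\,a_i(z),\,g_i(z))$. The $(i,j)$ entry of $\tilde{G}(z)\tilde{G}(z)^\top$ is then the inner product of rows $i$ and $j$, namely $a_i(z)a_j(z)+a_i(z)a_j(z)+g_i(z)g_j(z)^\top$. The two duplicated left-hand columns contribute $2\,a_i(z)a_j(z)$, which vanishes identically over $\mathbb{F}_2$; this characteristic-$2$ cancellation is the one genuine idea in the argument. What remains is $g_i(z)g_j(z)^\top$, i.e. the $(i,j)$ entry of $G(z)G(z)^\top$.

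Finally, since $\mathcal{C}$ is self-dual, Theorem~\ref{sdcrit} gives $G(z)G(z)^\top=0$, so every remaining entry is zero and hence $\tilde{G}(z)\tilde{G}(z)^\top=0$. Applying the criterion quoted above, $\tilde{\mathcal{C}}$ is self-orthogonal, as claimed. I expect no real obstacle to overcome: the whole content is the observation that appending two equal columns adds $2\,a_i(z)a_j(z)=0$ to each entry of the Gram matrix, leaving the already-vanishing Gram matrix of the self-dual code $\mathcal{C}$ untouched. The only point worth stating carefully is that self-orthogonality is deduced from $\tilde{G}(z)\tilde{G}(z)^\top=0$ via the non-catastrophicity-free half of Theorem~\ref{sdcrit}, rather than via self-duality, since $\tilde{\mathcal{C}}$ is genuinely only self-orthogonal here.
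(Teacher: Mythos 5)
Your proposal is correct and follows essentially the same route as the paper: the paper's proof likewise computes the inner product of rows $i$ and $j$ of $\tilde{G}(z)$ as $a_i(z)a_j(z)+a_i(z)a_j(z)+g_i(z)g_j(z)^\top=g_i(z)g_j(z)^\top=0$, using characteristic $2$ and the self-duality of $\mathcal{C}$. Your explicit appeal to the non-catastrophicity-free equivalence $(ii)\Longleftrightarrow(iii)$ of Theorem~\ref{sdcrit} is a careful touch the paper leaves implicit, but it is the same argument.
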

\begin{proof}
  Let $g_i(z),g_j(z)$ be any two rows of $G(z)$ and $\tilde{g_i}(z)$
  and $\tilde{g_j}(z)$ be the two corresponding rows of
  $\tilde{G}(z)$. Then
$$\tilde{g_i}(z)\tilde{g_j}(z)^\top=a_i(z)a_j(z)+a_i(z)a_j(z)+g_i(z)g_j(z)^\top=g_i(z)g_j(z)^\top=0.$$
\end{proof}
\begin{defn}\label{com}
  Let $G(z)\in \mathbb{F}_2[z]^{k \times 2k}$ be a generator matrix of
  a binary self-dual $(2k,k)$ convolutional code and for some
  $a_i(z)\in \mathbb{F}_2[z]$
$$\tilde{G}(z)=
\begin{pmatrix}
  a_1(z) & a_1(z) & \\
  \vdots & \vdots & G(z) \\
  a_k(z) & a_k(z) &
\end{pmatrix} \in \mathbb{F}_2[z]^{k \times (2k+2)}
$$
like in Proposition \ref{hmo}. If there exists
$f(z)\in \mathbb{F}_2[z]^{2k+2}$ such that
$$
G_1(z)=
\begin{pmatrix}
  \horzbar & f(z) & \horzbar \\
  a_1(z) & a_1(z) & \\
  \vdots & \vdots & G(z) \\
  a_k(z) & a_k(z) & \\
\end{pmatrix} \in \mathbb{F}_2[z]^{(k+1) \times (2k+2)}
$$
generates a self-dual code, then we say that $G_1(z)$ is a
\textbf{self-dual completion} or just a \textbf{completion} of
$\tilde{G}(z)$.
\end{defn}
Naturally, we want to figure out when and how a self-dual completion
can be found.
\begin{lemma}\label{10}
  Choosing $f(z)=(1,1,0,\hdots,0)$ in Definition \ref{com} leads to a
  generator matrix of a self-dual code. More concretely,
$$
G_1(z)=
\begin{pmatrix}
  1 & 1 & 0 \\
  a_1(z) & a_1(z) & \\
  \vdots & \vdots & G(z) \\
  a_k(z) & a_k(z) & \\
\end{pmatrix} \in \mathbb{F}_2[z]^{(k+1) \times (2k+2)}
$$
is a self-dual completion for any $a_i(z)\in \mathbb{F}_2[z]$.
\end{lemma}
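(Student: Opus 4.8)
The plan is to apply Theorem~\ref{sdcrit}: since $G_1(z)$ is a $(2k+2,k+1)$ generator matrix, it suffices to show that it generates a non-catastrophic code and that $G_1(z)G_1(z)^\top=0$. The orthogonality condition splits into three families of inner products, two of which are essentially already handled. Writing $f=(1,1,0,\hdots,0)$ for the new top row and $\tilde{g}_i(z)=(a_i(z),a_i(z),g_i(z))$ for the remaining rows, I would first check $ff^\top=1+1=0$ in $\mathbb{F}_2$, then $f\,\tilde{g}_i(z)^\top=a_i(z)+a_i(z)=0$, and finally observe that $\tilde{g}_i(z)\tilde{g}_j(z)^\top=0$ is exactly the content of Proposition~\ref{hmo} (the bottom $k$ rows form the self-orthogonal matrix produced there). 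Hence $G_1(z)G_1(z)^\top=0$ follows with no real work.

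The substantive step is non-catastrophicity, which by Theorem~\ref{catcrit} I would verify by reducing $G_1(z)$ under unimodular row and column operations to $[I_{k+1}\ 0]$. First I would add the first column to the second (a unimodular column operation), which over $\mathbb{F}_2$ annihilates the entire second column since $1+1=0$ and $a_i(z)+a_i(z)=0$. This produces a matrix whose only nonzero entry in the first column is the leading $1$, so I can then subtract $a_i(z)$ times the first row from each row below it (unimodular row operations) to clear the $a_i(z)$ entries, leaving a block structure with a solitary $1$ in the top-left corner and the original $G(z)$ occupying the lower-right $k\times 2k$ block, all other entries zero.

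Finally, because $\mathcal{C}$ is self-dual it is non-catastrophic by Lemma~\ref{noncat}, so $G(z)$ has column Hermite form $[I_k\ 0]$ by Theorem~\ref{catcrit}. Applying the corresponding unimodular operations inside the lower-right block and then permuting columns reduces the whole matrix to $[I_{k+1}\ 0]$, which establishes non-catastrophicity. Combining this with $G_1(z)G_1(z)^\top=0$ and invoking Theorem~\ref{sdcrit} yields that $G_1(z)$ generates a self-dual code, so it is a self-dual completion in the sense of Definition~\ref{com}. I expect the only point requiring care to be the bookkeeping in the column reduction, ensuring that the two extra columns and the block $G(z)$ are reduced consistently; the orthogonality computation itself is routine.
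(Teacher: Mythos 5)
Your proposal is correct and follows essentially the same route as the paper: the paper first uses the row operations (adding $a_i(z)$ times the top row) to replace $G_1(z)$ by the matrix with zeros in place of the $a_i(z)$, then checks $G_2(z)G_2(z)^\top=0$ and $G_2(z)\sim[I_{k+1}\ 0]$ and invokes Theorem~\ref{sdcrit}, whereas you verify $G_1(z)G_1(z)^\top=0$ on the original matrix and use the unimodular reduction only for non-catastrophicity. This reordering is immaterial, and both arguments are complete.
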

\begin{proof}
  Obviously, $G_1(z)$ and
$$
G_2(z)=\begin{pmatrix}
  1 & 1 & 0\\
  0 & 0 & \\
  \vdots & \vdots & G(z) \\
  0 & 0 &
\end{pmatrix}
$$
generate the same code, making the choice of the $a_i(z)$
pointless. But still
$$G_2(z)G_2(z)^\top=0$$
and
$$G_2(z) \sim [I_{k+1} \ 0],$$
implying that the code generated by $G_2(z)$ is self-dual.
\end{proof}
Because of the preceding lemma we make the following definition.
\begin{defn}
  Let
$$
G_1(z)=
\begin{pmatrix}
  \horzbar & f(z) & \horzbar \\
  a_1(z) & a_1(z) & \\
  \vdots & \vdots & G(z) \\
  a_k(z) & a_k(z) &
\end{pmatrix} \in \mathbb{F}_2[z]^{(k+1) \times (2k+2)}
$$
be a self-dual completion. Then we say that the completion was
\textbf{trivial} if $G_1(z)$ and
$$
\begin{pmatrix}
  1 & 1 & 0 \\
  a_1(z) & a_1(z) & \\
  \vdots & \vdots & G(z) \\
  a_k(z) & a_k(z) &
\end{pmatrix} \in \mathbb{F}_2[z]^{(k+1) \times (2k+2)}
$$
generate the same code. If they do not, we call the completion
\textbf{non-trivial}.
\end{defn}

\begin{remark}\label{r10}
  An important observation here is that $(1,1,0,\hdots,0)$ can never
  be a linear combination of the last $k$ rows of $G_1(z)$, because
  the rows of $G(z)$ are linearly independent.
\end{remark}

In the following example, we demonstrate how the choice of the
$a_i(z)$ affects the existence of non-trivial self-dual completions.
\begin{example}
  We know that $(1 \; 1)$ is a binary self-dual $(2,1)$ convolutional
  code (actually the only one). We implement the Harada-Munemasa
  construction for $a_1(z)=z$ and show that for this choice of
  $a_1(z)$, there are only trivial self-dual completions.  By
  Proposition \ref{hmo}, $
  \begin{pmatrix}
    z & z & 1 & 1
  \end{pmatrix}
  $ generates a self-orthogonal code. We are now looking for
  $f(z)=(f_1,f_2,f_3,f_4)\in \mathbb{F}_2[z]^4$ such that
$$
G(z)=
\begin{pmatrix}
  f_1 & f_2 & f_3 & f_4 \\
  z & z & 1 & 1
\end{pmatrix}
$$
generates a self-dual code $\mathcal{C}$.  By Lemma \ref{11}, the all
one vector must be part of the code. Therefore, there is
$b(z)=(b_1,b_2) \in \mathbb{F}_2[z]^2$ such that
$$b(z)G(z)=(1,1,1,1).$$
Notice that necessarily $b_1 \neq 0$.  This translates into the
following system of equations over $\mathbb{F}_2[z]$
$$
\begin{array}{ccc}
  b_1f_1+zb_2 & = & 1 \\
  b_1f_2+zb_2 & = & 1 \\
  b_1f_3 + b_2 & = & 1 \\
  b_1f_4 + b_2 & = & 1
\end{array}.
$$
Since $b_1 \neq 0$, by adding the last two equations we find that
$f_4=f_3$ and the last equation is equivalent to
$$b_2=b_1f_4+1.$$
Hence the system simplifies into
$$
\begin{array}{ccc}
  b_1f_1+z(b_1f_4+1) & = & 1 \\
  b_1f_2+z(b_1f_4+1) & = & 1 
\end{array}
$$
and similarly we find $f_1=f_2$ by adding the two equations.
\\
Finally $b_1f_1+z(b_1f_4+1)=1$ is equivalent to
$$b_1(f_1+f_4z)=1+z$$
and we conclude that either $b_1=1$ or $b_1=z+1$.
\\
Let $b_1=1$, then $f_1=1+z+zf_4$ and the generator matrix is given by
$$
G(z)=
\begin{pmatrix}
  1+z+f_4z & 1+z+f_4z & f_4 & f_4 \\
  z & z & 1 & 1
\end{pmatrix}
$$
depending only on $f_4 \in \mathbb{F}_2[z]$.
Computing its Smith form, one obtains
\begin{align*}
  \begin{pmatrix}
    1+z+f_4z & 1+z+f_4z & f_4 & f_4 \\
    z & z & 1 & 1
  \end{pmatrix}
      &\sim 
        \begin{pmatrix}
          1+z & 1+z & 0 & 0 \\
          z & z & 1 & 1
        \end{pmatrix}\\
             &\sim
               \begin{pmatrix}
                 1+z & 0 & 0 & 0 \\
                 0 & 1 & 0 & 0
               \end{pmatrix}
\end{align*}
Since the Smith form of $G(z)$ is not $[I_k \ 0]$, $G(z)$ generates a
catastrophic and therefore not self-dual code.
\\
Let $b_1=z+1$, then $f_1=1+f_4z$ and
$$
G(z)=
\begin{pmatrix}
  1+f_4z & 1+f_4z & f_4 & f_4 \\
  z & z & 1 & 1
\end{pmatrix}.
$$
But $G(z)$ and
$$
\begin{pmatrix}
  1 & 1 & 0 & 0 \\
  z & z & 1 & 1
\end{pmatrix}
$$ 
generate the same code, implying that $G(z)$ was a trivial self-dual
completion.
\\
We conclude that for this choice of $a_1(z)$ there are only trivial
self-dual completions.
\\
Now let $a_1(z)=1$. Then $
\begin{pmatrix}
  1 & 1 & 1 & 1
\end{pmatrix}
$ generates a self-orthogonal code and a non-trivial self-dual
completion exists in the form of
$$
\begin{pmatrix}
  0 & z^2+z+1 & z & z^2+1 \\
  1 & 1 & 1 & 1
\end{pmatrix}.
$$
This generates a self-dual code by Section 5.2.
\end{example}
Therefore, the existence of non-trivial completions depends on the
choice of $a_i(z)$. This naturally steers to the search for viable
conditions on the $a_i(z)$ that admit non-trivial self-dual
completions.
\begin{theorem}\label{main}
  Let
$$
\tilde{G}(z)=
\begin{pmatrix}
  a_1(z) & a_1(z) & \\
  \vdots & \vdots & G(z) \\
  a_k(z) & a_k(z) &
\end{pmatrix} \in \mathbb{F}_2[z]^{k \times (2k+2)}
$$
Then a non-trivial self-dual completion exists if and only if
$$(1,\hdots,1) \in rowspan(\tilde{G}(z)).$$
\end{theorem}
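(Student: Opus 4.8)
The plan is to use Theorem \ref{sdcrit} to decide which extra rows $f(z)=(f_1(z),f_2(z),h(z))$, with $h(z)\in\mathbb F_2[z]^{2k}$, turn $\tilde G(z)$ into a self-dual generator matrix, and then to sort the resulting completions into trivial and non-trivial ones. Writing $s(z):=f_1(z)+f_2(z)$, the condition $G_1(z)G_1(z)^\top=0$ splits into orthogonality of $f(z)$ to each row $\tilde g_i(z)=(a_i(z),a_i(z),g_i(z))$, which collapses to $h(z)g_i(z)^\top=s(z)a_i(z)$ for all $i$, together with self-orthogonality of $f(z)$, which by the Freshman-dream Lemma \ref{fd} just says that the sum of the entries of $f(z)$ is zero; the relations among the last $k$ rows are already handled by Proposition \ref{hmo}. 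Throughout I would fix the unique $m(z)\in\mathbb F_2[z]^k$ with $m(z)G(z)=(1,\hdots,1)$ (it exists by Lemma \ref{11} applied to $\mathcal C$ and is unique since $G(z)$ has full row rank), and note that the hypothesis $(1,\hdots,1)\in\mathrm{rowspan}(\tilde G(z))$ is equivalent to the scalar identity $\sum_i m_i(z)a_i(z)=1$: the last $2k$ coordinates force the coefficient vector to be $m(z)$, and the first two coordinates then read off $\sum_i m_i a_i$.

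For the implication assuming $(1,\hdots,1)\in\mathrm{rowspan}(\tilde G(z))$, I would exhibit a non-trivial completion directly. Take $f(z)=(1,0,h(z))$ where $h(z)$ is a polynomial solution of $h(z)g_i(z)^\top=a_i(z)$ for all $i$; such an $h(z)$ exists because $G(z)$ is non-catastrophic and hence has a polynomial right inverse by Theorem \ref{catcrit}. The cross-orthogonalities then hold by construction, and self-orthogonality reduces to $1+\sum_i m_i(z)a_i(z)=0$, which is exactly the hypothesis. Non-catastrophicity of $G_1(z)$ I would check by a short unimodular column reduction: adding the first column to the second turns the second column into $(1,0,\hdots,0)^\top$, and using this pivot to clear the first row reduces the Smith-form computation to that of $[\,\mathbf a(z)\mid G(z)\,]$, which is $[I_k\ 0]$ since its maximal minors already contain those of $G(z)$. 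Finally the completion is non-trivial because $f(z)-(1,1,0,\hdots,0)=(0,1,h(z))$ has unequal first two coordinates and so cannot lie in $\mathrm{rowspan}(\tilde G(z))$, every element of which has equal first two coordinates.

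For the converse I would start from a non-trivial self-dual completion with first row $f(z)=(f_1,f_2,h)$ and $s=f_1+f_2$, and show $\sum_i m_i a_i=1$. If $s\neq0$, the self-orthogonality condition computes, using $hg_i^\top=sa_i$ and $h(1,\hdots,1)^\top=\sum_i m_i(hg_i^\top)=s\sum_i m_i a_i$, to $s\,(1+\sum_i m_i a_i)=0$; since $\mathbb F_2[z]$ is a domain and $s\neq0$, this gives $\sum_i m_i a_i=1$. The real work is to exclude $s=0$: here $hg_i^\top=0$ for all $i$ puts $h$ in $\ker G(z)=\mathcal C^\perp=\mathcal C$, so a unimodular row operation subtracting a suitable combination of the rows of $\tilde G(z)$ brings the first row to $(d,d,0,\hdots,0)$ for some $d(z)$. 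Every maximal minor of the resulting matrix is then divisible by $d(z)$, so non-catastrophicity (Theorem \ref{catcrit}) forces $d(z)$ to be a unit, i.e.\ $d(z)=1$; but then $f(z)\equiv(1,1,0,\hdots,0)$ modulo $\mathrm{rowspan}(\tilde G(z))$, making the completion trivial and contradicting our assumption.

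I expect this exclusion of $s=0$ to be the main obstacle, as it is the only point where non-catastrophicity must be combined with the precise meaning of triviality, and where the divisibility-of-minors argument converts the non-catastrophic hypothesis into the rigid conclusion $d(z)=1$. The remaining steps—the orthogonality bookkeeping and the column reduction establishing non-catastrophicity in the construction—are routine once the case split by $s(z)$ and the reduction to $\sum_i m_i a_i=1$ are in place.
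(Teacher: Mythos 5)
Your proposal is correct, but it reaches both implications by a different route than the paper. For existence (given $(1,\hdots,1)\in rowspan(\tilde G(z))$), the paper works abstractly with the two-dimensional quotient $\bigslant{\tilde{\mathcal{C}}^\perp}{\tilde{\mathcal{C}}}$, picks a basis vector $f(z)$ with coprime entries alongside $(1,1,0,\hdots,0)$, and verifies self-duality and non-triviality by a dimension count on $\mathcal{C}_1^\perp$; you instead build $f(z)=(1,0,h(z))$ explicitly from a polynomial right inverse of $G(z)$ (available by Theorem \ref{catcrit} since $\mathcal{C}$ is non-catastrophic), which makes the verification of $G_1(z)G_1(z)^\top=0$ and of non-triviality (unequal first two coordinates) a direct computation, and even yields a concrete recipe for the completion. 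For necessity, the paper argues by contrapositive, invoking Lemma \ref{11} on the completed code $\mathcal{C}_1$ to insert the all-ones row and then reducing it to $(b(z),b(z),0,\hdots,0)$ with $b(z)=1$ forced by non-catastrophicity; your case split on $s(z)=f_1(z)+f_2(z)$ reaches the same endpoint differently: the case $s=0$ reproduces the paper's reduction (via $h(z)\in\ker G(z)=\mathcal{C}$ and divisibility of all maximal minors by $d(z)$), while the case $s\neq 0$ extracts the identity $\sum_i m_i(z)a_i(z)=1$ directly from the Freshman-dream self-orthogonality condition, with no contrapositive needed. Your translation of the hypothesis into $\sum_i m_i(z)a_i(z)=1$ is exactly the content of the paper's Remark \ref{rcp}, stated there only after the theorem. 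The one thing the paper's argument buys that yours does not is the explicit identification of $\bigslant{\tilde{\mathcal{C}}^\perp}{\tilde{\mathcal{C}}}$ as two-dimensional with basis $\{(1,1,0,\hdots,0),f(z)\}$, which the paper then reuses in the two-step search procedure and the lemma following the theorem; conversely, your argument is more elementary and fully constructive.
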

\begin{proof}
  Let $\tilde{\mathcal{C}}=rowspan(\tilde{G}(z))$,
$$
G_1(z)=
\begin{pmatrix}
  \horzbar & f(z) & \horzbar \\
  a_1(z) & a_1(z) & \\
  \vdots & \vdots & G(z) \\
  a_k(z) & a_k(z) &
\end{pmatrix}
$$
be a self-dual completion for some $f(z)\in\mathbb{F}_2[z]^{2k+2}$ and
$\mathcal{C}_1=rowspan(G_1(z))$.

Obviously, all self-dual completions of $\tilde{\mathcal{C}}$ can be
found by looking at all possible new rows
$f(z) \in \bigslant{\tilde{\mathcal{C}}^\perp}{\tilde{\mathcal{C}}}$
and checking the resulting $\mathcal{C}_1$ for self-duality.

The main fact we will use to prove this theorem is that if
$(1,\hdots,1)\in \tilde{\mathcal{C}},$ then every possible new row
$f(z) \in \tilde{\mathcal{C}}^\perp$ that we want to add to
$\tilde{G}(z)$, fulfils
$$f(z)f(z)^\top=(f(z)(1,\hdots,1)^\top)^2=0,$$
as $(1,\hdots,1)\in \tilde{\mathcal{C}}$, which implies that
each pair of rows of $G_1(z)$ is orthogonal.
\\
$(\Longrightarrow)$ We will prove that
$(1,\hdots,1)\not \in \tilde{\mathcal{C}}$ implies that there exists
exactly one self-dual completion (up to equivalent generator matrices)
and it is the trivial one.
\\
If $\mathcal{C}_1$ with generator matrix $G_1(z)$ is assumed to be
self-dual (and hence, also non-catastrophic), then
$(1,\hdots,1)\in\mathcal{C}_1$. Thus,
$\begin{pmatrix} 1 & \cdots & 1\\ & \tilde{G}(z) &
\end{pmatrix}$ is a generator matrix for $\mathcal{C}_1$. As
$\mathcal{C}$ is self-dual, $(1,\hdots,1)\in rowspan(G(z))$ and
therefore, $\mathcal{C}_1$ has also a generator matrix of the form
$$\begin{pmatrix} b(z) & b(z) & 0 & \cdots & 0\\ a_1(z) & a_1(z) & \\
  \vdots & \vdots & &G(z)\\ a_k(z) & a_k(z) &
\end{pmatrix},$$
for some $b(z)\in \mathbb{F}_2[z]$. As $\mathcal{C}_1$ is non-catastrophic $b(z)=1$ and hence, the self-dual completion is trivial.\\
$(\Longleftarrow)$ We have
$dim(\bigslant{\tilde{\mathcal{C}}^\perp}{\tilde{\mathcal{C}}})=n-k-k=2$. Let
$$\bigslant{\tilde{\mathcal{C}}^\perp}{\tilde{\mathcal{C}}}=<(1,1,0,\hdots),f(z)>$$
for some $f(z)=(f_1(z),\hdots,f_{2k+2}) \in
\mathbb{F}_2[z]^{2k+2}$. As $\tilde{\mathcal{C}}^{\perp}$ is
non-catstrophic and $f(z)$ can be completed to a basis of this code,
$\gcd(f_1(z),\hdots,f_{2k+2}(z))=1$. We want to show that the code
$\mathcal{C}_1$ generated by
$$G_1(z)=
\begin{pmatrix}
  f(z) \\
  \tilde{G}(z)
\end{pmatrix}
$$
is a non-trivial self-dual completion.
\\
First, we observe that $G_1(z)G_1(z)^{\top}=0$ as $\tilde{\mathcal{C}}$ is self-orthogonal, $f(z)\in\tilde{\mathcal{C}}^{\perp}$ and $(1,\hdots,1)\in\tilde{\mathcal{C}}$ implies $f(z)f(z)^\top=0$.\\
Next, we will show that $\mathcal{C}_1$ is non-catastrophic, which
then shows that $\mathcal{C}_1$ is self-dual. Applying column
operations, we obtain
$$G_1(z)\sim\begin{pmatrix}
  \tilde{f}_1(z) & \cdots & \tilde{f}_{k+2}(z) & \tilde{f}_{k+3}(z) &
  \cdots & \tilde{f}_{2k+2}(z) \\ & 0 & & & I_k\end{pmatrix},$$ where
$\tilde{f}_1(z), \hdots, \tilde{f}_{2k+2}(z)\in\mathbb F_2[z]$ with
$$\gcd(\tilde{f}_1(z), \hdots,
\tilde{f}_{2k+2}(z))=\gcd(f_1(z),\hdots,f_{2k+2}(z))=1.$$
Consequently, $$G_1(z)\sim\begin{pmatrix}
  1 & 0  & 0\\
  0 & I_k & 0\end{pmatrix}$$ and $\mathcal{C}_1$ is non-catastrophic.
\\
Finally, we will show that
$(1,1,0,\hdots,0) \not \in \mathcal{C}_1^\perp$.
\\
Assume conversely that $(1,1,0,\hdots,0)\in \mathcal{C}_1^\perp$. As
$(1,\hdots,1)\in\tilde{\mathcal{C}}$ implies
$f(z)\in\mathcal{C}_1^{\perp}$ and one has
$\tilde{\mathcal{C}}\subset\mathcal{C}_1=\mathcal{C}_1^{\perp}$, one
gets that
$\mathcal{C}_1^{\perp}=\tilde{\mathcal{C}}^{\perp}$. However,
$dim(\mathcal{C}_1^{\perp})=k+1\neq
k+2=dim(\tilde{\mathcal{C}}^{\perp})$, a contradiction.

Conclusively, we showed $(1,1,0,\hdots,0)\not \in
\mathcal{C}_1^\perp$ and
$\mathcal{C}_1=\mathcal{C}_1^\perp$, i.e. we have a non-trivial
completion to a self-dual convolutional code.
\end{proof}
Thus, we can find non-trivial completions with the following two
steps:
\begin{itemize}
\item[1.] Make sure that $(1,\hdots,1)\in
  rowspan(\tilde{G}(z))=\tilde{\mathcal{C}}$, which means we must
  establish that the system
  \begin{equation}\label{1is}
    b(z) \tilde{G}(z)=(1,\hdots,1)\in \mathbb{F}_2[z]^{2k+2},
  \end{equation}
  where $b(z)=(b_1,\hdots,b_k)\in
  \mathbb{F}_2[z]^k$, has a solution for $b(z)$.
\item[2.] Find a vector $f(z) \in
  \tilde{\mathcal{C}}^\perp$ that forms a basis of
  $\bigslant{\tilde{\mathcal{C}}^\perp}{\tilde{\mathcal{C}}}$ with
  $(1,1,0,\hdots,0)$ and then
$$G_1(z)=
\begin{pmatrix}
  \horzbar & f(z) & \horzbar \\
  a_1(z) & a_1(z) & \\
  \vdots & \vdots & G(z) \\
  a_k(z) & a_k(z) &
\end{pmatrix}$$ generates a binary self-dual convolutional code.
\end{itemize}
We will further investigate these two steps in the following.
\begin{remark}\label{rcp}
  A way to make sure that \eqref{1is} has a solution is by finding a
  solution to
$$b(z)G(z)=(1,\hdots,1) \in \mathbb{F}_2[z]^{2k},$$
which always exists since
$G(z)$ generates a binary self-dual code (see Lemma \ref{11}), and
then choosing the $a_i(z)\in \mathbb{F}_2[z]$ such that
$$\sum_{i=1}^k a_i(z)b_i(z)=1,$$ 
which is possible since
$b(z)G(z)=1$ implies
$gcd(b_1(z),\hdots,b_k(z))=1$. In particular,
$a_i(z)=g_i(z)$ for $i=1,\hdots,k$ yields such a solution.
\end{remark}
The following lemma shows, how to do the second step, i.e. how to find
a suitable $f(z)$.
\begin{lemma}
  Let $f(z)=(f_1(z),\hdots,f_{2k+2}(z)) \in
  \bigslant{\tilde{\mathcal{C}}^\perp}{<\tilde{\mathcal{C}},(1,1,0,\hdots,0)>}$
  and
$$gcd(f_1(z),\hdots,f_{2k+2}(z))=1,$$
then
$$<f(z),(1,1,0,\hdots,0)>=\bigslant{\tilde{\mathcal{C}}^\perp}{\tilde{\mathcal{C}}}.$$
\end{lemma}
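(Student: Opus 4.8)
The goal is to show that the classes of $(1,1,0,\hdots,0)$ and of $f(z)$ constitute a basis of the quotient $\tilde{\mathcal C}^\perp/\tilde{\mathcal C}$, which by the computation in the proof of Theorem \ref{main} is a module of rank $(2k+2)-k-k=2$. The plan is to check three things in order: that both vectors really lie in $\tilde{\mathcal C}^\perp$, that their classes are independent, and that two such independent classes already fill the rank-two quotient. The first point is immediate: each row of $\tilde G(z)$ has the shape $(a_i(z),a_i(z),\ast)$, so $(1,1,0,\hdots,0)$ is orthogonal to it and hence lies in $\tilde{\mathcal C}^\perp$, while $f(z)\in\tilde{\mathcal C}^\perp$ is part of the hypothesis. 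By Remark \ref{r10} the class of $(1,1,0,\hdots,0)$ is nonzero.

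For the independence I would bring in the submodule $\mathcal D:=\langle\tilde{\mathcal C},(1,1,0,\hdots,0)\rangle$. This is exactly the code generated by the matrix of Lemma \ref{10}, hence self-dual, hence non-catastrophic by Lemma \ref{noncat}; by Theorem \ref{catcrit} it therefore has Smith form $[I_{k+1}\ 0]$, i.e. it is a direct summand of $\mathbb F_2[z]^{2k+2}$ and in particular a saturated submodule. The hypothesis $f(z)\notin\langle\tilde{\mathcal C},(1,1,0,\hdots,0)\rangle$ says $f(z)\notin\mathcal D$, and saturation upgrades this to $f(z)$ lying outside the $\mathbb F_2(z)$-span of $\mathcal D$; consequently no relation $\alpha(z)f(z)+\beta(z)(1,1,0,\hdots,0)\in\tilde{\mathcal C}$ with $(\alpha,\beta)\neq 0$ can hold, so the two classes are $\mathbb F_2[z]$-independent modulo $\tilde{\mathcal C}$. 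Two independent classes in the rank-two quotient span a full-rank submodule, and the coprimality $\gcd(f_1(z),\hdots,f_{2k+2}(z))=1$ is what promotes $f(z)$ to a primitive vector, so that $\{(1,1,0,\hdots,0),f(z)\}$ becomes a genuine generating pair and not merely a full-rank sublattice; this is the same primitivity that, fed into the reduction of $\left[\begin{smallmatrix} f(z)\\ \tilde G(z)\end{smallmatrix}\right]$ performed in the $(\Longleftarrow)$ part of Theorem \ref{main}, yields Smith form $[I_{k+1}\ 0]$ and hence a non-catastrophic self-dual completion.

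The step I expect to be the real obstacle is precisely the passage from rational independence to generation over $\mathbb F_2[z]$: two vectors that are independent over the fraction field span the rank-two quotient only up to finite index, and one must exclude that they cut out a proper submodule. This is where the hypothesis $\gcd(f_i(z))=1$ is indispensable and must be used jointly with the orthogonality $f(z)\in\tilde{\mathcal C}^\perp$ and with the saturation of $\mathcal D$ --- note that replacing $f(z)$ by $p(z)f(z)$ for a non-unit $p(z)$ still satisfies $p(z)f(z)\in\tilde{\mathcal C}^\perp\setminus\mathcal D$ yet generates only an index-$p(z)$ submodule, which shows the gcd condition cannot be dropped. I would therefore make the final step concrete by a Smith-form / maximal-minor computation on a generator matrix built from $f(z)$, $(1,1,0,\hdots,0)$ and $\tilde G(z)$, reading off coprimality of the minors directly from $\gcd(f_i(z))=1$, rather than arguing abstractly.
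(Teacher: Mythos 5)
Your route is genuinely different from the paper's. The paper picks some $h(z)$ whose class, together with that of $(1,1,0,\hdots,0)$, generates $\bigslant{\tilde{\mathcal{C}}^\perp}{\tilde{\mathcal{C}}}$, notes that the quotient $\bigslant{\tilde{\mathcal{C}}^\perp}{\langle\tilde{\mathcal{C}},(1,1,0,\hdots,0)\rangle}$ is then cyclic generated by the class of $h(z)$, so that $f(z)$ is ``a multiple of $h(z)$'', and invokes the gcd condition to force that multiple to be a unit. Your preliminary steps (rank count $2$ for the quotient, saturation of $\mathcal D:=\langle\tilde{\mathcal{C}},(1,1,0,\hdots,0)\rangle$ via Lemma \ref{10} and non-catastrophicity, independence of the two classes using Remark \ref{r10}) are all correct and are actually spelled out more carefully than in the paper.

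However, the step you yourself flag as the real obstacle --- promoting the full-rank pair to a generating pair using $\gcd(f_1(z),\hdots,f_{2k+2}(z))=1$ --- cannot be carried out, because the implication is false. Primitivity of $f(z)$ as a vector of $\mathbb F_2[z]^{2k+2}$ is not the same as primitivity of its class in $\bigslant{\tilde{\mathcal{C}}^\perp}{\mathcal D}$: writing $f(z)=d(z)+p(z)h(z)$ with $d(z)\in\mathcal D$ and $[h(z)]$ a generator of that quotient, a unit entry of $d(z)$ already forces $\gcd(f_i)=1$ no matter what $p(z)$ is. Concretely, take $k=1$, $G(z)=(1\ 1)$, $a_1(z)=1$, so $\tilde G(z)=(1,1,1,1)$ and $\mathcal D=\{(\alpha,\alpha,\beta,\beta)\}$; then $f(z)=(1,1+z,z,0)=(1,1,0,0)+z\,(0,1,1,0)$ lies in $\tilde{\mathcal{C}}^\perp\setminus\mathcal D$ and has coprime entries, yet $\langle f(z),(1,1,0,0)\rangle$ misses $(0,1,1,0)$ and is a proper (index-$z$) submodule of $\bigslant{\tilde{\mathcal{C}}^\perp}{\tilde{\mathcal{C}}}$. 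Accordingly, the maximal-minor computation you propose for $\bigl[\begin{smallmatrix} f(z)\\ (1,1,0,\hdots,0)\\ \tilde G(z)\end{smallmatrix}\bigr]$ yields minors with common factor $p(z)$ (here all four $3\times 3$ minors equal $z$), so the coprimality you need does not follow from $\gcd(f_i)=1$ alone. You are not missing an argument that the paper supplies: the paper's own proof founders at the same point, silently reading ``multiple of $h(z)$ modulo $\mathcal D$'' as ``polynomial multiple of the vector $h(z)$''. To make the lemma true one must strengthen the hypothesis to primitivity of the \emph{class} of $f(z)$ in the quotient (equivalently, coprimality of the full-size minors of the stacked matrix), not merely coprimality of the entries of $f(z)$.
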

\begin{proof}
  Assume conversely that there exists $h(z)\in
  \bigslant{\tilde{\mathcal{C}}^\perp}{\tilde{\mathcal{C}}}$ such that
  $h(z)\neq f(z)$ and
$$<h(z),(1,1,0,\hdots,0)>=\bigslant{\tilde{\mathcal{C}}^\perp}{\tilde{\mathcal{C}}}.$$
As $f(z) \in
\bigslant{\tilde{\mathcal{C}}^\perp}{<\tilde{\mathcal{C}},(1,1,0,\hdots,0)>}=<h(z)>$,
$f(z)$ must be a multiple of $h(z)$. But since the entries of
$f(z)$ have no common non-trivial divisor,
$f(z)=h(z)$ follows immediately.
\end{proof}
\begin{example}
  Let
$$
G(z)=
\begin{pmatrix}
  0 & z^2+z+1 & z & z^2+1 \\
  1 & 1 & 1 & 1
\end{pmatrix} \in \mathbb{F}_2[z]^{2 \times 4}.
$$
We know by Section 5.2, that
$G(z)$ generates a self-dual convolutional code.
\\
By inspection,
$$b(z)G(z)=(1,1,1,1)$$
holds exactly for $b(z)=(0,1)$. Hence we need
$$0\cdot a_1(z)+1\cdot a_2(z)=1$$
or equivalently $a_2(z)=1$ and
$a_1(z)$ may be chosen arbitrarily. So let
$a_1(z)=z^2+1$ and $a_2(z)=1$, then
$$
\tilde{G}(z) =
\begin{pmatrix}
  z^2+1 & z^2+1 & 0 & z^2+z+1 & z & z^2+1 \\
  1 & 1 & 1 & 1 & 1 & 1
\end{pmatrix}.
$$
We are now looking for $f(z)\in
\bigslant{\tilde{\mathcal{C}}^\perp}{<\tilde{\mathcal{C}},(1,1,0,\hdots,0)>}$. If
$f_1(z) \neq f_2(z)$, this guarantees $f(z) \not \in
<\tilde{\mathcal{C}},(1,1,0,0,0,0)>$.
\\
Further, we need
$f(z)$ to be orthogonal to both rows of $\tilde{G}(z)$, i.e.,
$$
\begin{array}{ccc}
  f(z)(z^2+1,z^2+1,0,z^2+z+1,z,z^2+1)^\top & = & 0 \\
  f(z)(1,1,1,1,1,1)^\top & =& 0
\end{array}
$$
and see that this holds for $f(z)=(0,1,0,0,0,1).$
One obtains that
$$
G_1(z) =
\begin{pmatrix}
  0 & 1 & 0 & 0 & 0 & 1 \\
  z^2+1 & z^2+1 & 0 & z^2+z+1 & z & z^2+1 \\
  1 & 1 & 1 & 1 & 1 & 1
\end{pmatrix}
$$
generates a binary self-dual convolutional code.
\end{example}
An immediate consequence of Remark \ref{rcp} is that there must be a
linear combination of the $a_i(z)$ that is equal to
$1$. But this is not possible, if they have a common non-trivial
divisor, giving us the following corollary.
\begin{corollary}
  Let
$$
G_1(z)=
\begin{pmatrix}
  \horzbar & f(z) & \horzbar \\
  a_1(z) & a_1(z) & \\
  \vdots & \vdots & G(z) \\
  a_k(z) & a_k(z) &
\end{pmatrix} \in \mathbb{F}_2[z]^{(k+1) \times (2k+2)}
$$
be a self-dual completion, where $gcd(a_1(z),\hdots,a_k(z))\neq
1.$ Then, the completion was trivial.
\end{corollary}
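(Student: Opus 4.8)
The plan is to deduce the statement directly from Theorem~\ref{main}, which says that a non-trivial self-dual completion of $\tilde{G}(z)$ exists if and only if $(1,\hdots,1)\in rowspan(\tilde{G}(z))$. First I would argue by contradiction: assume that $G_1(z)$ is a non-trivial self-dual completion even though $gcd(a_1(z),\hdots,a_k(z))\neq 1$. By Theorem~\ref{main} this forces $(1,\hdots,1)\in rowspan(\tilde{G}(z))$, so there is a row vector $b(z)=(b_1(z),\hdots,b_k(z))\in\mathbb{F}_2[z]^k$ with $b(z)\tilde{G}(z)=(1,\hdots,1)$.

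Next I would read off what this membership says in the first column of $\tilde{G}(z)$. Both the first and second columns of $\tilde{G}(z)$ equal $(a_1(z),\hdots,a_k(z))^\top$, so comparing first coordinates in $b(z)\tilde{G}(z)=(1,\hdots,1)$ gives the scalar equation
$$\sum_{i=1}^k b_i(z)a_i(z)=1.$$
This is exactly the consequence of Remark~\ref{rcp} recorded in the paragraph preceding the statement: realising the all-ones vector in the row span forces some $\mathbb{F}_2[z]$-linear combination of the $a_i(z)$ to equal $1$.

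Then I would extract the contradiction from the gcd hypothesis. Writing $d(z):=gcd(a_1(z),\hdots,a_k(z))$ and assuming $d(z)$ has degree $\geq 1$, one has $d(z)\mid a_i(z)$ for every $i$, hence $d(z)\mid \sum_{i=1}^k b_i(z)a_i(z)=1$; but a polynomial of positive degree cannot divide $1$ in $\mathbb{F}_2[z]$, a contradiction. Therefore no such $b(z)$ exists, so $(1,\hdots,1)\notin rowspan(\tilde{G}(z))$, and Theorem~\ref{main} then guarantees that every self-dual completion of $\tilde{G}(z)$ is trivial.

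Since this is essentially a one-line consequence of Theorem~\ref{main}, I expect no genuine obstacle; the only points needing care are correctly extracting the first-coordinate equation from the doubled first columns of $\tilde{G}(z)$, and treating the degenerate case in which all $a_i(z)=0$ (so that $gcd=0\neq 1$), where the displayed left-hand side is identically $0$ and hence again cannot equal $1$, yielding the same conclusion.
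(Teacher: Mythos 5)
Your proposal is correct and follows essentially the same route as the paper, which presents this corollary as an immediate consequence of Theorem \ref{main} and Remark \ref{rcp}: a non-trivial completion forces $(1,\hdots,1)\in rowspan(\tilde{G}(z))$, whose first coordinate yields $\sum_i b_i(z)a_i(z)=1$, impossible when the $a_i(z)$ share a non-unit common divisor. Your explicit handling of the all-zero case is a minor extra care the paper leaves implicit.
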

The converse is not true in general as can be seen in the next
example.  Furthermore, we illustrate in the upcoming example that a
good choice of the
$a_i(z)$ (``good choice" meaning that a non-trivial completion exists)
differs for equivalent generator matrices.
\begin{example}
  By Theorem \ref{main}, if we chose $a_1=a_2=1$ for
$$\begin{pmatrix}
  1 & 1 & 0 & 0 \\
  0 & 0 & 1 & 1 \\
\end{pmatrix} \in \mathbb{F}_2[z]^{2 \times 4},
$$
to obtain
$$
\tilde{G}(z)=\begin{pmatrix}
  1 & 1 & 1 & 1 & 0 & 0 \\
  1 & 1 & 0 & 0 & 1 & 1 \\
\end{pmatrix},
$$
then $\tilde{G}(z)$ can only be trivially completed as $(1,1,1,1,1,1)
\not \in
rowspan(\tilde{G}(z))$. Now, we claim that another generator matrix of
the same code, namely
$$G(z)=
\begin{pmatrix}
  1 & 1 & 1 & 1 \\
  0 & 0 & 1 & 1 \\
\end{pmatrix} \in \mathbb{F}_2[z]^{2 \times 4}
$$
has a non-trivial completion for $a_1(z)=a_2(z)=1$ in
$$G_1(z)=
\begin{pmatrix}
  h(z) & h(z)+1 & 0 & 0 & 0 & 1 \\
  1 & 1 & 1 & 1 & 1 & 1 \\
  1 & 1 & 0 & 0 & 1 & 1
\end{pmatrix},
$$
where $h(z)\in \mathbb{F}_2[z]$.
$G_1(z)$ generates a non-catastrophic code, since
$$
det\left[
  \begin{pmatrix}
    h(z) & h(z)+1 & 0 \\
    1 & 1 & 1 \\
    1 & 1 & 0
  \end{pmatrix}\right]=h(z)+1+h(z)=1.
$$
Moreover, $G_1(z)G_1(z)^\top=0$ by inspection. Hence,
$G_1(z)$ generates a self-dual code by Theorem \ref{sdcrit}.
\\
So the only thing left to show is non-triviality.
\\
Assume by contradiction that there exists $b(z)=(b_1,b_2,b_3)\in
\mathbb{F}_2[z]^3$ such that
$$b(z)G_1(z)=(1,1,0,0,0,0),$$
or rather, in terms of a system of equations,
$$
\begin{array}{ccc}
  b_1h+b_2+b_3 & = & 1 \\
  b_1(h+1)+b_2+b_3 & = & 1 \\
  b_2 & = & 0\\
  b_2 & = & 0 \\
  b_2+b_3 & = & 0\\
  b_1+b_2+b_3 & = & 0 
\end{array}.
$$
But the last three equations imply
$b_1=b_2=b_3=0$, which is a contradiction and we conclude that the
completion was non-trivial.
\end{example}
Furthermore, if there are two generator matrices of the same code that
differ by given row operations, then from a good choice of the
$a_i(z)$ for one matrix, we can find a good choice for the other
matrix by applying the given row operations to the already known good
choice $(a_1(z),\hdots,a_k(z))^\top$.
\\
We were previously asking the question, whether all binary self-dual
convolutional codes with
$d_{free}>2$ can be constructed with the generalized Harada-Munemasa
construction. This question will be left unanswered for codes of
lengths greater than 4.
\begin{lemma}
  All binary self-dual
  $(4,2)$ convolutional codes can be constructed with the generalized
  Harada-Munemasa construction.
\end{lemma}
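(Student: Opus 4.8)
The plan is to invoke the explicit classification of binary self-dual $(4,2)$ codes from Section 5.2 and to realize each such code directly as the output of the generalized Harada-Munemasa construction, started from the unique binary self-dual $(2,1)$ code. By Section 5.2, every binary self-dual $(4,2)$ convolutional code has a generator matrix of the form
$$G(z)=\begin{pmatrix} 1 & 1 & 1 & 1 \\ 0 & g_{23}(z)+g_{24}(z) & g_{23}(z) & g_{24}(z)\end{pmatrix},\qquad \gcd(g_{23}(z),g_{24}(z))=1,$$
so it suffices to produce each of these matrices as a Harada-Munemasa completion.

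First I would fix the seed. By Theorem \ref{1/2}, the only binary self-dual $(2,1)$ convolutional code is generated by $(1\ 1)$, so this is the unique starting point for a generalized Harada-Munemasa step yielding a $(4,2)$ code. Applying Proposition \ref{hmo} with $k=1$ and parameter $a_1(z)$ gives the self-orthogonal matrix $\tilde{G}(z)=(a_1(z),a_1(z),1,1)$. Since $\mathrm{rowspan}(\tilde{G}(z))$ is the $\mathbb{F}_2[z]$-module generated by $(a_1,a_1,1,1)$, the condition $(1,\dots,1)\in\mathrm{rowspan}(\tilde{G}(z))$ from Theorem \ref{main}, which is exactly what admits non-trivial completions, forces $a_1(z)=1$. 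I would therefore set $a_1(z)=1$, so that $\tilde{G}(z)=(1,1,1,1)$.

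Next I would specify the new row by taking $f(z)=(0,\,g_{23}(z)+g_{24}(z),\,g_{23}(z),\,g_{24}(z))$, that is, exactly the second row of the target representative. Then $f(z)\in\tilde{\mathcal{C}}^\perp$, since its entries sum to $2(g_{23}+g_{24})=0$ in $\mathbb{F}_2[z]$, and $f(z)\notin\tilde{\mathcal{C}}$, because a nonzero $\mathbb{F}_2[z]$-multiple of $(1,1,1,1)$ has all entries equal whereas $f(z)$ has first entry $0$ and $f(z)\neq 0$ as $\gcd(g_{23},g_{24})=1$. Hence $f(z)\in\tilde{\mathcal{C}}^\perp\setminus\tilde{\mathcal{C}}$ is an admissible completion row, and the resulting matrix
$$G_1(z)=\begin{pmatrix} 0 & g_{23}+g_{24} & g_{23} & g_{24} \\ 1 & 1 & 1 & 1\end{pmatrix}$$
is, up to a swap of its two rows, precisely the representative $G(z)$ above. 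By Section 5.2 it therefore generates the prescribed self-dual $(4,2)$ code. Since every binary self-dual $(4,2)$ code arises as such a $G(z)$, every one of them is obtained by the generalized Harada-Munemasa construction, and no column permutation is even needed.

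The only real work is the orthogonality and membership checks for $f(z)$, which are routine. The genuine points to get right are bookkeeping ones: matching the column layout of Proposition \ref{hmo} (the two new columns first, the seed $(1\ 1)$ last) to the column order of the Section 5.2 representative, and confirming that the value $a_1(z)=1$ is the correct choice. I expect the one place to argue cleanly is that $f(z)$ yields a self-dual rather than merely self-orthogonal completion; this is handled at once by observing that $G_1(z)$ literally coincides with a Section 5.2 representative, so that self-duality, equivalently non-catastrophicity together with $G_1(z)G_1(z)^\top=0$ via Theorem \ref{sdcrit}, is already established there.
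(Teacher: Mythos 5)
Your proposal is correct and follows essentially the same route as the paper: both take the Section 5.2 representative, seed the construction with $(1\ 1)$, set $a_1(z)=1$, and choose $f(z)$ to be the second row of the representative. The extra checks you perform (that $f(z)\in\tilde{\mathcal{C}}^\perp\setminus\tilde{\mathcal{C}}$ and that $a_1(z)=1$ is forced by Theorem \ref{main}) are sound but not needed once you observe, as the paper does, that the resulting matrix literally coincides with a Section 5.2 representative.
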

\begin{proof}
  Let $\mathcal{C}$ be a binary self-dual
  $(4,2)$ convolutional code. Then by Section 5.2, there exists a
  generator matrix of the form
$$
G_1(z) =
\begin{pmatrix}
  0 & g(z)+h(z) & g(z) & h(z) \\
  1 & 1 & 1 & 1
\end{pmatrix}
$$
for some $g(z),h(z)\in \mathbb{F}_2[z]$.
\\
\\
But now by setting $G(z)=(1 \; 1)$,
$a_1(z)=1$ and
$f(z)=(0,g(z)+h(z),g(z),h(z))$, we may construct
$G_1(z)$ and therefore
$\mathcal{C}$ with the generalized Harada-Munemasa construction.
\end{proof}
Finally, we want to connect the generalized building-up construction
and the generalized Harada-Munemasa construction.

\begin{lemma}
  Every code that was constructed with the generalized building-up
  construction, can be constructed with the generalized
  Harada-Munemasa construction, but not vice versa.
\end{lemma}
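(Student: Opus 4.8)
The plan is to treat the two assertions separately, and the whole argument rests on a single structural observation: over $\mathbb{F}_2$ the generalized building-up construction is literally a special instance of the generalized Harada-Munemasa construction. I work throughout over $\mathbb{F}_2$, since the Harada-Munemasa construction is only defined there; concretely, "generalized building-up" here means its binary specialization, in which $a=b=1$ as recorded in the remark following the generalized building-up proposition.

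For the inclusion, I would start from an arbitrary code $\tilde{\mathcal{C}}$ produced by the binary generalized building-up construction, from a self-dual $(2k,k)$ code with generator matrix $G(z)$ and a vector $f(z)$ satisfying $f(z)f(z)^\top=1$, so that its generator matrix is
$$\tilde{G}(z)=\begin{pmatrix} 1 & 0 & f(z) \\ y_1(z) & y_1(z) & \\ \vdots & \vdots & G(z) \\ y_k(z) & y_k(z) & \end{pmatrix},\qquad y_i(z)=f(z)g_i(z)^\top.$$
The key step is to recognize that this matrix is already in the shape of a Harada-Munemasa completion as in Definition \ref{com}: applying the Harada-Munemasa construction to the \emph{same} self-dual code $G(z)$ with the choice $a_i(z):=y_i(z)$ and completion row $f'(z):=(1,0,f(z))\in\mathbb{F}_2[z]^{2k+2}$ produces exactly $G_1(z)=\tilde{G}(z)$. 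Since the generalized building-up proposition already guarantees that $\tilde{G}(z)$ generates a self-dual code, $f'(z)$ is by definition a valid self-dual completion, so $\tilde{\mathcal{C}}$ is a Harada-Munemasa code and no further self-duality computation is required. The only routine checks are that the first two entries of the $(i+1)$-st row are $(y_i(z),y_i(z))=(a_i(z),a_i(z))$ and that $f'(z)$ has length $2k+2$, both immediate.

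For the strictness I would exhibit one explicit witness. The code generated by
$$\begin{pmatrix} 0 & z^2+z+1 & z & z^2+1 \\ 1 & 1 & 1 & 1 \end{pmatrix}$$
is obtained from the generalized Harada-Munemasa construction applied to $G(z)=(1\ 1)$ with $a_1(z)=1$, so that the self-orthogonal intermediate matrix is $(1,1,1,1)$ as in Proposition \ref{hmo}, together with the completion row $(0,z^2+z+1,z,z^2+1)$; this generator matrix is self-dual by the classification in Section 5.2 (here $g_{23}=z$, $g_{24}=z^2+1$ are coprime). On the other hand, Lemma \ref{nbu} establishes precisely that this same code cannot be produced by the generalized building-up construction, even up to column permutations. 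Hence it is a Harada-Munemasa code that is not a building-up code, which yields strictness.

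The main obstacle is conceptual rather than computational: one must see that the building-up generator matrix is itself a Harada-Munemasa completion under the identification $a_i(z)=y_i(z)$, $f'(z)=(1,0,f(z))$. Once this is in place, the forward direction inherits self-duality directly from the building-up proposition and needs nothing more, while the reverse direction is delivered entirely by Lemma \ref{nbu} together with the explicit Harada-Munemasa realization of its witness code. I would therefore keep the write-up short, foregrounding the structural identification and citing the self-duality facts already proven rather than re-deriving them.
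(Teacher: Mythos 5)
Your proposal is correct and follows essentially the same route as the paper: the forward inclusion via the identification $a_i(z)=y_i(z)$ with new row $(1,0,f(z))$, and strictness via the witness code of Lemma \ref{nbu}, which the paper realizes as a Harada-Munemasa code by appeal to the preceding lemma on $(4,2)$ codes (whose proof is exactly the explicit realization you write out).
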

\begin{proof}
  Let
  $G(z)$ be a generator matrix of a binary self-dual convolutional
  code and let
$$
G_1(z) =
\begin{pmatrix}
  1 & 0 & f(z) \\
  y_1(z) & y_1(z) & \\
  \vdots & \vdots & G(z) \\
  y_k(z) & y_k(z) &
\end{pmatrix}
$$
be a generator matrix constructed with the building-up construction
for some $f(z)\in
\mathbb{F}_2[z]^{2k}$. To construct the same code with the
Harada-Munemasa construction, just set $a_i(z):=y_i(z)$ to get
$$\tilde{G}(z)
=\begin{pmatrix}
  a_1(z) & a_1(z) & \\
  \vdots & \vdots & G(z) \\
  a_k(z) & a_k(z) &
\end{pmatrix}=\begin{pmatrix}
  y_1(z) & y_1(z) & \\
  \vdots & \vdots & G(z) \\
  y_k(z) & y_k(z) &
\end{pmatrix}
$$
and then add $(1,0,f(z))$ as a new row.
\\
One the other hand, in Lemma \ref{nbu}, it was shown that
$$
\begin{pmatrix}
  0 & z^2+z+1 & z & z^2+1 \\
  1 & 1 & 1 & 1
\end{pmatrix}
$$
cannot be constructed using the generalized building-up
construction. But by the previous lemma, all binary self-dual
$(4,2)$ convolutional codes can be constructed with the generalized
Harada-Munemasa construction.
\end{proof}

\section{Conclusion}
We started with finding equivalent conditions to self-duality. This
was the foundation for all other obtained results. First, we used it
to fully classify all self-dual $(2,1)$ convolutional codes, all
binary self-dual $(4,2)$ convolutional codes, all self-dual
convolutional codes with double diagonal generator matrices and all
binary self-dual convolutional codes with double triangular generator
matrices.

Then, we investigated the construction of self-dual convolutional
codes, where the building-up construction and the Harada-Munemasa
construction were generalized. For the latter, conditions on the
$a_i(z)$ for the existence of non-trivial self-dual completions were
established and the generalized Harada-Munemasa construction was shown
to be able to construct strictly more binary codes.
\\
Further, we presented a binary self-dual convolutional code with free
distance $d_{free}>2$, which cannot be constructed with the
generalized building-up construction. As a consequence, the
generalized building-up construction is not viable for the full
classification of binary self-dual convolutional codes, like it is for
block codes. For the generalized Harada-Munemasa construction this
question about viability is still open for codes of lengths greater
than 4.

\section*{Acknowledgments}
The authors acknowledge the support of Swiss National Science
Foundation grant n. 188430.

\bibliography{mybibfile}

\end{document}